\newtheorem{theorem}{Theorem}[section]
\newtheorem{proposition}[theorem]{Proposition}
\newtheorem{lemma}[theorem]{Lemma}
\newtheorem{corollary}[theorem]{Corollary}
\theoremstyle{definition}
\newtheorem{definition}[theorem]{Definition}
\theoremstyle{remark}
\newtheorem{remark}[theorem]{Remark}
\newtheorem{remarks}[theorem]{Remarks}
\newcommand{\be}{\begin{equation}}
\newcommand{\ee}{\end{equation}}
\newcommand{\tpitchfork}{%
  \vbox{
    \baselineskip\z@skip
    \lineskip-.52ex
    \lineskiplimit\maxdimen
    \m@th
    \ialign{##\crcr\hidewidth\smash{$-$}\hidewidth\crcr$\pitchfork$\crcr}
  }%
}
\newcommand{\Rightleftarrow}{\Leftrightarrow}
\newcommand{\vbar}{{\overline{v}}}
\newcommand{\SU}{\text{SU}}
\newcommand{\h}{\hbar}
\newcommand{\zbar}{\overline{z}}
\newcommand{\wbar}{\overline{w}}
\newcommand{\bbC}{{\mathbb C}}
\newcommand{\bbN}{{\mathbb N}}
\newcommand{\bbR}{{\mathbb R}}
\newcommand{\bbP}{{\mathbb P}}
\newcommand{\calD}{{\mathcal D}}
\newcommand{\calB}{{\mathcal B}}
\newcommand{\calG}{{\mathcal G}}
\newcommand{\calU}{{\mathcal U}}
\newcommand{\calQ}{{\mathcal Q}}
\newcommand{\calL}{{\mathcal L}}
\newcommand{\calM}{{\mathcal M}}
\newcommand{\calH}{{\mathcal H}}
\newcommand{\calE}{{\mathcal E}}
\newcommand{\calF}{{\mathcal F}}
\newcommand{\calC}{{\mathcal C}}
\newcommand{\calR}{{\mathcal R}}
\newcommand{\calW}{{\mathcal W}}
\newcommand{\wh}{\widehat}
\newcommand{\LCP}{\calL_{\bbC\bbP^{N-1}}}
\newcommand{\tr}{{\mbox{Tr}}}
 \newcommand{\zetabar}{\overline{\zeta}}
\newcommand{\frakm}{{\mathfrak m}}
\newcommand{\dbar}{\bar\partial}
\newcommand{\inner}[2]{\langle#1,#2\rangle}
\newcommand{\norm}[1]{\| #1\|}
\newcommand{\ket}[1]{| #1\rangle}
\newcommand{\bra}[1]{\langle #1|}
\newcommand{\delbar}{\overline{\partial}}
\renewcommand{\Box}{\square}
\newcommand{\calBk}{\calB^{(k)}}
\newcommand{\hinv}{\hbar^{-1}}
\newcommand{\PB}[2]{\{#1,#2\}}
\newcommand{\hess}[1]{\text{Hess}(#1)}
  \DeclareSymbolFont{bbold}{U}{bbold}{m}{n}
\DeclareSymbolFontAlphabet{\mathbbold}{bbold}
 \newcommand{\C}{\bbC}
\begin{document}

\title{Reduction and coherent states}
\author{Jenia Rousseva}
\address{Mathematics Department\\
	University of Michigan\\Ann Arbor, Michigan 48109}
\email{jenia@umich.edu}
\author{Alejandro Uribe}
\address{Mathematics Department\\
University of Michigan\\Ann Arbor, Michigan 48109}
\email{uribe@umich.edu}

\maketitle
\date{}
\begin{abstract}
We apply a quantum version of dimensional reduction to Gaussian coherent states in 
Bargmann space to obtain squeezed states on complex projective spaces.
This leads to a definition of a family of squeezed spin states (definition \ref{TheDef}) with 
excellent semi-classical properties,
governed by a symbol calculus.  We prove semiclassical norm estimates and a propagation result.
\end{abstract}

\tableofcontents

\section{Introduction}

The procedure of dimensional reduction (symplectic reduction or symplectic quotient) 
is a well-established method
for studying Hamiltonian systems with symmetry.  It is also a way to construct interesting 
symplectic manifolds.  For example, the complex projective space $\bbC\bbP^{N-1}$ is
a symplectic quotient of $\bbC^N$.   The quantum version of reduction can be thought of
as a very general separation of variables.
In this paper we use quantum reduction to construct
a family of squeezed coherent states on projective spaces, in particular on $\bbC\bbP^1$,
which is to say, squeezed SU$(2)$ coherent states.  Our states have excellent semi-classical
behavior that is governed by a {\em symbol}, which we define. 

Let us begin by clarifying what we mean by quantum reduction.  We will state the definition
in the following general setting.  Let $(M,\omega)$ be a 
K\"ahler manifold, and $\calL\to M$ be a Hermitian holomorphic line bundle whose curvature
is $\omega$.  Restricting our attention, for simplicity,  to the symmetry group $S^1$, assume that
the circle acts on $\calL\to M$ preserving all structures.  We denote the corresponding momentum
map by
\[
\frakm: M\to \bbR.
\]
Let us assume that zero is a regular value of $\frakm$ and that the action of $S^1$ on
$\frakm^{-1}(0)$ is free, so that the quotient
\[
X = \frakm^{-1}(0)/S^1
\]
is a smooth manifold.  It is well known (see for instance \cite{GS}) that $X$ inherits
a K\"ahler structure and quantizing holomorphic line bundle
\[
\calL_X\to X.
\]
Moreover, $S^1$ acts by translations on the space of square-integrable holomorphic sections
(general Bargmann spaces),
\[
\calB_M= H^2(M, \calL)\cap L^2(M, \calL).
\]
The main result of \cite{GS} is that, when $M$ is compact there is a natural 
isomorphism
\begin{equation}\label{[qr]=0}
\calB_X\cong \calB_M^{S^1},
\end{equation}
where the right-hand side denotes the space of $S^1$ invariant vectors in $\calB_M$.
The isomorphism $\calB(M)^{S^1}\to\calB(X)$ is simply restriction to $\frakm^{-1}(0)$.
(We will normalize the restriction, for convenience.)  
In our main example $M=\bbC^N$ is not
compact, but (\ref{[qr]=0}) still holds.

It will be important to consider this construction for all tensor powers
\[
\calL^k\to M,\qquad k=1,\, 2,\, \ldots,
\]
where $k$ will be interpreted as $1/\h$.  Our main results are asymptotic as $k\to\infty$.

Let us introduce the notation

\begin{equation}\label{}
\calBk_M = H^0(M, \calL^k)\cap L^2(M, \calL^k)
\end{equation}
and similarly for $\calBk_X$.
Taking the $k$ tensor power of $\calL$ is equivalent to replacing the symplectic form $\omega$ by $k\omega$, so the previous isomorphism
(\ref{[qr]=0}) holds for each $k$:
\begin{equation}\label{[qr]=0k}
\calBk_X\cong \left(\calBk_M\right)^{S^1}.
\end{equation}

\medspace
The notion of quantum reduction is as follows.
\begin{definition} 
The sequence of operators
	\[
\calR_k: \calB_M^{(k)} \to \calB_X^{(k)},
\]
defined as the composition
\[
\calR_k:\	\calB_M^{(k)} \xrightarrow{P_k}\left( \calB_M^{(k)} \right)^{S^1}\cong \calB_X^{(k)},
\]
where $P_k$ is orthogonal projection (averaging), will be called the {\em quantum reduction}
operator.  
\end{definition}

One has the following general theorem (see the appendix in \cite{GS} ), which, although we will not use explicitly, explains why our
approach yields good semi-classical estimates:
\begin{theorem}
The quantum reduction operator is a Fourier integral operator 
quantizing the canonical relation
\[
\left\{(x,m)\in X\times M\;;\; m\in\frakm^{-1}(0)\ \text{and}\ \pi(m)=x\right\}\subset X\times M.
\]
\end{theorem}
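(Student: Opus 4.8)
The plan is to place everything on the boundary circle bundles, where holomorphic sections of $\calL^k$ appear as the degree-$k$ Fourier modes of functions on the fibers, and where the relevant projectors are Fourier integral operators of Hermite type. Let $Y_M\subset\calL^{-1}$ denote the unit circle bundle, with connection (contact) form $\alpha$, and let $\Pi_M$ be the \Sz projector onto boundary values of holomorphic functions. By the Boutet de Monvel--Sj\"ostrand parametrix, $\Pi_M$ is a complex Fourier integral operator whose canonical relation is the diagonal of the symplectic cone $\Sigma_M=\{(y,t\alpha_y)\,:\,t>0\}\subset T^*Y_M$, and $\calB_M^{(k)}$ is recovered as its $k$-th Fourier coefficient in the fiber direction. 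In this language $\calR_k$ is, up to the fixed normalization of the restriction, the composition of three geometrically natural operations: the averaging projector $P_k$, the restriction to the zero level set, and the pushforward to the quotient $X$.

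First I would treat the averaging. Since the $S^1$ action is unitary on $\calB_M^{(k)}$, the projector onto invariants is
\[
P_k=\int_{S^1}U_k(\theta)\,d\theta,
\]
where $U_k$ is the representation of $S^1$ on $\calB_M^{(k)}$. On $Y_M$ the lifted circle action is generated by a vector field whose symbol along $\Sigma_M$ is exactly $t\,\frakm$. Writing the kernel of $P_k\circ\Pi_M$ as an oscillatory integral and applying stationary phase in $\theta$, the critical point equation is $\frakm=0$: the $\theta$-integral localizes microlocally onto the part of $\Sigma_M$ lying over $\frakm^{-1}(0)$, while simultaneously averaging along the orbit direction. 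This is the step that injects the constraint $m\in\frakm^{-1}(0)$ into the canonical relation.

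Next I would compose with the inclusion $\iota:\frakm^{-1}(0)\hookrightarrow M$, the submersion $\pi:\frakm^{-1}(0)\to X$, and the \Sz projector $\Pi_X$ on $Y_X$, all lifted to the circle bundles. Restriction to a coisotropic submanifold and pushforward along its null foliation are precisely the two elementary symplectic operations underlying reduction, and each is a Fourier integral operator whose canonical relation is transparent. Composing the relations carried by $\Pi_X$, by the reduction pair $(\iota,\pi)$, by the localized diagonal from the averaging step, and by $\Pi_M$, and taking the $k$-th Fourier mode throughout, yields exactly
\[
\Gamma=\left\{(x,m)\in X\times M\,:\,m\in\frakm^{-1}(0)\ \text{and}\ \pi(m)=x\right\},
\]
which is the asserted canonical relation.

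The main obstacle is to verify that these successive compositions are transverse---or at worst clean with the expected excess---so that the composite is genuinely a Fourier integral operator rather than a more singular object, and that its symbol is nonvanishing, so $\calR_k$ does not degenerate. Cleanliness follows from the regularity of $0$ and the freeness of the action: these guarantee that $\frakm^{-1}(0)$ is a smooth coisotropic submanifold whose null foliation is the $S^1$-orbits, and that the momentum symbol $t\,\frakm$ has $0$ as a clean critical value along $\Sigma_M$. A secondary point, specific to the main example $M=\bbC^N$, is noncompactness: there one must check that the \Sz kernel estimates and the group integral converge and that the off-diagonal Gaussian decay of the Bargmann kernel justifies the stationary-phase reduction, but this is standard for the quadratic-exponential weights at hand.
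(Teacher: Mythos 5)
The paper offers no proof of this theorem at all---it simply cites the appendix of \cite{GS}---and your sketch is essentially a reconstruction of that cited argument: Szeg\H{o} projectors as Hermite--type Fourier integral operators via the Boutet de Monvel--Sj\"ostrand parametrix, averaging over $S^1$ with stationary phase in the group variable localizing onto $\frakm^{-1}(0)$, and composition of the resulting canonical relations through restriction and pushforward along the null foliation. Your outline is correct and you have flagged exactly the right technical points (cleanliness of the compositions, nonvanishing of the symbol, and the noncompactness of $\bbC^N$ in the paper's main example), so nothing essential is missing beyond carrying out those verifications.
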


\subsection{Reduction of Gaussian coherent states and first results}
In this paper, we focus on the case $M=\bbC^N$ with the symplectic form
\[
\omega =i\, dz\wedge d\,\zbar := i\sum_{j=1}^N dz_i\wedge d\zbar_i.
\]
If we write the real and imaginary parts of $z$ so that
\[
z_j = \frac{1}{\sqrt{2}}\left(q_j-ip_j\right),
\] 
then $\omega = dp\wedge dq$.
The line bundle $\calL$ is trivial but its connection is not, it is given by
\begin{equation}\label{connection}
\nabla = d + \frac 12 \left(zd\zbar -\zbar dz\right).
\end{equation}
Then, the space of square integrable functions satisfying $\nabla_{\dbar}\psi = 0$ 
is the familiar Bargmann space
\[
\calBk(\bbC^N) =  \left\{\psi = f(z)e^{-k|z|^2/2}\;;\; \dbar f = 0\right\}
\cap L^2(\bbC^N).
\]

We take $\frakm(z) = |z|^2 -1$, and the $S^1$ action on sections is
\begin{equation}\label{}
(e^{i\theta}\cdot \psi)(z) = e^{-ik\theta}\psi\left(e^{i\theta }z\right).
\end{equation}
The energy level $\frakm^{-1}(0)$ is the unit sphere $S^{2N-1} = \frakm^{-1}(0) = \left\{  z\;;\; |z|=1 \right\}$,
and the reduced space is $X=\bbC\bbP^{N-1}$.  The projection
\begin{equation}\label{hopfFib}
\pi:  S^{2N-1} \to \bbC\bbP^{N-1}
\end{equation}
is the (general) Hopf fibration.  

The Bargmann space of the quotient is customarily taken to be
\begin{equation}\label{}
\calBk_{\bbC\bbP^{N-1}} = \left\{ \text{restrictions to } S^{2N-1} \text{ of homogeneous polynomials of degree } k  \right\},
\end{equation}
with the Hilbert space structure of $L^2(S^{2N-1} )$.  Note that the natural action of $\SU(N)$ on the sphere
induces a representation on $\calBk_{\bbC\bbP^{N-1}}$, by $S_k(g)(\psi)(z) = \psi(zg)$. All these representations are irreducible. 
For $N=2$ this is the unique irreducible representation of $\SU(2)$ of dimension $k+1$.

Elements in $\calBk_{\bbC\bbP^{N-1}}$ can be thought of as sections of the $k$-th tensor power of
the hyperplane bundle over $\bbC\bbP^{N-1}$, $\calL^k\to\bbC\bbP^{N-1}$.  This point of view is
the one taken by geometric quantization, or the orbit method in representation theory.

\medskip
In the case $M=\bbC^N$ with the circle action described above, the (normalized)
reduction operator is
\begin{equation}\label{defReductOp}
\forall\psi\in\calBk_{\bbC^N},\ \forall z\in S^{2N-1}\qquad \calR_k(\psi)(z) = \frac{1}{2\pi}\int_0^{2\pi}e^{-ikt}\,\psi(e^{it}z)\, dt.
\end{equation}
It is easy to check that $\calR_k(\psi)$ is a homogeneous polynomial of degree $k$ in $z$, and
therefore it is an element in $\calBk_{\bbC\bbP^{N-1}}$.  (See Proposition \ref{prop:exactReducedState} for
an exact expression.)

\medskip
We will apply the operator  $\calR_k$ to the Gaussian coherent states in the Bargmann space
of $\bbC^N$. To recall their definition, let us introduce the generalized unit disk
\[
\calD_N = \left\{ N\times N\text{ complex symmetric matrices } A \text{ such that }  A^*A<I\right\}.
\]
We will also need a coordinate-free version of this space.  Let $\calH$ be a complex vector space
with a Hermitian inner product, and let $\calG:\calH\to\bbR$ be the standard Gaussian,
$\calG(v) = e^{-\norm{v}^2/2}$.
Let us then define
\[
\calD(\calH) = \left\{ \text{quadratic forms } Q:\calH\to\bbC \text{ such that } e^{Q/2}\,\calG\in L^2(\calH)\right\}.
\]
One can then show that $Q\in\calD(\bbC^N)$ iff the symmetric matrix $A$ associated with $Q$ in the usual sense is in $\calD_N$.

The Gaussian coherent states in Bargmann space are of the following form:
$\forall A\in\calD_N$ and $w\in\bbC^N$, let
$Q_A(z) = zA z^T$ (where $z\in\bbC^N$ is considered a row vector).  The associated state is
\begin{equation}\label{}
\psi_{A,w}(z) := e^{kQ_A(z-w)/2}\,e^{kz\wbar^T}\,e^{-k|w|^2/2}\, e^{-k|z|^2/2}.
\end{equation}
$\psi_{A,w}$ is the quantum translation of $\psi_{A,0}$ by $w$, which is called the center of $\psi_{A,w}$.
(For further discussion of quantum translations in Bargmann space see the Appendix.)

Our main objects of study in this paper are the reduced states 
\begin{equation}\label{redState}
\Psi_{A,w}:=\calR_k \left(\psi_{A, w}\right),\ w\in S^{2N-1}
\end{equation}
or, more precisely, their asymptotic properties as $k\to\infty$.
The formal definition above implies that these are functions on $S^{2N-1}$ that are restrictions
of certain polynomials of degree $k$.  As already mentioned, they can also be thought of as
sections of $\calL^k\to \bbC\bbP^{N-1}$.  

We mention right away the important example when $A=0$ (the standard or ``non-squeezed" coherent states).  One can then readily 
compute
\begin{equation}\label{}
\forall z\in S^{2N-1}\quad \Psi_{0,w}(z) = \frac{e^{-k}}{2\pi}\int_0^{2\pi} e^{-ikt}\, e^{ke^{it}z\wbar^T}\, dt =
\frac{e^{-k}\, k^k}{k!} (z\wbar^T)^k \sim \frac{1}{\sqrt{2\pi k}}(z\wbar^T)^k.
\end{equation}
When $N=2$, this is (up to a multiplicative constant) a standard spin coherent state,
see for example Chapter 7 in \cite{CR}.

\medskip
We now summarize some of our results.
\begin{theorem}\label{Main}
Let $A\in \calD_N$ and $w\in\bbC^N$ be such that $|w|=1$. 
Then $\Psi_{A,w}= \calR_k \left(\psi_{A, w}\right)$ has the following properties:
\begin{enumerate}
	\item Its micro-support (or semi-classical wave-front set) as $k\to\infty$ consists of the $S^1$ orbit
	of $w$, that is, $\{e^{it}w\;;\; t\in [0,2\pi]\}$.  Alternatively, as a section of 
	$\calL^k\to\bbC\bbP^{N-1}$ it consists of the single point
	\[
	\varpi:= \pi(w)\in\bbC\bbP^{N-1},
	\]
	where $\pi$ is the Hopf fibration (\ref{hopfFib}).

	\item  If $\eta\in \calH_w:= (\bbC w)^\bot$ (the Hermitian orthogonal space to the complex line spanned by
	$w$), one has
%
	\begin{equation}\label{whatTheSymbolIs}
	 \sigma_A(\eta):=	\lim_{k\to\infty} \sqrt{k}\,\Psi_{A,w}\left(w+ \eta/\sqrt{k} \right) = 
	\frac{1}{2\pi}\,e^{-|\eta|^2/2}  \int_{-\infty}^{\infty} e^{Q_A(isw + \eta)/2} \, e^{-s^2/2}\,ds,
		\end{equation}
		and, moreover, 
		\begin{equation}\label{symbDownstairs}
\sigma_A(\eta)		=
		\frac{1}{\sqrt{2\pi}}\frac{1}{\sqrt{Q_A(w)+1}}	e^{Q_{\rho_w(A)}(\eta )/2}\, e^{-|\eta|^2/2}
		\end{equation}
		for some $Q_{\rho_w(A)}\in\calD(\calH_w)$.  
		
			\item For all $A,\, B\in\calD_N$ one has
		\begin{equation}\label{innerEstimate}
		\inner{\Psi_{A, w}}{\Psi_{B, w}}_{\calBk_{\bbC\bbP^{N-1}}}
		 = \frac{2\pi}{k^{N}}\int_{\calH_w}\sigma_A(\eta)\,\overline{\sigma_B}(\eta)\, dL(\eta) + O(k^{-N-1})
		\end{equation}
		where $dL$ stands for Lebesgue measure.
\end{enumerate}
\end{theorem}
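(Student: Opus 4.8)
The plan is to prove (\ref{innerEstimate}) by concentrating the integral near the orbit of $w$ and rescaling at the scale $1/\sqrt k$ singled out in part (2). I begin by writing the inner product as an integral over the sphere,
\[
\inner{\Psi_{A,w}}{\Psi_{B,w}}_{\calBk_{\bbC\bbP^{N-1}}} = \int_{S^{2N-1}}\Psi_{A,w}(z)\,\overline{\Psi_{B,w}(z)}\,d\sigma(z),
\]
where $d\sigma$ is the surface measure. Each reduced state is equivariant, $\Psi_{A,w}(e^{it}z)=e^{ikt}\Psi_{A,w}(z)$, so the integrand is $S^1$-invariant and descends to $\bbC\bbP^{N-1}$. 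By part (1) the micro-support is the single orbit of $w$, so for any fixed $\epsilon>0$ the part of the integral with $z$ at distance $\ge\epsilon$ from that orbit is $O(k^{-\infty})$ and may be dropped. On the remaining tube I use the transverse slice $z(\zeta)=(w+\zeta)/\sqrt{1+|\zeta|^2}$, $\zeta\in\calH_w=(\bbC w)^\perp$; since $iw$ and $\calH_w$ are orthogonal unit directions at $w$, the surface measure expands as $d\sigma = dt\,dL(\zeta)\,(1+O(|\zeta|^2))$, and integrating the $t$-independent integrand over $t\in[0,2\pi)$ contributes the factor $2\pi$.

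The key step is to recognise that the \emph{on-sphere} rescaled state has the same limit as the off-sphere expression used to define the symbol in (\ref{whatTheSymbolIs}). Writing $\Psi_{A,w}=e^{-k|z|^2/2}G_A$ with $G_A$ holomorphic and homogeneous of degree $k$, homogeneity gives $G_A(z(\zeta))=(1+|\zeta|^2)^{-k/2}G_A(w+\zeta)$; after setting $\zeta=\eta/\sqrt k$ the factor $(1+|\eta|^2/k)^{-k/2}\to e^{-|\eta|^2/2}$ exactly compensates the difference between the Gaussian weight evaluated on the sphere and at $w+\eta/\sqrt k$. Consequently $\sqrt k\,\Psi_{A,w}(z(\eta/\sqrt k))\to\sigma_A(\eta)$ by part (2), and the product of the two rescaled states tends to $k^{-1}\sigma_A(\eta)\,\overline{\sigma_B(\eta)}$. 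Substituting $\zeta=\eta/\sqrt k$ (so $dL(\zeta)=k^{-(N-1)}dL(\eta)$, as $\dim_{\bbR}\calH_w=2N-2$) and collecting powers, the $2\pi$ from the orbit direction, the $k^{-(N-1)}$ from the measure, and the $k^{-1}$ from the two states combine to give the leading term $2\pi k^{-N}\int_{\calH_w}\sigma_A\,\overline{\sigma_B}\,dL$.

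The remaining work is to justify passing to the limit under the integral sign and to pin down the remainder as $O(k^{-N-1})$. For the first I would prove a uniform Gaussian bound $|\sqrt k\,\Psi_{A,w}(z(\eta/\sqrt k))|\le C e^{-c|\eta|^2}$ for $|\eta|\le\epsilon\sqrt k$ and all large $k$; this comes out of the stationary-phase analysis underlying (\ref{whatTheSymbolIs}), whose $t$-integral has a nondegenerate critical point at $t=0$, while the condition $A^*A<I$ supplies the decay in $\eta$. Since the limit $\sigma_A\,\overline{\sigma_B}$ is itself Gaussian by (\ref{symbDownstairs}), dominated convergence applies and the tube truncation costs only $O(k^{-\infty})$. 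To obtain a true gain of $k^{-1}$ rather than $k^{-1/2}$ in the remainder, I would carry both the Laplace expansion of the $t$-integral and the $\zeta$-measure expansion to one further order: the complex cubic term in the phase $k(e^{it}-it)$, the odd part of the amplitude, and the $O(|\zeta|^2)$ slice-measure correction all produce contributions that are odd in the integration variables and hence vanish upon integration against the Gaussian, leaving only an even $O(k^{-1})$ correction. I expect this parity-and-uniformity bookkeeping, rather than the identification of the leading symbol, to be the principal technical obstacle.
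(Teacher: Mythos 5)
Your proposal addresses only part (3) of the theorem: the micro-support statement in part (1) and the limit \eqref{whatTheSymbolIs}--\eqref{symbDownstairs} in part (2) are \emph{invoked}, not proven. Since the statement under review is the full three-part theorem, this is a genuine gap: parts (1) and (2) carry most of the weight in the paper (the bound $|Q_A(z)|\le\kappa|z|^2$ with $\kappa<1$, the Husimi estimate, and the analysis of the circle integral split as $I_1+I_2+I_3$ with the cutoff $a_k=(\log k/(Ck))^{1/2}$), and your part-(3) argument leans on them at every step. Moreover, the one input you need \emph{beyond} the literal statement of part (2) --- the $k$-uniform domination $|\sqrt{k}\,\Psi_{A,w}(S_\varpi(\eta/\sqrt{k}))|\le Ce^{-c|\eta|^2}$ for $|\eta|\le\epsilon\sqrt{k}$ --- cannot be cited away: \eqref{whatTheSymbolIs} is a pointwise limit, and the crude bound $|\Psi_{A,w}(S_\varpi(\zeta))|\le e^{-kc_A(1-\rho(\zeta))}$ yields $e^{-c|\eta|^2}$ \emph{without} the extra factor $k^{-1/2}$ that must come from the oscillation $e^{-ikt}$; so establishing it genuinely requires redoing the stationary-phase analysis of part (2) uniformly in $\eta$. (The paper asserts this domination with equal brevity, so you are not worse off than the source, but the step is not free.)

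For part (3) itself, your route coincides with the paper's: reduce the sphere integral via $S^1$-invariance to an integral over the slice $S_\varpi(\zeta)=(w+\zeta)/\sqrt{1+|\zeta|^2}$ with a factor $2\pi$ (the paper uses the exact measure $dL(\zeta)/(1+|\zeta|^2)^N$ where you expand it), discard the far region (the paper proves exponential smallness directly for $|\zeta|>1$ where you cite part (1)), rescale $\zeta=\eta/\sqrt{k}$, and apply part (2) with dominated convergence; the power count $2\pi\cdot k^{-(N-1)}\cdot k^{-1}=2\pi k^{-N}$ is identical. Two of your observations actually improve on the paper's write-up. First, the paper silently equates $\Psi_{A,w}(S_\varpi(\eta/\sqrt{k}))$ with $\Upsilon_A(\eta,k)=\Psi_{A,w}(w+\eta/\sqrt{k})$, whereas these differ by the factor $e^{|\eta|^2/2}(1+|\eta|^2/k)^{-k/2}$; your homogeneity argument (writing $\Psi_{A,w}=e^{-k|z|^2/2}G_A$ with $G_A$ homogeneous of degree $k$) is exactly what reconciles the two. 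Second, your worry about the remainder is well founded: dominated convergence plus the pointwise error $\sqrt{k}\,\Upsilon_A=\sigma_A+O(k^{-1/2})$ from part (2) only yields $O(k^{-N-1/2})$, and the parity/cancellation argument you sketch is the natural mechanism for the stated $O(k^{-N-1})$ --- a step the paper's own proof does not carry out either.
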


\begin{remark}\label{Remars1}   Some comments on the previous statements:
\begin{enumerate}
	\item Since $A\in\calD_N$ and $|w|=1$, $\Re(Q_A(w)+1) >0$.  The branch of the square root 
	in (\ref{symbDownstairs}) is the natural analytic extension to the right half of the complex plane.
	
	\item The space $\calH_w$ is in fact a subspace of $T_wS^{2N-1}$; it is the horizontal subspace
	at $w$ of the natural connection on the Hopf fibration $\pi: S^{2N-1}\to \bbC\bbP^{N-1}$.  
	The differential $d\pi_w$ induces an isometry $\calH_w\cong T_\varpi \bbC\bbP^{N-1}$, where the latter
	space is given the Fubini-Study metric.  We will tacitly use this identification in what follows.

	\item The rescaling in the argument of $\Psi_{A,w}$ in the left-hand side of (\ref{whatTheSymbolIs}) 
	is an example of the {\em local scaling asymptotics} in semi-classical
	analysis of quantized K\"ahler manifolds, first introduced by Bleher, Shiffman and Zelditch in \cite{BSZ}.  
	We will expand on the meaning of this in \S \ref{subsec:GeomSymb}.
	
	\item In case $w=(1,\vec{0})$ (the general case can be reduced to this by the action of a unitary matrix),
	one has that $\rho_w(A)$ is the lower $(N-1) \times (N-1)$ principal minor of 
\begin{equation}\label{}
	A- \frac{Aw^TwA}{wAw^T+1}.
\end{equation}

	\item Theorem \ref{thm:stationary_proj} gives the asymptotic behavior of (\ref{redState}) {\em at} $\varpi$.

\end{enumerate}
\end{remark}

\medskip
The function $\sigma_A$ is the main invariant associated with $\Psi_{A,w}$, so we give it a name.

\begin{definition}
	The function $\sigma_A: \calH_w\to \bbC$ given by the expressions (\ref{whatTheSymbolIs}) and (\ref{symbDownstairs})  
will be considered as a function of the Bargmann space of the tangent space $T_\varpi\bbC\bbP^{N-1}$ (with $\h=1$), and
	will be called the {\em symbol} of (\ref{redState}).
\end{definition}

\begin{remark}\label{Remark2}
	It is very convenient to extend by linearity the definition of symbols of 
	reduced states at the same center $w\in S^{2N-1}$.  We will also agree that multiplying 
	$\Psi_{A,w}$ by a power of $k$ results in a function having the same symbol as $\Psi_{A,w}$.
	
	Note that the symbol of a standard spin coherent state is simply the Gaussian \\
	$\sigma_{A=0}(\eta) = \frac{1}{\sqrt{2\pi}}e^{-|\eta|^2/2}$.
\end{remark}

\medskip
We will also prove a propagation theorem for reductions of
Gaussian coherent states under suitable quantum Hamiltonians.  
The symbols of the propagated states are computed 
in an entirely analogous way as in the Euclidean case, that is, using the metaplectic representation.  
These results are presented in \S 5, see 
Theorems \ref{thm:FirstPropaThm} and \ref{thm:SecondPropaThm}.

\subsection{An explicit formula}

One can compute an exact algebraic expression for the reduced states (\ref{redState}), which may be useful
for numerical calculations.

\begin{proposition}\label{prop:exactReducedState}
	 For all $z, w\in S^{2N-1}$ one has:
\begin{equation}\label{exactRedState} 
\Psi_{A,w}(z) = e^{-k}\, e^{kQ_A(w)/2}  \sum_{\ell\geq k/2}^{k} \frac{k^\ell}{(k-\ell)!(2\ell-k)!} 
\left(\frac{1}{2}Q_A(z)\right)^{k-\ell}\,
\left( z(\wbar^T-Aw^T)\right)^{2\ell-k}.
\end{equation}
\end{proposition}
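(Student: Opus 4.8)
The plan is to evaluate $\Psi_{A,w}(z)=\calR_k(\psi_{A,w})(z)$ directly from the defining integral (\ref{defReductOp}), using crucially that both $z$ and $w$ lie on the unit sphere. First I would simplify the integrand $\psi_{A,w}(e^{it}z)$: since $|e^{it}z|=|z|=1$ and $|w|=1$, the two Gaussian factors $e^{-k|w|^2/2}\,e^{-k|e^{it}z|^2/2}$ collapse to the single constant $e^{-k}$, so the only genuine $t$-dependence comes from the remaining exponentials.

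Next I would expand the exponent. Writing $Q_A(e^{it}z-w)=(e^{it}z-w)A(e^{it}z-w)^T$ and using the bilinearity of the form together with the symmetry $A^T=A$ (so that the two cross terms coincide, $zAw^T=wAz^T$), one obtains
\[
Q_A(e^{it}z-w)=e^{2it}Q_A(z)-2e^{it}\,zAw^T+Q_A(w).
\]
Inserting this together with the linear factor $e^{k e^{it}z\wbar^T}$ and pulling out the $t$-independent constants, the integrand assumes the compact form
\[
\psi_{A,w}(e^{it}z)=e^{-k}\,e^{kQ_A(w)/2}\,\exp\!\left[\alpha\,e^{2it}+\beta\,e^{it}\right],
\]
where $\alpha=\tfrac{k}{2}Q_A(z)$ gathers the coefficient of $e^{2it}$ and $\beta=k\,z(\wbar^T-Aw^T)$ gathers the coefficient of $e^{it}$.

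Applying $\calR_k$ now amounts to extracting the $k$-th Fourier coefficient in $t$, via $\frac{1}{2\pi}\int_0^{2\pi}e^{-ikt}(\cdots)\,dt$. I would expand the double exponential as a Cauchy product of power series,
\[
\exp[\alpha e^{2it}]\exp[\beta e^{it}]=\sum_{m,n\geq 0}\frac{\alpha^m\beta^n}{m!\,n!}\,e^{i(2m+n)t},
\]
and read off the coefficient of $e^{ikt}$, i.e. the terms with $2m+n=k$, $m,n\geq 0$; this collapses the double sum to $\sum_{m=0}^{\lfloor k/2\rfloor}\frac{\alpha^m\beta^{k-2m}}{m!\,(k-2m)!}$. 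Substituting the values of $\alpha$ and $\beta$, each $\alpha$ contributes one factor of $k$ and each $\beta$ one factor, giving a total power $k^{m+(k-2m)}=k^{k-m}$; finally I reindex by $\ell=k-m$. As $m$ ranges over $0\le m\le\lfloor k/2\rfloor$, $\ell$ ranges over $k/2\le\ell\le k$, and the summand becomes exactly $\frac{k^\ell}{(k-\ell)!(2\ell-k)!}\left(\tfrac{1}{2}Q_A(z)\right)^{k-\ell}\left(z(\wbar^T-Aw^T)\right)^{2\ell-k}$, which recovers (\ref{exactRedState}).

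There is no serious analytic obstacle here: the series converge locally uniformly in $t$, so the interchange of summation and integration is immediate. The only points demanding care are the use of the symmetry of $A$ to produce the factor $2$ in the cross term (a sign or factor error there would propagate to $\beta$), and the careful tracking of the summation range, namely that the nonnegativity constraint $n=k-2m\geq 0$ is precisely what yields the lower bound $\ell\geq k/2$ in the stated sum.
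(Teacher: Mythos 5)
Your proposal is correct and follows essentially the same route as the paper: both rewrite $\psi_{A,w}(e^{it}z)$ via the identity $Q_A(z-w)=Q_A(z)-2zAw^T+Q_A(w)$ and then extract the $k$-th Fourier coefficient from the resulting series in $e^{it}$. The only cosmetic difference is that you expand $e^{\alpha e^{2it}}e^{\beta e^{it}}$ as a Cauchy product with the constraint $2m+n=k$, whereas the paper expands the single exponential $e^{k(\cdot)}$ and applies the binomial theorem with the constraint $j+\ell=k$; these are the same terms indexed differently.
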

\begin{proof}
	Since
	\[
	Q_A(z-w)= Q_A(z)-2zAw^T + Q_A(w),
	\]
	we can re-write
	\begin{equation}\label{rewrite}
	\psi_{A,w}(z) = e^{-k}\, e^{kQ_A(w)/2}\,
	e^{k Q_A(z)/2} \, e^{kz(\wbar^T-Aw^T)}
	\end{equation}
	Therefore 
	\[
	\psi_{A,w}(e^{it}z) = e^{-k}\,e^{kQ_A(w)/2}\,
	\sum_{\ell =0}^\infty \frac{k^\ell}{\ell !}
	\left(e^{2it}Q_A(z)/2 + e^{it}z(\wbar^T-Aw^T) \right)^\ell .
	\]
	Now apply the binomial theorem to the $\ell$-th term of the series:
	\[
	\left(e^{2it}Q_A(z)/2 + e^{it}z(\wbar^T-2Aw^T) \right)^\ell = \sum_{j=0}^\ell
	{\ell\choose j}\, e^{it(j+\ell)}\,\left(Q_A(z)/2\right)^j\, 
	\left( z(\wbar^T-Aw^T)\right)^{\ell-j}.
	\]
	When we multiply by $e^{-ikt}$ and integrate over $t\in[0,2\pi]$ 
	only the terms with $j+\ell = k$ survive.  For each $\ell$
	there exists exactly one such term precisely when $0\leq k-\ell\leq\ell$.
	This gives the range $k/2\leq\ell\leq k$, and the expression (\ref{exactRedState}) follows.
\end{proof}

\begin{remark}
 The previous expression is exact but is ``redundant to leading order" because 
the mapping 
\begin{equation}\label{redOfSymbols}
\calD_N\ni A\mapsto \rho_w(A)\in \calD_{N-1} 
\end{equation}
is not injective, and the symbol controls the reduced state to leading order.

 Note that the case $A=0$ (standard coherent
states in the Bargmann space of $\bbC^N$), up to a multiplicative constant, the reduced state is indeed just the standard SU$(N)$ state $(z\wbar)^k$.
\end{remark}

\subsection{Squeezed spin coherent states}
The case $N=2$ is of particular interest because it corresponds to SU$(2)$, or spin-squeezed coherent states.
We next present 
an expression that agrees asymptotically with (\ref{exactRedState}) and that involves a single parameter,
$\mu\in D_1$.
We will write this approximation in a standard orthonormal basis of $\calBk_{\bbC\bbP^{1}}$,
\begin{equation}\label{ketn}
\ket{n} =  \frac{1}{\pi}\, \sqrt{\frac{k+1}{2}}\, \sqrt{\binom{k}{n}} \: z_1^n\,z_2^{k-n},
\quad 0\leq n\leq k.
\end{equation}
This is a basis of eigenvectors of the operator corresponding to $\sigma_3 =\frac 12 \begin{pmatrix}
1 & 0 \\ 0 & -1
\end{pmatrix}$, the eigenvalue associated with $\ket{n}$ being $n-\frac{k}{2}$.
By equivariance of the construction under the action of SU$(2)$, it suffices to 
write the approximation in the case $w=(1,0)$.

\begin{proposition}\label{n2-case}
Let  $\mu\in\bbC$, $|\mu|<1$ and
$
[\mu] := \begin{pmatrix}
0 & 0 \\
0 & \mu
\end{pmatrix}.
$
Then
\begin{equation}\label{ketMu}
\Psi_{[\mu], (1,0)} = 
\pi k^k e^{-k} \,  \sqrt{\frac{2}{(k+1)!}}
\sum_{0\leq\ell\leq k/2}\left(\frac{1}{2k}\right)^{\ell}\frac{1}{\sqrt{(k-2\ell)!}}
\sqrt{2\ell\choose \ell}\,\mu^{\ell}\,\ket{k-2\ell}.
\end{equation}
Furthermore, for any $A= \begin{pmatrix}
a & c \\ c & b
\end{pmatrix} \in\calD_2$, if we let
\begin{equation}\label{}
\mu = \rho_{(1,0)}(A)= b-\frac{c^2}{1+a},
\end{equation}
then $|\mu|<1$ and one has
\begin{equation}\label{}
\Psi_{A,w}(z)=\Psi_{[\mu] ,(1,0)}(1 + O(1/\sqrt{k})),
\end{equation}
where the error estimate is in norm.
\end{proposition}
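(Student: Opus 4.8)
The plan is to treat the two assertions separately. The explicit formula (\ref{ketMu}) is a direct specialization of the exact expression (\ref{exactRedState}), whereas the asymptotic comparison rests on the observation that $\Psi_{A,w}$ and $\Psi_{[\mu],(1,0)}$ carry the same symbol, combined with the inner-product estimate (\ref{innerEstimate}). Throughout, $w=(1,0)$.

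First I would derive (\ref{ketMu}) by setting $A=[\mu]$ and $w=(1,0)$ in Proposition \ref{prop:exactReducedState}. For these choices one checks directly that $Q_{[\mu]}(w)=wAw^T=0$, that $Q_{[\mu]}(z)=zAz^T=\mu z_2^2$, and, since $[\mu]w^T=0$ and $\wbar^T=(1,0)^T$, that $z(\wbar^T-[\mu]w^T)=z_1$. Substituting these into (\ref{exactRedState}) and reindexing $\ell\mapsto k-\ell$ turns the sum into one over $0\le\ell\le k/2$ whose $\ell$-th term carries the monomial $z_1^{k-2\ell}z_2^{2\ell}$. Each such monomial is a scalar multiple of the basis vector $\ket{k-2\ell}$ of (\ref{ketn}); rewriting it accordingly and simplifying the factorials, using $\binom{k}{2\ell}=\tfrac{k!}{(2\ell)!(k-2\ell)!}$ and $\tfrac{(2\ell)!}{(\ell!)^2}=\binom{2\ell}{\ell}$, collapses the coefficient to exactly the one in (\ref{ketMu}). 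This step is purely mechanical.

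Next I would record the linear-algebra content of the second assertion. By Remark \ref{Remars1}(4), for $w=(1,0)$ the reduction $\rho_w(A)$ is the lower-right entry of $A-\tfrac{Aw^TwA}{wAw^T+1}$, and a short computation with $A=\begin{pmatrix} a & c\\ c & b\end{pmatrix}$ gives $\rho_w(A)=b-\tfrac{c^2}{1+a}=\mu$. That $|\mu|<1$ is exactly the assertion in Theorem \ref{Main}(2) that $\rho_w(A)\in\calD(\calH_w)\cong\calD_1$, which for the one-dimensional space $\calH_w$ reads $|\mu|<1$; in particular $[\mu]\in\calD_2$, so $\Psi_{[\mu],(1,0)}$ is a legitimate state. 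Since also $\rho_w([\mu])=\mu$, the matrices $A$ and $[\mu]$ have the same image under $\rho_w$.

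Finally I would pass to the symbol. By (\ref{symbDownstairs}) the symbol $\sigma_A$ depends on $A$ only through the scalar $(Q_A(w)+1)^{-1/2}$ and the form $Q_{\rho_w(A)}$; as $\rho_w(A)=\rho_w([\mu])=\mu$, the symbols coincide up to the constant $\lambda:=(1+a)^{-1/2}$, i.e.\ $\sigma_A=\lambda\,\sigma_{[\mu]}$. Feeding this into (\ref{innerEstimate}) for the three pairings then yields
\[
\inner{\Psi_{A,w}}{\Psi_{[\mu],(1,0)}}=\lambda\,\|\Psi_{[\mu],(1,0)}\|^2\,(1+O(1/k)),\qquad \|\Psi_{A,w}\|^2=|\lambda|^2\,\|\Psi_{[\mu],(1,0)}\|^2\,(1+O(1/k)),
\]
from which $\|\Psi_{A,w}-\lambda\,\Psi_{[\mu],(1,0)}\|^2=O(k^{-N-1})$ while $\|\Psi_{A,w}\|^2=O(k^{-N})$, giving the stated relative error $O(1/\sqrt{k})$. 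I expect the main obstacle to be the bookkeeping of normalization: the unnormalized states differ by the scalar $\lambda$ (and, at the level of (\ref{exactRedState}), by the $k$-dependent prefactor $e^{kQ_A(w)/2}$), so the equality in the statement must be understood up to this leading scalar, and one must verify that $\sigma_A$ and $\sigma_{[\mu]}$ differ by a pure constant with no $\eta$-dependent distortion before invoking (\ref{innerEstimate}).
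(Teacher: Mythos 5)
Your argument is correct and follows essentially the same route as the paper: part (1) is obtained by specializing Proposition \ref{prop:exactReducedState} to $A=[\mu]$, $w=(1,0)$ and reindexing (the paper dismisses this as ``a straightforward calculation''), and part (2) is obtained by comparing symbols and then using the corollary to the inner-product estimate \eqref{innerEstimate} (polarization), which is exactly the paper's proof: ``$\ket{o,\mu}$ and $\Psi_{A,w}$ have the same symbol, and the proposition follows from the previous corollary.''

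However, the caveat in your last paragraph is not mere bookkeeping --- it is a genuine correction to both the statement and the paper's proof, and you were right to flag it. By \eqref{symbDownstairs} (equivalently Lemma \ref{lemma:redSymbSpecialCase}), the symbol of $\Psi_{A,w}$ carries the prefactor $(Q_A(w)+1)^{-1/2}=(1+a)^{-1/2}$, whereas $Q_{[\mu]}(w)=0$, so $\sigma_A=\lambda\,\sigma_{[\mu]}$ with $\lambda=(1+a)^{-1/2}$, and the two symbols are literally equal only when $a=0$. The corollary the paper invokes requires \emph{exact} equality of symbols, so the paper's claim of ``same symbol'' is false for $a\neq 0$, and the literal equality $\Psi_{A,w}=\Psi_{[\mu],(1,0)}\left(1+O(1/\sqrt{k})\right)$ fails in that case: indeed, Theorem \ref{thm:stationary_proj} shows the values at the center already differ by the factor $(1+a)^{-1/2}$, which is an $O(1)$ discrepancy, not $O(1/\sqrt{k})$. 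Your corrected version,
\begin{equation*}
\Psi_{A,w}=\frac{1}{\sqrt{1+a}}\,\Psi_{[\mu],(1,0)}\left(1+O(1/\sqrt{k})\right),
\end{equation*}
proved by extending the polarization argument from equal symbols to proportional symbols ($\sigma_A=\lambda\sigma_{[\mu]}$ gives $\Vert\Psi_{A,w}-\lambda\Psi_{[\mu],(1,0)}\Vert^2=O(k^{-N-1})$ while $\Vert\Psi_{A,w}\Vert^2\asymp k^{-N}$), is the statement that actually holds and is what the paper should have written. One small caution: your parenthetical suggestion that the states also differ by the prefactor $e^{kQ_A(w)/2}$ from \eqref{exactRedState} is misleading --- that exponential factor is compensated within the sum (the symbol, which is finite, already accounts for it), and the true discrepancy between the states is only the scalar $\lambda$.
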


Identifying $\calH_{(1,0)}$ with 
the $z_2$ complex plane one finds that
\begin{equation}\label{symbolKetMu}
\sigma_{[\mu]}(z_2) = \frac{1}{\sqrt{2\pi}} e^{\mu z_2^2/2}\, e^{-|z_2|^2/2},
\end{equation}
and therefore, by (\ref{innerEstimate}), after some calculations we obtain
\begin{equation}\label{afterSome}
\norm{\Psi_{[\mu],(1,0)}} \sim \frac{\sqrt{\pi}}{ k } (1-|\mu|^2)^{-1/4}.
\end{equation}
We now proceed to normalize (\ref{ketMu}):
\begin{lemma}
The wavefunction
\begin{equation}\label{ketMuNormalized}
\ket{o,\mu} := 
\sum_{0\leq\ell\leq k/2}\left(\frac{1}{2k}\right)^{\ell}\frac{(2\ell)!}{\ell!}
\sqrt{k\choose 2\ell}\,\mu^{\ell}\,\ket{k-2\ell}
\end{equation}
agrees to leading order with $\frac{k}{\sqrt{\pi}}\Psi_{[\mu], (1,0)}$, and its norm satisfies
\begin{equation}\label{normKet}
\bra{o,\mu}\ket{o,\mu} = (1-|\mu|^2)^{-1/2} + O(1/k).
\end{equation}
(Here $o = \pi(1,0)\in\bbC\bbP^1$.)
\end{lemma}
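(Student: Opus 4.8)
The plan is to treat the two assertions separately; both reduce to elementary manipulations of the coefficients in the orthonormal basis $\{\ket{n}\}$ together with Stirling's formula $k!\sim\sqrt{2\pi k}\,(k/e)^k$.

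For the leading-order agreement I would start from the explicit expansion of $\Psi_{[\mu],(1,0)}$ in Proposition \ref{n2-case} and compare its coefficients to those defining $\ket{o,\mu}$ term by term. Write the coefficient of $(1/2k)^{\ell}\mu^{\ell}\ket{k-2\ell}$ in $\tfrac{k}{\sqrt\pi}\Psi_{[\mu],(1,0)}$ as $C_k\,a_\ell$, with $a_\ell=\binom{2\ell}{\ell}^{1/2}/\sqrt{(k-2\ell)!}$ and $C_k=\sqrt\pi\,k^{k+1}e^{-k}\sqrt{2/(k+1)!}$, and write the corresponding coefficient of $\ket{o,\mu}$ as $b_\ell=\tfrac{(2\ell)!}{\ell!}\binom{k}{2\ell}^{1/2}$. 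Using $\binom{2\ell}{\ell}^{1/2}=\sqrt{(2\ell)!}/\ell!$ one checks directly that $b_\ell=\sqrt{k!}\,a_\ell$, \emph{independently of $\ell$}. Hence the two vectors are exactly proportional, $\tfrac{k}{\sqrt\pi}\Psi_{[\mu],(1,0)}=(C_k/\sqrt{k!})\,\ket{o,\mu}$, and only the scalar remains to be evaluated: Stirling gives $C_k/\sqrt{k!}=\sqrt{k/(k+1)}\,(1+O(1/k))=1+O(1/k)$, which is the claimed agreement.

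For the norm, orthonormality of $\{\ket{n}\}$ turns $\bra{o,\mu}\ket{o,\mu}$ into the single sum $\sum_{0\le\ell\le k/2}(4k^2)^{-\ell}\big(\tfrac{(2\ell)!}{\ell!}\big)^2\binom{k}{2\ell}|\mu|^{2\ell}$. Simplifying $\big(\tfrac{(2\ell)!}{\ell!}\big)^2\binom{k}{2\ell}=\binom{2\ell}{\ell}\tfrac{k!}{(k-2\ell)!}$ and setting $x=|\mu|^2$, each term becomes $\binom{2\ell}{\ell}(x/4)^{\ell}\cdot\big[k^{-2\ell}\tfrac{k!}{(k-2\ell)!}\big]$, where the bracket is the falling-factorial ratio $\prod_{j=0}^{2\ell-1}(1-j/k)=1+O(\ell^2/k)$. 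To leading order the sum is therefore the central-binomial generating series $\sum_{\ell\ge0}\binom{2\ell}{\ell}(x/4)^{\ell}=(1-x)^{-1/2}$, which yields the stated main term $(1-|\mu|^2)^{-1/2}$.

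It remains to show that both error contributions are $O(1/k)$, and this is the only step requiring care. The decisive point is that $\mu\in D_1$ forces $x=|\mu|^2<1$ strictly, so the coefficients $\binom{2\ell}{\ell}(x/4)^{\ell}\sim(\pi\ell)^{-1/2}x^{\ell}$ decay geometrically. The truncation tail $\ell>k/2$ is then $O(x^{k/2})=o(k^{-\infty})$, and the per-term correction $\big[\,\cdot\,\big]-1=O(\ell^2/k)$, summed against the geometrically decaying coefficients, is bounded by $\tfrac1k\sum_\ell\ell^2\binom{2\ell}{\ell}(x/4)^{\ell}=O(1/k)$, the weighted series converging precisely because $x<1$. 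The main obstacle is thus purely this uniform error bookkeeping: one must ensure that the $O(\ell^2/k)$ corrections, whose prefactors grow polynomially in $\ell$, still sum to $O(1/k)$, and the strict inequality $|\mu|<1$ is exactly what guarantees convergence of the weighted series and makes the estimate go through.
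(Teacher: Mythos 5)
Your proposal is correct, and for the norm estimate it takes a genuinely different route from the paper. The paper's own proof is a two-line argument that leans on the semiclassical machinery already developed: it multiplies $\Psi_{[\mu],(1,0)}$ by $k/\sqrt{\pi}$, invokes the norm asymptotics (\ref{afterSome}) — which in turn rest on the inner-product estimate (\ref{innerEstimate}) and the symbol computation (\ref{symbolKetMu}) — to get (\ref{normKet}), and then applies Stirling to match coefficients with (\ref{ketMu}). You instead prove everything from scratch at the level of coefficients: your observation that $b_\ell=\sqrt{k!}\,a_\ell$ holds \emph{exactly}, so that $\tfrac{k}{\sqrt{\pi}}\Psi_{[\mu],(1,0)}$ and $\ket{o,\mu}$ are exactly proportional with scalar $\sqrt{k/(k+1)}\,(1+O(1/k))$, is a clean strengthening of "agrees to leading order" that the paper leaves implicit; and your norm computation via the identity $\bigl(\tfrac{(2\ell)!}{\ell!}\bigr)^2\binom{k}{2\ell}=\binom{2\ell}{\ell}\tfrac{k!}{(k-2\ell)!}$ and the central-binomial generating series $\sum_{\ell\ge 0}\binom{2\ell}{\ell}(x/4)^\ell=(1-x)^{-1/2}$ bypasses the symbol calculus entirely. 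Your error bookkeeping is also sound: the Weierstrass inequality gives the uniform bound $\bigl|\prod_{j=0}^{2\ell-1}(1-j/k)-1\bigr|\le 2\ell^2/k$ for all $\ell\le k/2$, and since $\binom{2\ell}{\ell}(x/4)^\ell\le x^\ell$ with $x=|\mu|^2<1$, the weighted series $\sum_\ell \ell^2\binom{2\ell}{\ell}(x/4)^\ell$ converges and the tail $\ell>k/2$ is exponentially small, yielding the stated $O(1/k)$. What each approach buys: the paper's route is shorter given the surrounding theory and showcases the symbol calculus, which is the point of the article; yours is elementary and self-contained, gives a sharper structural statement (exact proportionality), and serves as an independent consistency check on the semiclassical estimate (\ref{afterSome}).
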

\begin{proof}
Begin by multiplying $\Psi_{[\mu], (1,0)}$ by $k/\sqrt{\pi}$.  By (\ref{afterSome}) the result has a norm squared
that asymptotically is given by (\ref{normKet}).   Then apply Stirling's formula and simplify. 
\end{proof}

We have plotted the magnitudes of the components of the $\ell^2-$normalized $\ket{o,\mu}$ for $\mu=3/4$ and $k=30$ in Figure \ref{fig:ketmucoeffs}. 

\begin{figure}[h!]
	\centering
	\includegraphics[width=0.6\linewidth]{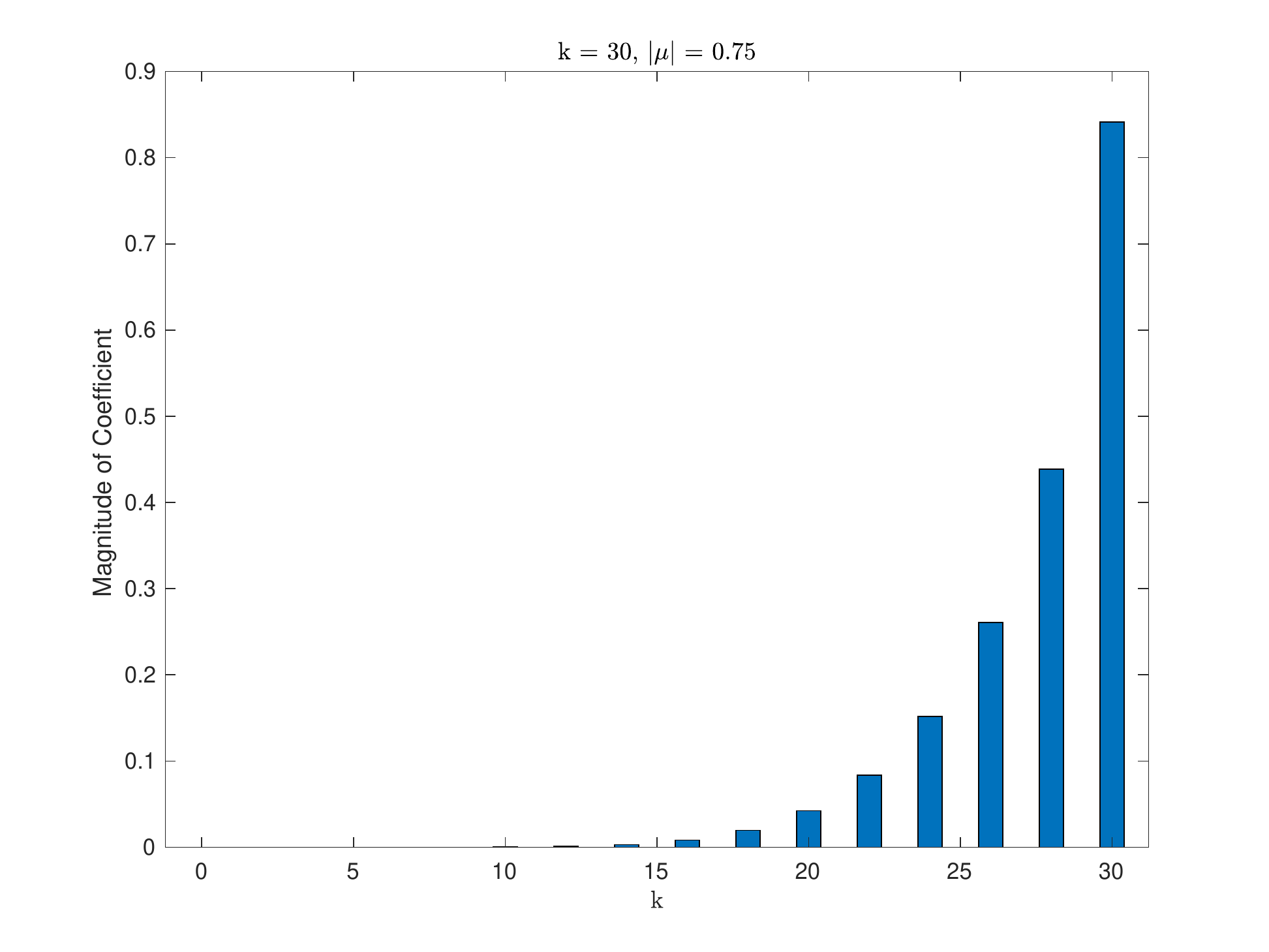}
	\caption{Plot of the components of the $\ell^2-$normalized kets \eqref{ketMuNormalized} for $k=30$ and $\mu =3/4$. Observe that the magnitudes are decaying as $k$ decreases. The closer $|\mu|$ is to zero, the more rapidly the decay of the components.}
	\label{fig:ketmucoeffs}
\end{figure}

\medskip
For future reference, note that then the symbol of $\ket{o,\mu}$ (see Remark \ref{Remark2}) is
\begin{equation}\label{ketSymbol}
\frac{1}{\sqrt{2}\, \pi}\, e^{\mu z_2^2/2}\, e^{-|z_2|^2/2}.
\end{equation}

We now let $\SU(2)$ act on the previous states:
\begin{definition}\label{TheDef}
Let $S_k:\text{SU}(2)\to \calU\left(\calBk_{\bbC\bbP^{1}}\right)$ be the natural representation
of SU$(2)$ in $\calBk_{\bbC\bbP^{1}}$.  
For any $p\in\bbC\bbP^1$, let $g\in \text{SU}(2)$ be 
such that $p =g\cdot o$.  If $\mu\in\calD_1$, let
\begin{equation}\label{generalMuState}
\ket{p,\mu} = S_k(g)(\ket{o,\mu}).
\end{equation}
We call any such state {\em a squeezed SU$(2)$ Gaussian state with center $p$ and parameter $\mu$}.
\end{definition}
We note that the notation (\ref{generalMuState}) is ambiguous, since $g$ is not unique for
a given $p$, but the ambiguity is a unitary factor (the squeezed coherent states are properly labeled by
points on $S^{3}$).

It is worthwhile to give a different description of the $\ket{o,\mu}$.
Using a trivialization of the Hopf fibration $S^3\to\bbC\bbP^1$, 
one can identify the Bargmann space of $\bbC\bbP^1$ with the space 
\begin{equation}\label{}
\calB_{\bbC\bbP^1}^{(k)} \cong \left\{ \Psi(\zeta)  = \frac{f(\zeta)}{(1+|\zeta|^2)^{k/2}} \; |\; \delbar f = 0 \right\} \cap L^2(\bbC, dm)
\quad \text{where}\quad dm = \frac{2\pi}{i}\frac{d\zeta\dbar{\zeta}}{(1+|\zeta|^2)^2}.
\end{equation} 
One can check that, in the above, $f$ must be a polynomial of degree at most $k$ in the complex variable $\zeta$.
The identification is simply by pulling back elements in $\calB^{(k)}_{\bbC\bbP{^1}}$ by the
section $S_\varpi: \bbC\to S^3$ given by
\[
S_\varpi(\zeta) = \frac{1}{\sqrt{1+|\zeta|^2}}(1,\zeta).
\]
It is not hard to compute that 
\begin{equation}\label{}
S_\varpi^*\ket{n} = \frac{1}{\pi} \,\frac{1}{(1+|\zeta|^2)^{k/2}}  \, \sqrt{\frac{k+1}{2}} \, \sqrt{\binom{k}{n}} \, \, \zeta^{k-n}
\end{equation}
and 
\begin{equation}\label{}
S_\varpi^*\ket{o,\mu} =  \frac{1}{\pi} \, \frac{k!}{(1+|\zeta|^2)^{k/2}} \sqrt{\frac{k+1}{2}}  \sum_{0\leq\ell \leq k/2} \left(\frac{1}{2k}\right)^\ell  \frac{1}{\ell ! (k-2\ell)!} \, \mu^\ell \, \zeta^{2\ell}.
\end{equation}
Figure \ref{fig:muKetsExample} shows the Husimi function $\left|S_\varpi^*\ket{o,\mu}  \right|^2$ of the ket $\ket{o,\mu}$ and its level sets
as a function of $\zeta$, for a choice of $\mu$ and $k$.

\begin{figure}[h!]
	\begin{subfigure}[b]{0.4\textwidth}
		\includegraphics[width=\textwidth]{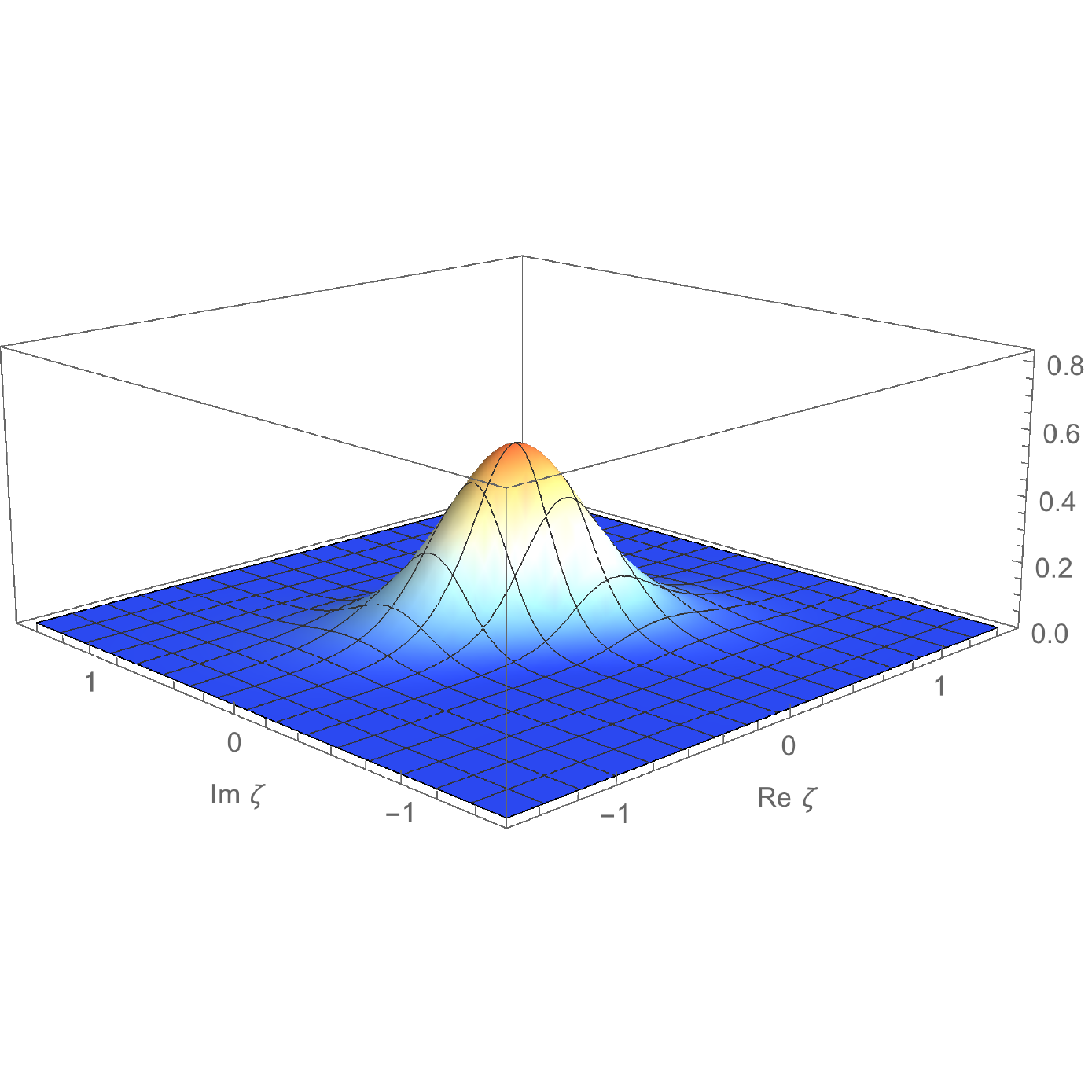}
	\end{subfigure}
	\qquad
	\begin{subfigure}[b]{0.4\textwidth}
		\includegraphics[width=0.85\textwidth]{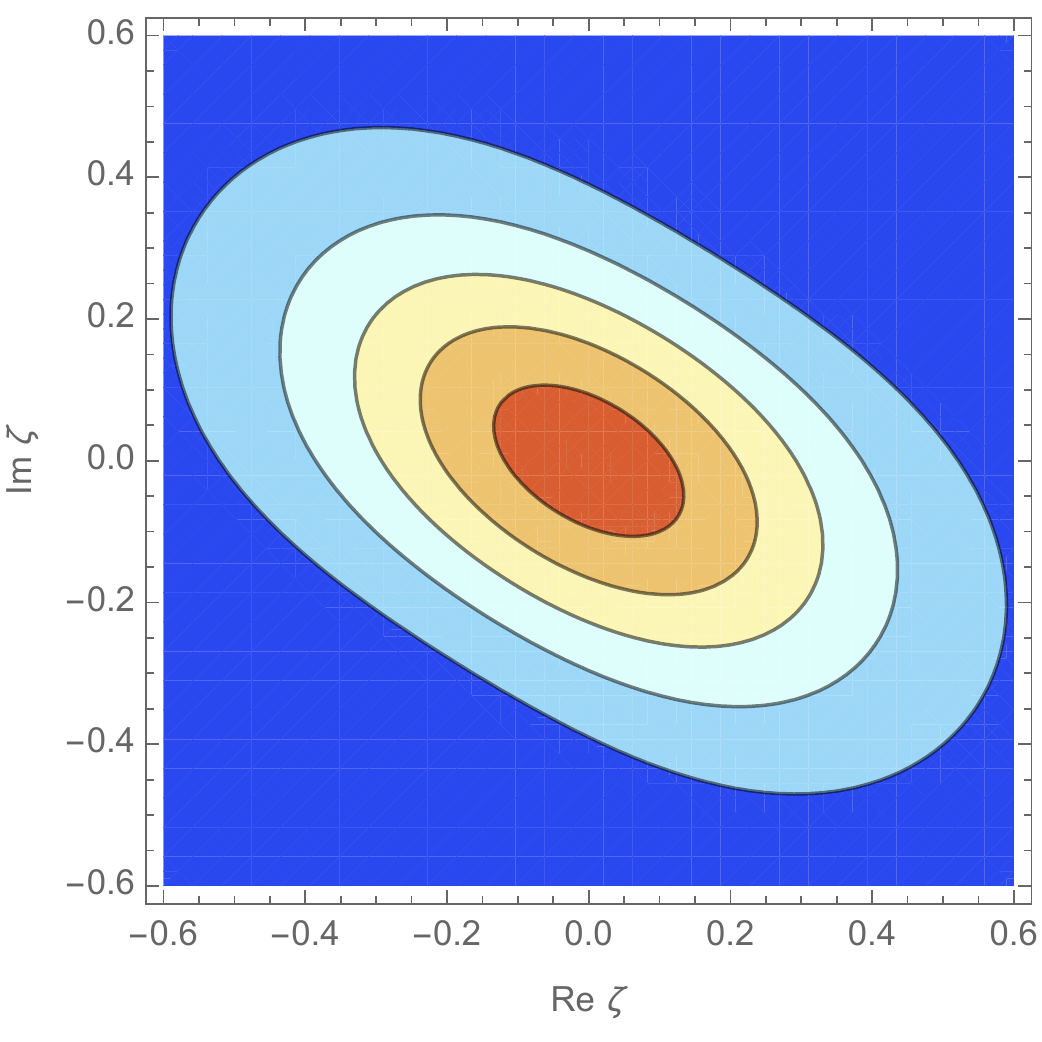}
	\end{subfigure}
	\caption{Plot of $\left|S_\varpi^*\ket{o,\mu}  \right|^2$  and its levels sets for $k=10$ and $\mu = 1/4 +i/2$.}
	\label{fig:muKetsExample}
\end{figure}

\bigskip
The paper is organized as follows.  In \S 2 we prove parts (1) and (2) of Theorem \ref{Main}, and 
in \S 3 we prove part (3) of Theorem \ref{Main}.  In \S 4 we discuss reduction, with $\h=1$, of states in the 
{\em linear} setting, which allows us to say that ``the symbol of the reduction is
the reduction of the symbol". We also recall how the metaplectic representation is constructed in Bargmann spaces,
following a paper by I. Daubechies.  This is used in \S 5 is where we prove our propagation results.

\medskip\noindent
{\em Acknowledgments:}  We wish to thank Eva Maria Graefe for calling our attention to the
problem of systematically constructing squeezed SU$(2)$ coherent states, and to her and Robert Littlejohn for useful discussions during an IMA workshop in
the summer of 2018.

\section{First estimates}

\subsection{Remarks on Gaussian states}

\subsubsection{Estimates}  We begin by establishing some fundamental estimates on Gaussian states.
\begin{lemma}
Let $A\in \calD_N$.  Then 
\begin{equation}\label{canConclude}
\exists \kappa\in [0,1)\ 
\forall z\in\bbC^N\qquad |Q_A(z)| \leq \kappa |z|^2.
\end{equation}
\end{lemma}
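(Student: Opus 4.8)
The plan is to reduce the whole estimate to a bound on the operator (spectral) norm of $A$. The first step is to rewrite the holomorphic quadratic form $Q_A(z) = zAz^T$ in terms of the Hermitian inner product on $\bbC^N$. Since $\inner{a}{b} = \sum_i \overline{a_i}\,b_i$, one has the identity
\[
Q_A(z) = \inner{\zbar}{Az^T},
\]
where $\zbar$ is the entrywise complex conjugate of $z$ and $Az^T$ is the ordinary matrix--vector product (a column vector). Applying the Cauchy--Schwarz inequality for this Hermitian pairing, and then the definition of the operator norm, gives at once
\[
|Q_A(z)| = \bigl|\inner{\zbar}{Az^T}\bigr| \le \norm{\zbar}\,\norm{Az^T} = |z|\,\norm{Az^T} \le \norm{A}_{\mathrm{op}}\,|z|^2.
\]

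The only place the hypothesis $A\in\calD_N$ enters is the strict inequality $\norm{A}_{\mathrm{op}} < 1$, and this is the step I would treat with the most care. By definition $A\in\calD_N$ means $A^*A < I$, i.e. the Hermitian matrix $I - A^*A$ is positive definite, so every eigenvalue of $A^*A$ is strictly less than $1$. Since $\norm{A}_{\mathrm{op}}^2$ equals the largest eigenvalue of $A^*A$, this yields $\norm{A}_{\mathrm{op}}^2 < 1$. I would then simply set $\kappa := \norm{A}_{\mathrm{op}} \in [0,1)$, and the displayed bound is exactly the claimed inequality (\ref{canConclude}).

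There is no serious obstacle here; the only subtlety worth flagging is bookkeeping — keeping the holomorphic bilinear form $zAz^T$ (no conjugation) distinct from the Hermitian structure used to measure lengths, so that Cauchy--Schwarz is applied to the correct pairing $\inner{\zbar}{Az^T}$ rather than to $z$ and $Az^T$ directly. I would also note that symmetry of $A$ is never used; only $A^*A < I$ matters. As an alternative to naming the operator norm, one could argue purely by compactness: the continuous function $z\mapsto |Q_A(z)|$ attains a maximum $\kappa$ on the unit sphere $\{|z|=1\}$, the pointwise estimate above forces $\kappa < 1$, and homogeneity of degree two then extends the bound to all $z\in\bbC^N$.
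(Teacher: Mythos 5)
Your proof is correct, but it takes a different route from the paper. The paper invokes Takagi's factorization: since $A$ is complex symmetric, $A = UDU^T$ with $U$ unitary and $D$ diagonal with entries the singular values $\kappa_j$ of $A$; substituting $\gamma = zU$ reduces the claim to the diagonal case, where $|Q_D(\gamma)| = |\sum_j \gamma_j^2\kappa_j| \le (\max_j\kappa_j)|\gamma|^2$, and $A\in\calD_N$ is exactly the statement $\max_j\kappa_j<1$. Your argument replaces this with Cauchy--Schwarz applied to the pairing $Q_A(z)=\inner{\zbar}{Az^T}$ together with the operator-norm bound $\norm{Az^T}\le\norm{A}_{\mathrm{op}}|z|$; since $\norm{A}_{\mathrm{op}}$ is the largest singular value, the two proofs produce the \emph{same} constant $\kappa$, so nothing is lost quantitatively. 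What your route buys is economy and generality: it needs only standard facts (no Takagi), and, as you correctly note, it never uses symmetry of $A$. What the paper's route buys is sharpness: the diagonalization exhibits an explicit $z$ (pull back a standard basis vector through $U$) at which $|Q_A(z)| = \kappa|z|^2$, so the constant is attained, whereas equality in Cauchy--Schwarz requires an alignment condition that is not immediate. Your parenthetical compactness argument is also fine as a substitute for naming the operator norm, and your identification of where $A\in\calD_N$ enters (all eigenvalues of $A^*A$ strictly below $1$) matches the paper exactly.
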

\begin{proof}
	Let $A\in \calD_N$.  By Takagi's factorization, there exists a unitary matrix $U$ and a diagonal matrix
	$D$ such that $A = UDU^T$, and $D$ is diagonal with entries $\kappa_j(A)\geq 0$, $j=1,\dots,N$, the 
	square roots of the eigenvalues of $A^*A$.  Let $z\in\bbC^N$ and $\gamma = zU$.  Then
	\[
|Q_A(z)| = |Q_D(\gamma)| = \left|\sum_{j=1}^N \gamma_j^2\kappa_j\right|\leq \kappa |\gamma|^2 = \kappa |z|^2,
	\]
	 where $\kappa = \max_j \kappa_j$. The assumption that $A\in\calD_N$ is equivalent to $\kappa <1$.
\end{proof}

 In particular $\Re \left(Q_A(z) \right)\leq \kappa |z|^2$ .  On the other hand, 
\begin{equation}\label{alternativePhase}
\psi_{A, w}(z) = e^{kQ_A(z-w)/2}\,e^{-k|z-w|^2/2}\,e^{ik\Im(z\wbar^T)},
\end{equation}
where $\omega$ is the symplectic form
\[
\omega(z, w) = \Im(z \wbar^T).
\]
Therefore, the Husimi function of $\psi_{A,w}$ is equal to
\begin{equation}\label{husimi}
|\psi_{A,w}|^2(z) = e^{k[\Re( Q_A(z-w)) - |z-w|^2] } \leq e^{-k[(1-\kappa) |z-w|^2] }.
\end{equation}
Since $\kappa <1$
the phase in (\ref{husimi}) is non-positive and is zero precisely at $z=w$.  
Away from $w$ the Husimi function is exponentially decreasing.
From this it follows that the semi-classical microsupport of $\psi_{A,w}$ is $\{w\}$.

The proof of the previous lemma can easily be modified to show the
equivalence of the two definitions of $\calD_N$ and $\calD(\bbC^N)$.

As another observation, we note:
\begin{lemma}\label{Overlap}
Given $A,\,B\in\calD_N$ and $v,w\in\bbC^N$, then
\[
v\not=w\quad\Rightarrow\quad \inner{\psi_{A, w}}{\psi_{B,v}} = O(k^{-\infty}).
\]
\end{lemma}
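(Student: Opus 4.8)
The plan is to bound the inner-product integrand pointwise by a Gaussian and then integrate. First I would apply the preceding lemma to each of $A$ and $B$ separately, obtaining constants $\kappa_A,\kappa_B\in[0,1)$, and set $\delta := 1-\max(\kappa_A,\kappa_B)>0$. The Husimi estimate (\ref{husimi}) then yields the uniform pointwise bounds $|\psi_{A,w}(z)|\le e^{-k\delta|z-w|^2/2}$ and $|\psi_{B,v}(z)|\le e^{-k\delta|z-v|^2/2}$ for every $z\in\bbC^N$, so that the integrand of $\inner{\psi_{A,w}}{\psi_{B,v}} = \int_{\bbC^N}\psi_{A,w}(z)\,\overline{\psi_{B,v}(z)}\,dL(z)$ is dominated in absolute value by $\exp\!\big(-\tfrac{k\delta}{2}(|z-w|^2+|z-v|^2)\big)$.

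The key step will be the parallelogram identity
\begin{equation*}
|z-w|^2+|z-v|^2 = 2\Big|z-\tfrac{w+v}{2}\Big|^2 + \tfrac12|w-v|^2,
\end{equation*}
which cleanly separates the $z$-dependence from the separation $|w-v|$ of the centers. Substituting, the integrand is bounded by $e^{-k\delta|w-v|^2/4}\,e^{-k\delta|z-(w+v)/2|^2}$, and integrating the Gaussian over $z\in\bbC^N\cong\bbR^{2N}$ produces the factor $(\pi/(k\delta))^N$. This gives
\begin{equation*}
\big|\inner{\psi_{A,w}}{\psi_{B,v}}\big| \le \Big(\frac{\pi}{k\delta}\Big)^{N} e^{-k\delta|w-v|^2/4}.
\end{equation*}
Since $v\neq w$ we have $|w-v|>0$, so the right-hand side decays exponentially in $k$; in particular it is $O(k^{-\infty})$.

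I anticipate no genuine obstacle here; the argument is essentially bookkeeping on top of the already-established Husimi bound. The only point requiring care is that the decay constant $\delta$ be chosen uniformly for both packets, which is why one takes the larger of the two Takagi exponents $\kappa_A,\kappa_B$. It is worth noting that the estimate obtained is in fact stronger than the stated conclusion: the overlap is exponentially, not merely super-polynomially, small, with a rate governed by the squared distance $|w-v|^2$ between the centers, exactly as expected for two Gaussian packets concentrating at distinct points.
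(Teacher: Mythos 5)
Your proof is correct, and it takes a genuinely different route from the paper's. The paper writes the overlap as a single exponential integral $\int_{\bbC^N} e^{\varphi}\,dL(z)$ with the explicit quadratic phase $\varphi$, computes $\partial\varphi/\partial z$ and $\partial\varphi/\partial\zbar$, and proves a linear-algebra claim (that $z\mapsto zA-\zbar$ is bijective whenever $A\in\calD_N$, since $1$ is not an eigenvalue of $AA^*$) to conclude that the phase has critical points only when $z=w$ and $z=v$ simultaneously; hence for $v\neq w$ there are no critical points, and the rapid decay follows by the (implicit) non-stationary phase argument. You instead bypass all phase analysis: you dominate each factor pointwise by the Husimi-type Gaussian bound already established in the paper, split the exponent with the parallelogram identity, and integrate exactly. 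Your approach buys two things: it is entirely self-contained and elementary (no bijectivity claim, no appeal to integration-by-parts machinery, which the paper in fact leaves unstated), and it yields a strictly stronger quantitative conclusion, namely the explicit exponential bound
\begin{equation*}
\bigl|\inner{\psi_{A,w}}{\psi_{B,v}}\bigr| \le \Bigl(\frac{\pi}{k\delta}\Bigr)^{N} e^{-k\delta|w-v|^2/4},
\end{equation*}
with rate governed by $|w-v|^2$. What the paper's method buys in exchange is generality: the critical-point analysis extends to situations where pointwise Gaussian domination of the integrand is not available (non-Gaussian amplitudes, genuinely oscillatory phases), which is the standard viewpoint used elsewhere in the paper for stationary-phase computations such as Theorem \ref{thm:stationary_proj}.
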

\begin{proof}
Let us write
$
	\inner{\psi_{A, w}}{\psi_{B,v}} = \int_{\bbC^N} e^{\varphi(z,\zbar)}\, dL(z)
$
	where
\[
	\varphi = Q_A(z-w)/2 + \overline{Q_B(z-v)}/2 + z\wbar^T + \zbar v^T - |v|^2/2 - |w|^2/2 -|z|^2.
	\]
Let us look for critical points of the phase.  Note that
\begin{align}
	\frac{\partial\varphi}{\partial z} &= (z-w)A + \wbar -\zbar,\quad\text{and}\label{partialz}\\
	\overline{\frac{\partial\varphi}{\partial \zbar}} &=(z-v){B} + \vbar -\zbar\label{partialzbar}.
\end{align}
{\em Claim:}  If $A\in\calD_N$, the mapping $\bbC^N\ni z\mapsto zA- \zbar\in\bbC^N$ is bijective.

{\em Proof of the claim.}  Since the map is $\bbR$-linear, it is enough to prove that its kernel is zero.
Note that
\[
zA = \zbar \quad\Rightarrow\quad \zbar\overline{A} = z \quad\Rightarrow\quad z A\overline{A} = z.
\]
Since $A$ is symmetric this means that $zAA^* = z$.  Since $A\in\calD_N$, 1 is not an eigenvalue of $AA^*$, and 
therefore $z=0$.

Since (\ref{partialz}) being equal to zero is equivalent to $zA -\zbar = wA-\wbar$,  we see that $\frac{\partial\varphi}{\partial z}=0$
iff $z=w$.  Similarly, $\frac{\partial\varphi}{\partial \zbar}=0$ iff $z=v$.  So if $v\not=w$ the phase $\varphi$ does not
have any critical points.
\end{proof}

\subsubsection{Covariance}
The Gaussian states in the Bargmann space of $\bbC^N$ 
have the following useful covariance property.  The group U$(N)$ acts on $\bbC^N$ on the
right (since we are working with row vectors), which induces an action (representation) on $\calBk_{\bbC^N}$
given by
\begin{equation}\label{}
\forall g\in \text{U}(N), \quad \psi\in\calBk_{\bbC^{N}}\qquad 
(g\cdot \psi)(z) := \psi(zg).
\end{equation}
The following is straightforward, and is very useful:
\begin{lemma}
	One has
	\begin{equation}\label{covariance}
	g\cdot \psi_{A, w} = \psi_{gAg^T, wg^{-1}}.
	\end{equation}
\end{lemma}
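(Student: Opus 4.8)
The plan is to prove the identity by a direct computation, unravelling both sides against the definition of $\psi_{A,w}$ and matching the four exponential factors one at a time. Writing $A' = gAg^T$ and $w' = wg^{-1}$, the left-hand side is by definition $(g\cdot \psi_{A,w})(z) = \psi_{A,w}(zg)$, so the goal is to verify that $\psi_{A,w}(zg) = \psi_{A',w'}(z)$ for every $z$. Throughout I would use that $g\in\text{U}(N)$ means $g^{-1} = g^* = \overline{g}^T$, together with the symmetry of $A$.

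The two modulus factors are immediate: since $g$ is unitary, $|zg|^2 = |z|^2$ and $|wg^{-1}|^2 = |w|^2$, so the factors $e^{-k|zg|^2/2}$ and $e^{-k|w|^2/2}$ occurring in $\psi_{A,w}(zg)$ coincide with $e^{-k|z|^2/2}$ and $e^{-k|w'|^2/2}$ in $\psi_{A',w'}(z)$. For the linear factor I would compute $\overline{w'} = \wbar\,\overline{g^{-1}} = \wbar\,g^T$ (using $\overline{g^{-1}} = \overline{\overline{g}^T} = g^T$), so that $z\,\overline{w'}^{\,T} = zg\,\wbar^T = (zg)\wbar^T$, which matches the factor $e^{k(zg)\wbar^T}$.

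The heart of the argument is the quadratic factor, namely the identity $Q_{gAg^T}(z - wg^{-1}) = Q_A(zg - w)$. The clean way to see this is to set $\xi = z - w' = z - wg^{-1}$ and observe two things: first, $\xi g = zg - w$; and second, $Q_{gAg^T}(\xi) = \xi\,(gAg^T)\,\xi^T = (\xi g)A(\xi g)^T = Q_A(\xi g)$, since $(\xi g)^T = g^T\xi^T$. Combining these yields $Q_{A'}(z-w') = Q_A(zg-w)$, so the factors $e^{kQ_{A'}(z-w')/2}$ and $e^{kQ_A(zg-w)/2}$ agree. With all four factors matched we get $\psi_{A,w}(zg) = \psi_{A',w'}(z)$, which is the claim.

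Finally, for the statement to be meaningful one should check that $A' = gAg^T$ is again an admissible parameter: it is symmetric because $(gAg^T)^T = gA^Tg^T = gAg^T$, and $A'\in\calD_N$ because $A'^*A' = \overline{g}(A^*A)\overline{g}^{-1}$ is unitarily conjugate to $A^*A$ (here $\overline{g}$ is unitary with inverse $g^T$), hence still obeys the strict inequality $< I$. I expect no genuine obstacle beyond bookkeeping: because the group acts on row vectors by right multiplication, the only thing to watch is the consistent placement of transposes and conjugates, and the whole proof turns on the single transpose identity $(\xi g)^T = g^T\xi^T$ in the quadratic term.
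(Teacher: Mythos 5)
Your proof is correct and is exactly the direct verification the paper has in mind when it calls the lemma ``straightforward'' (the paper offers no written proof): all four exponential factors of $\psi_{A,w}(zg)$ are matched with those of $\psi_{gAg^T,\,wg^{-1}}(z)$ using unitarity of $g$ and the transpose identity $(\xi g)^T = g^T\xi^T$. Your extra check that $gAg^T$ is symmetric and remains in $\calD_N$ is a welcome bit of rigor the paper leaves implicit.
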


\subsection{Pointwise estimates of the reduced states}  Let $A\in\calD_N$ and $w\in S^{2N-1}$.
We now obtain a point-wise estimate of $\Psi_{A,w}$. 

From the definition (after a short calculation),
\begin{equation}\label{projection_int}
\forall z\in S^{2N-1}\qquad\Psi_{A,w}(z) = \frac{1}{2\pi} \int_{0}^{2\pi} e^{k\varphi(z,t)}\, dt
\end{equation}
where the phase is
\begin{equation}
\varphi(z, t) := e^{it}z \wbar^T + \frac{1}{2}(e^{it}z - w)A(e^{it}z-w)^T  -i t - \frac{1}{2}(|z|^2 + |w|^2).
\label{eq:phi}
\end{equation}
\begin{lemma} The phase satisfies $\Re(\varphi) \leq 0$.  Moreover, 
	its critical points (with respect to $t$) satisfying $\Re(\varphi)=0$ are precisely the
	solutions of  $e^{it}z =w$.
\end{lemma}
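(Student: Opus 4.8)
The plan is to reduce the whole statement to the Gaussian bound (\ref{canConclude}) already proved, namely $|Q_A(v)| \le \kappa|v|^2$ for some $\kappa \in [0,1)$, exploiting crucially that $z$ and $w$ lie on the unit sphere. I would first introduce the abbreviation $u = e^{it}z$, so that $|u| = |z| = 1$ and the quadratic term in (\ref{eq:phi}) becomes $\tfrac12 Q_A(u - w)$.

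The key preliminary step is to rewrite $\Re(\varphi)$ so that it matches the Husimi exponent in (\ref{husimi}). The term $-it$ is purely imaginary, hence invisible to $\Re$, and $\tfrac12(|z|^2 + |w|^2) = 1$. Using the polarization identity $|u - w|^2 = |u|^2 + |w|^2 - 2\Re(u\wbar^T) = 2 - 2\Re(u\wbar^T)$, the linear term and the constant combine as $\Re(u\wbar^T) - 1 = -\tfrac12|u - w|^2$. This produces the identity
\[
\Re(\varphi) = \tfrac12\left[\,\Re\!\big(Q_A(u-w)\big) - |u-w|^2\,\right],
\]
which is exactly half the exponent appearing in (\ref{husimi}) evaluated at the rotated point $u$. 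Applying (\ref{canConclude}) gives $\Re(Q_A(u-w)) \le \kappa|u-w|^2$, hence $\Re(\varphi) \le -\tfrac12(1-\kappa)|u-w|^2 \le 0$. Since $1 - \kappa > 0$, this simultaneously proves $\Re(\varphi) \le 0$ and shows that $\Re(\varphi) = 0$ holds if and only if $u = w$, i.e. $e^{it}z = w$.

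For the ``moreover'' clause I would then differentiate in $t$. Using $\tfrac{du}{dt} = iu$ and the symmetry of $A$ (so that $\tfrac{d}{dt}Q_A(u-w) = 2i(u-w)Au^T$), a short computation yields $\partial_t\varphi = i\big[\,u\wbar^T + (u-w)Au^T - 1\,\big]$. Evaluating where $e^{it}z = w$, one has $u = w$, so $u\wbar^T = |w|^2 = 1$ and $(u-w)Au^T = 0$, giving $\partial_t\varphi = 0$. Thus every solution of $e^{it}z = w$ is both a critical point and a zero of $\Re(\varphi)$; conversely, any critical point with $\Re(\varphi) = 0$ must satisfy $e^{it}z = w$ by the strict-inequality characterization above. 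The two inclusions give the claimed equality of sets.

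The individual computations are routine; the only genuinely non-automatic point is the consistency between the two conditions in the last paragraph. The vanishing of $\Re(\varphi)$ already pins down the locus $e^{it}z = w$ on its own, so the role of the critical-point condition is merely to be verified — one must check that these distinguished points really are critical points of the phase, which is precisely what legitimizes the stationary-phase analysis of (\ref{projection_int}) carried out in the sequel.
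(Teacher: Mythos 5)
Your proof is correct and follows essentially the same route as the paper: both use the bound (\ref{canConclude}) to rewrite $\Re(\varphi)$ as $\tfrac12\left[\Re\bigl(Q_A(u-w)\bigr)-|u-w|^2\right]$ (the Husimi exponent of (\ref{husimi}) at $u=e^{it}z$), concluding $\Re(\varphi)\le 0$ with equality exactly when $u=w$, and then verify by differentiation in $t$ that these points are indeed critical points. You merely make explicit the polarization identity and the derivative computation that the paper leaves implicit, which is a faithful and complete rendering of the same argument.
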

\begin{proof}
	We already know from (\ref{canConclude}) that $\Re(\varphi)=0$ iff $e^{it}z=w$.
	On the other hand, the critical points of the phase are solutions of
	\begin{equation}\label{critPts}
	e^{it}z\wbar^T + (e^{it}z-w)A z^T=1.
	\end{equation}
	This is indeed satisfied if $e^{it}z = w$.
\end{proof}
As a corollary of the previous Lemma, regarded as a section of the tensor powers of the reduced (or hyperplane) line bundle
\[
\calL^k\to\bbC\bbP^{N-1},
\]
$\Psi_{A,w}$ and all its derivatives are 
rapidly decreasing away from the point $\varpi = \pi(w)$.  This is item (1) in Theorem \ref{Main}.  
To evaluate $\Psi_{A,w}$ 
asymptotically at $\varpi$, let us apply the method of stationary phase (Theorem 7.7.5 in \cite{H}) to (\ref{projection_int}). 
Thus, assume that $e^{it_0}z = w$ for some $t_0$.  The second derivative of the phase at $t=t_0$ (see the left-hand side
of (\ref{critPts})) is equal to $i(1+wAw^T)$.  This implies:
\begin{theorem}
	With the previous notation,
	\begin{equation*}
	\Psi_{A,w}(e^{-it_0}w)
	= \frac{1}{\sqrt{2 \pi k}}
	\frac{e^{-ikt_0}}{\sqrt{w A w^T +1}} +O(k^{-3/2})\quad \text{ as } k \to \infty.
	\end{equation*}
	\label{thm:stationary_proj}
\end{theorem}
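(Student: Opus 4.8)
The plan is to treat the definition (\ref{projection_int}) of $\Psi_{A,w}$ as a one–dimensional oscillatory integral in the variable $t$ and to apply the (complex) stationary–phase method, feeding in the information about the phase $\varphi$ already assembled in the preceding lemma. First I would specialize $z=e^{-it_0}w$, so that the locus $e^{it}z=w$ — where, by that lemma, $\Re(\varphi)$ attains its maximum value $0$ — reduces on the circle of integration to the single point $t\equiv t_0\pmod{2\pi}$. Thus $\Re(\varphi)<0$ away from $t_0$, and $t_0$ is the unique point governing the asymptotics: any further critical points of $\varphi$ lie in the region $\Re(\varphi)<0$ and are exponentially suppressed, hence absorbed into the remainder. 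Writing $k\varphi=ik(-i\varphi)$ with $\Im(-i\varphi)=-\Re(\varphi)\geq 0$ puts the integral squarely in the framework of Theorem 7.7.5 in \cite{H}.

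Next I would evaluate the phase and its second derivative at $t_0$. Using $w\wbar^T=|w|^2=1$ together with $e^{it_0}z=w$, the quadratic term and the modulus term in (\ref{eq:phi}) cancel, leaving $\varphi(e^{-it_0}w,t_0)=-it_0$, which produces the factor $e^{-ikt_0}$. Differentiating the relation (\ref{critPts}) characterizing the critical points (equivalently, differentiating (\ref{eq:phi}) twice directly) and then setting $e^{it_0}z=w$ gives
\[
\varphi''(t_0)=-\bigl(1+wAw^T\bigr).
\]
By Remark \ref{Remars1}(1) one has $\Re\bigl(1+wAw^T\bigr)>0$, so $t_0$ is a genuinely nondegenerate maximum of $\Re(\varphi)$ along the contour, and the correct branch of $\sqrt{1+wAw^T}$ is the principal one on the right half-plane fixed there.

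With these two computations in hand I would invoke the stationary–phase formula for phases of nonpositive real part. Its leading term is
\[
\frac{1}{2\pi}\,e^{k\varphi(t_0)}\sqrt{\frac{2\pi}{-k\,\varphi''(t_0)}}
=\frac{1}{2\pi}\,e^{-ikt_0}\sqrt{\frac{2\pi}{k\bigl(1+wAw^T\bigr)}}
=\frac{1}{\sqrt{2\pi k}}\,\frac{e^{-ikt_0}}{\sqrt{wAw^T+1}},
\]
which is exactly the asserted main term. Since this leading term is $O(k^{-1/2})$ and the next term in the expansion is smaller by a relative factor $O(1/k)$, the remainder is $O(k^{-3/2})$.

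The main obstacle is the bookkeeping of the complex stationary phase rather than any single hard estimate. The two points requiring care are (i) confirming that $t_0$ is the \emph{only} critical point on the maximal set $\{\Re(\varphi)=0\}$, so that a single Gaussian contribution controls the leading order, and (ii) pinning down the branch of the square root. This second point is precisely where the hypothesis $A\in\calD_N$ enters, through $\Re(1+wAw^T)>0$, which simultaneously guarantees nondegeneracy along the contour and selects the correct analytic continuation of $(\,\cdot\,)^{-1/2}$ used in (\ref{symbDownstairs}). Everything else is a routine evaluation of $\varphi(t_0)$ and $\varphi''(t_0)$.
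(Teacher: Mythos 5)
Your proposal is correct and follows essentially the same route as the paper: both apply the stationary-phase method (H\"ormander, Theorem 7.7.5) to the integral representation (\ref{projection_int}), use the preceding lemma to localize at the unique critical point $t_0$ where $\Re(\varphi)=0$, and compute $\varphi(t_0)=-it_0$ and $\varphi''(t_0)=-(1+wAw^T)$ to obtain the stated leading term with $O(k^{-3/2})$ remainder. The paper's own proof is just a terse sketch of exactly these steps (including the branch condition from $\Re(wAw^T+1)>0$), which your write-up fills in accurately.
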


\section{Symbols}

In this section we prove the remainder of Theorem \ref{Main}.  Then, we will
place the definition of the symbol of the reduced states in a geometric context.

\subsection{Proof of part (2) of Theorem \ref{Main}}

We will now prove (\ref{whatTheSymbolIs}).  Fix $w\in S^{2N-1}$ and
$\eta\in \calH_w$, and introduce the notation
\begin{equation}\label{}
\Upsilon_A(\eta, k):= \Psi_{A,w}\left[w+\frac{\eta}{\sqrt{k}}\right].
\end{equation}
We need to show that
\begin{equation}\label{wTS}
\sqrt{k}\,\Upsilon_A(\eta, k) = 
\frac{1}{2\pi}e^{-|\eta|^2/2}  \int_{-\infty}^{\infty} e^{Q_A(isw + \eta)/2} \, e^{-s^2/2}\,ds + O(1/\sqrt{k}). 
\end{equation}

Note that
\[
\Upsilon_A(\eta, k) =  \frac{1}{2 \pi} \int_{-\pi}^{\pi} \psi_{A,w}\left(e^{it} (w+\eta/\sqrt{k}) \right) e^{-ikt} dt .
\]  
For each $k$ we split the domain of integration into three parts,
\[
\Upsilon_A(\eta,k) = \frac{1}{2\pi}\int_{-\pi}^{-a_k} + \frac{1}{2\pi}\int_{-a_k}^{a_k} + \frac{1}{2\pi}\int_{a_k}^{\pi} =: I_1+I_2+I_3,
\]
respectively, where $(a_k)$ is a sequence of positive numbers tending to zero that we will
specify later.  In particular, we will choose this sequence so that $I_1$ and $I_3$ are negligible with
respect to $I_2$.

\medskip
First let us estimate $I_3$.  Recall that $\vert \psi_{A,w}(z) \vert \leq e^{-Ck |z-w|^2}$ 
with $C = (1-\kappa)/2\in (0,1/2]$, where $\kappa$ is the largest eigenvalue of $A^*A$ (see (\ref{husimi})). 
Therefore 
\begin{align*}
\bigg\vert \psi_{A,w}\left(e^{it} (w+\eta/\sqrt{k}) \right) e^{-ikt}  \bigg\vert  
\leq e^{-Ck \vert e^{it}(w +\eta/\sqrt{k}) - w|^2} 
\leq e^{-C k|e^{it}-1|^2} 
\end{align*}
where we have used that 
$\eta \cdot \wbar =0$ and $|w|^2=1$.  Hence, 
\[
|I_3| \leq \frac{1}{2\pi} \int_{a_k}^\pi e^{-Ck|e^{it}-1|^2}  dt  \leq  C_1 \max_{t \in [a_k, \pi]} e^{-Ck|e^{it}-1|^2} 
= C_1 \: e^{-Ck|e^{ia_k}-1|^2}.
\]
Since $|e^{it} -1|^2 = t^2 + t^4 R(t)$ for some function $R(t)$
bounded in a neighborhood of zero, we conclude
\begin{equation}\label{I1andI2}
|I_3| \leq C_1\, e^{-Cka_k^2(1+a_k^2R(a_k))}
\end{equation}
and similarly for $I_1$.  
We now pick 
\begin{equation}\label{thea_k}
a_k = \left(\frac{\log(k)}{Ck}\right)^{1/2}
\end{equation}
with $C$ the above constant.  Therefore
\begin{equation}\label{orderI3}
I_3 = O\left( 1/k\right)\quad\text{and similarly for } I_1.
\end{equation}

We now turn to
$I_2 =  \frac{1}{2 \pi} \int_{-a_k}^{a_k} \psi_{A,w}\left(e^{it} (w+\eta/\sqrt{k}) \right) e^{-ikt} dt$.
After some algebra, one finds that this integral has the following form:
\begin{equation}\label{I2}
I_2= \frac{1}{2 \pi} \int_{-a_k}^{a_k}  e^{k\phi(t) + \sqrt{k}\psi(t) + \varrho(t)} dt
\end{equation}
where 
\begin{align*}
\phi(t) &= \frac{1}{2}Q_A(w)(1-2e^{it}+e^{2it}) + e^{it}-it-1 \\
\psi(t) &= (e^{2it} - e^{it})\eta A w^T \\
\varrho(t) &= \frac{1}{2} (e^{2it}Q_A(\eta)-|\eta|^2).
\end{align*}
The only critical point of the phase is at $t=0$. After a Taylor expansion at zero, one obtains:
\begin{lemma}
	The integral $I_2$ is of the form
	\begin{equation}\label{I2bis}
	I_2 = \frac{e^{[Q_A(\eta)-|\eta|^2]/2}}{2\pi}\int_{-a_k}^{a_k}
	e^{f_k(t) + g_k(t)}\, dt
	\end{equation}
	with
	\begin{equation}\label{}
	f_k(t) = -k(Q_A(w)+1)t^2/2 + it \sqrt{k}\, \eta\, A w^T  
	\end{equation}
	and
	\begin{equation}\label{}
	g_k(t) = it^3 k G(t) + t^2\sqrt{k} H(t) + itF(t),
	\end{equation}
	where $F,\, G,\, H$ are smooth $k$-independent
	functions (in particular bounded in a neighborhood of zero).
\end{lemma}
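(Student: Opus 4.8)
The plan is to treat this identity as pure Taylor-expansion bookkeeping: the combined exponent $k\phi(t)+\sqrt{k}\,\psi(t)+\varrho(t)$ is to be split into a $t$-independent constant (which becomes the prefactor $e^{[Q_A(\eta)-|\eta|^2]/2}$), the explicit polynomial $f_k(t)$ carrying the top-order-in-$k$ pieces, and a remainder $g_k(t)$ whose coefficient functions are smooth and independent of $k$. Nothing here requires a genuinely new idea; the content is organizing the expansions by their power of $k$ and their vanishing order at $t=0$.

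First I would isolate the constant term. Only $\varrho$ contributes at $t=0$, since $\phi(0)=0$ and $\psi(0)=0$ by direct substitution, while $\varrho(0)=\tfrac12(Q_A(\eta)-|\eta|^2)$; pulling $e^{\varrho(0)}$ out of the integral produces exactly the stated prefactor. Next I would expand $\phi$. Writing $1-2e^{it}+e^{2it}=(e^{it}-1)^2$, one checks directly that $\phi(0)=0$ and $\phi'(0)=0$ (this is precisely the statement, already noted, that $t=0$ is a critical point of the phase with vanishing critical value). Hence $\phi$ vanishes to second order, and Taylor's theorem gives $\phi(t)=-\tfrac12(Q_A(w)+1)t^2+t^3\,\widetilde{G}(t)$ with $\widetilde{G}$ analytic near $0$; the quadratic term, multiplied by $k$, is the first summand of $f_k$, while $kt^3\widetilde{G}(t)$ becomes the $it^3kG(t)$ term of $g_k$, the factor $i$ being absorbed into the complex-valued $G$.

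Similarly $\psi(t)=(e^{2it}-e^{it})\,\eta A w^T$ has a simple zero at $t=0$ with $\psi'(0)=i\,\eta A w^T$, so $\psi(t)=it\,\eta A w^T+t^2\,\widetilde{H}(t)$; multiplying by $\sqrt{k}$ yields the linear summand of $f_k$ and the $t^2\sqrt{k}\,H(t)$ term of $g_k$. Finally $\varrho(t)-\varrho(0)=\tfrac12 Q_A(\eta)(e^{2it}-1)$ has a simple zero, hence equals $itF(t)$ with $F$ analytic. Collecting the three expansions and re-exponentiating gives precisely $(\ref{I2bis})$.

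The only point requiring a word of care, and it is not really an obstacle, is the claim that $G$, $H$, $F$ are genuinely smooth (hence bounded near $0$) and carry no dependence on $k$. This holds because each is by construction a quotient of an entire function of $t$ (a fixed combination of $e^{it}$, $e^{2it}$, and monomials, with coefficients built from $A$, $w$, $\eta$) by the appropriate power $t^m$, and the numerator vanishes to order exactly $m$ at the origin; the removable singularity at $t=0$ makes each quotient entire, and its Taylor coefficients involve $A$, $w$, $\eta$ but never $k$. This separation of the $k$-scaling from the $t$-analyticity is exactly what makes the decomposition $f_k+g_k$ canonical, and it is what will allow the subsequent Laplace-type analysis of $(\ref{I2bis})$ to extract the leading asymptotics.
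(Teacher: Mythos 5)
Your proposal is correct and follows essentially the same route as the paper, which disposes of this lemma with the single phrase ``after a Taylor expansion at zero'': you verify $\phi(0)=\phi'(0)=0$, $\psi(0)=0$, compute $\phi''(0)=-(Q_A(w)+1)$ and $\psi'(0)=i\,\eta A w^T$, pull out $e^{\varrho(0)}$ as the prefactor, and correctly observe that the remainder coefficients are restrictions of entire functions of $t$ built from $A,w,\eta$ alone, hence smooth, bounded near $0$, and $k$-independent. The only (harmless) looseness is the parenthetical identifying $\phi'(0)=0$ with the earlier critical-point lemma, which concerned the full phase rather than $\phi$ alone; your direct computation is what actually carries the argument, and it is right.
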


We now make the change of variables $t=s/\sqrt{k}$ in (\ref{I2bis}), to obtain
\begin{equation}\label{toObtain}
I_2 = \frac{e^{[Q_A(\eta)-|\eta|^2]/2}}{2\pi\sqrt{k}}\int_{-\infty}^{\infty}
e^{-(Q_A(w)+1)s^2/2 + is\, \eta\, A w^T}\,
e^{g_k(s/\sqrt{k})}\,\chi\left(\frac{ s}{\sqrt{k}\,a_k}\right)\, ds
\end{equation}
where $\chi$ is the characteristic function of $[-1,1]$.
We claim that 
\begin{equation}\label{bdedConv}
e^{g_k(s/\sqrt{k})}\,\chi\left(\frac{ s}{\sqrt{k}\,a_k}\right) \ \text{is uniformly bounded and converges to one }  
\ \forall s\in\bbR.
\end{equation}
To see this, observe first that
the support of $\chi\left(\frac{ s}{\sqrt{k}\,a_k}\right)$
is equal to the set of $s$ such that
\begin{equation}\label{range(s)}
|s|\leq C^{-1/2} \log(k)^{1/2},
\end{equation}
which inequality implies that $\frac{|s|^j}{\sqrt{k}}\leq  \frac{\log(k)^{j/2}}{\sqrt{k}} $
for $j=0,1\ldots$, since $C<1$.
Then, since
\begin{equation}\label{theg_ks}
g_k(s/\sqrt{k})  = \left[i s^3 G(s/\sqrt{k}) + s^2 H(s/\sqrt{k}) +is F(s/\sqrt{k})\right]
\frac{1}{\sqrt{k}},
\end{equation}
for all $s$ in the support of $\chi\left(\frac{ s}{\sqrt{k}\,a_k}\right)$
$g_k(s/\sqrt{k})$ is uniformly bounded by a constant times $\frac{\log(k)^{3/2}}{\sqrt{k}}$, which
tends to zero.

By (\ref{bdedConv}) and the Lebesgue dominated convergence theorem, 
$\sqrt{k}I_2$ converges to the right-hand side of (\ref{wTS}).
It remains to estimate the rate of convergence.
Let us define
$\calE(s,k) := e^{g_k(s/\sqrt{k})}-1$
for $s$ satisfying (\ref{range(s)}) and zero otherwise, so that
\[
e^{g_k(s/\sqrt{k})}\chi\left(\frac{ s}{\sqrt{k}\,a_k}\right) = 
\chi\left(\frac{ s}{\sqrt{k}\,a_k}\right) \left[1+ \calE(s,k)\right].
\]
Applying Taylor's theorem to $|\calE|^2$ near $s=0$, for each $k$, one gets
\begin{equation}\label{}
|\calE(s,k)|^2 = \frac{2s}{\sqrt{k}}\Re
\left[ g'_k(b/\sqrt{k})\left(
e^{\overline{g_k}(b/\sqrt{k})}-1
\right)
\right]\leq \frac{2|s|}{\sqrt{k}}\,
\left| g'_k(b/\sqrt{k})\left(
e^{\overline{g_k}(b/\sqrt{k})}-1
\right)
\right|
\end{equation} 
for $|s|<C^{-1}\log(k)^{1/2}$ and
where $b = b(s)$ is between zero and $s$, and therefore $|b(s)|\leq C^{-1}\log(k)^{1/2}$.
From this and (\ref{theg_ks}) it follows that 
\[
\left| g'_k(b/\sqrt{k})\right|\leq\frac{C_1}{\sqrt{k}}\quad\text{and}\quad 
\left|e^{\overline{g_k}(b/\sqrt{k})}-1 \right|\leq C_2
\]
for some constants $C_j>0$, for each $s$ satisfying (\ref{range(s)}).  Therefore $\exists C_3>0$ such that
\begin{equation}\label{estCalE}
|\calE(s,k)|\, \chi\left(\frac{ s}{\sqrt{k}\,a_k}\right) \leq \frac{C_3}{\sqrt{k}}
\end{equation}
for all $s\in\bbR$ and for all $k\in\bbN$.

	Substituting back into $I_2$, we get that $I_2 = J_1 + J_2$ where
	\[
	J_1:= \frac{e^{[Q_A(\eta)-|\eta|^2]/2}}{2\pi\sqrt{k}}\int_{-\infty}^{\infty}
	e^{-(Q_A(w)+1)s^2/2 + is\, \eta\, A w^T}\,
	\chi\left(\frac{ s}{\sqrt{k}\,a_k}\right)\, ds
	\]
	and
	\[
	J_2:= \frac{e^{[Q_A(\eta)-|\eta|^2]/2}}{2\pi\sqrt{k}}\int_{-\infty}^{\infty}
	e^{-(Q_A(w)+1)s^2/2 + is\, \eta\, A w^T}\,
	\calE(s,k)\,\chi\left(\frac{ s}{\sqrt{k}\,a_k}\right)\, ds.
	\]
	We now use the classic estimate
$
	\frac{1}{\sqrt{\pi}}\int_{-x}^x e^{-s^2}\, ds =  1 - \frac{e^{-x^2}}{x\sqrt{\pi}} + 
	O(\frac{e^{-x^2}}{x^2} )
$
to conclude that 
	\[
	J_1 = \frac{e^{[Q_A(\eta)-|\eta|^2]/2}}{2\pi\sqrt{k}}\int_{-\infty}^{\infty}
	e^{-(Q_A(w)+1)s^2/2 + is\, \eta\, A w^T}\,ds + O(1/k^{ 1/C}\,\log(k)^{1/2})
	\]
	and, using (\ref{estCalE}), that
$|J_2|\leq \frac{D}{k}$
	where $D$ is a constant that depends on $\eta$.
Given that $C<1$ we can conclude that
\begin{equation}\label{estimateI_2}
I_2 = \frac{1}{2\pi\sqrt{k}}e^{-|\eta|^2/2}  \int_{-\infty}^{\infty} e^{Q_A(isw + \eta)/2} \, e^{-s^2/2}\,ds + O(1/{k}).
\end{equation}
In view of (\ref{orderI3})
\[
\Upsilon_A(\eta, k) = \frac{1}{2\pi\sqrt{k}}e^{-|\eta|^2/2}  \int_{-\infty}^{\infty} e^{Q_A(isw + \eta)/2} \, e^{-s^2/2}\,ds + O(1/{k}),
\]
and the proof is complete.

\hfill{$\Box$}

\subsection{Inner product estimates}
In this section we prove (\ref{innerEstimate}), namely:

{\em Let $A, B \in \mathcal{D}_N, w \in S^{2N-1}$ and $\eta   \in \calH_w$, then
	\begin{equation*}\tag{\ref{innerEstimate}}
	\inner{\Psi_{A, w}}{\Psi_{B, w}}_{\calBk_{\bbC\bbP^{N-1}}}
	= \frac{2\pi}{k^N}\int_{\calH_w}\sigma_A(\eta)\,\overline{\sigma_B}(\eta) \, dL(\eta) + O(k^{-N-1}).
	\end{equation*}
}
\begin{proof}
By equivariance, without loss of generality we can take $w = (1,\vec{0})$.  We
introduce  a standard parametrization of a dense open set  $\calU\in\bbC\bbP^{N-1}$, 
containing the point $\varpi = \pi(w)$, namely, the
set $\calU$ which is the complement to the hyperplane  $\{z_1= 0\}$.  One identifies
$\calU\cong\bbC^{N-1}_\zeta$ by the coordinates
\begin{equation}\label{theCoordinates}
\zeta_j = \frac{z_{j+1}}{z_1}, \quad j=1,\ldots, N-1.
\end{equation}
Define next a section of $\pi:S^{2N-1}\to\bbC\bbP^{N-1}$ over $\calU$ by 
\begin{equation}\label{theSection}
S_\varpi: \bbC^{N-1}\to S^3,\quad S_\varpi(\zeta) = \frac{1}{\sqrt{1+|\zeta|^2}}\, (1, \zeta).
\end{equation}
Note that $\varpi$ corresponds to the origin $\zeta =0$, and $S_\varpi(0)= w$.


	The left-hand side of (\ref{innerEstimate}) 
	is an integral over $S^{2N-1}$ of a function that is $S^1$ invariant.
	Therefore, we can compute it (up to a factor of $2\pi$) by pulling it back by the section
	$S_\varpi$ and integrating with respect to the appropriate measure on $\bbC^{N-1}$.  
	A calculation 
	shows that 	
\[
	\inner{\Psi_{A, w}}{\Psi_{B, w}}_{\calBk_{\bbC\bbP^{N-1}}}
	=2\pi\int_{\bbC^{N-1}} \Psi_{A, w}(S_\varpi (\zeta)) \, \overline{\Psi_{B, w}} (S_\varpi (\zeta))  \frac{dL(\zeta)}{(1+|\zeta|^2)^N}  = \text{I + II}
\]
	where
$
	\text{I} = \int_{|\zeta|\leq 1} \Psi_{A, w}(S_\varpi (\zeta)) \, \overline{\Psi_{B, w}} (S_\varpi (\zeta))  \frac{dL(\zeta)}{(1+|\zeta|^2)^N}
$
	and II is the integral of the same integrand over $|\zeta|>1$.

We will show that II is rapidly decreasing.
	We first find a bound for $ \left| \Psi_{A, w}(S_\varpi (\zeta)) \right|$.
	To begin with, 
	\begin{align*}
	\left|\Psi_{A, w}(S_\varpi(\zeta)) \right| 
	&\leq \frac{1}{2 \pi} \int_0^{2 \pi} \left|\psi_{A,w}(e^{it} S_\varpi(\zeta)) \right| \, dt \\
	& = \frac{1}{2 \pi} \int_0^{2 \pi} \left|e^{kQ_A(e^{it} S_\varpi(\zeta)-w)/2} e^{-k|e^{it} S_\varpi(\zeta)-w|^2/2} e^{ik \omega (e^{it}S_\varpi(\zeta),w)/2}\right|  \, dt  \\
	&= \frac{1}{2 \pi} \int_0^{2 \pi} e^{k \Re [Q_A(e^{it} S_\varpi(\zeta)-w)/2]} e^{-k|e^{it}S_\varpi(\zeta)-w|^2/2}  \, dt \\
	&=  \frac{1}{2 \pi} \int_0^{2 \pi} e^{-k  \widetilde Q_A(e^{it} S_\varpi(\zeta)-w)/2} \, dt 
	\end{align*}
	where  $\widetilde{Q}_A(z) :=  -\Re (Q_A(z))+ |z|^2$ is a real positive definite quadratic form. 
	Denote by $c_A >0 $ the smallest eigenvalue of $Q_A$.  Then $\forall z$, $\widetilde{Q}_A(z) \geq c_A |z|^2$.
%
%
Hence 
	\begin{align*}
	\left|\Psi_{A, w}(S_\varpi(\zeta)) \right| 
	&\leq \frac{1}{2 \pi} \int_0^{2 \pi} e^{-kc_A|e^{it} S_\varpi(\zeta)-w|^2/2} \, dt  
	=  \frac{1}{2 \pi} \int_0^{2 \pi} e^{-kc_A| S_\varpi(\zeta)-e^{-it}w|^2/2} \, dt  \\
	&\leq \max _{t\in[0,2\pi]} e^{-kc_A| S_\varpi(\zeta)-e^{-it}w|^2/2} 
	= e^{-kc_A \min_{t\in[0,2\pi]}| S_\varpi(\zeta)-e^{-it}w|^2/2 }
	= e^{-kc_A (1-\rho(\zeta))}.
	\end{align*}
	where $\rho(\zeta) = (1+|\zeta|^2)^{-1/2}$. This last step results from the fact that
	\begin{equation*}
	|	S_\varpi(\zeta) - e^{-it}w|^2 = |(\rho-e^{-it}, \rho \zeta)|^2 = |\rho-e^{-it}|^2 + \rho^2 |\zeta|^2
	\end{equation*}
	which is minimized at $t=0$, and 
$ |\rho-1|^2 + \rho^2 |\zeta|^2= \rho^2(1+|\zeta|^2) +1- 2\rho = 1+1-2\rho = 2(1-\rho)$.

	All in all, we have $\left|\Psi_{A, w}(S_\varpi(\zeta)) \right|  \leq e^{-kc_A (1-\rho(\zeta))}$ and by similar analysis, we obtain \\ $\left| \Psi_{B, w}(S_\varpi(\zeta)) \right|  \leq e^{-kc_B (1-\rho(\zeta))}$ for some $c_B >0$. Therefore, 
	\begin{equation*}
	|\text{II}| \leq  2 \pi \int_{|\zeta| > 1} \left| \Psi_{A, w} (S_\varpi (\zeta)) \right| \;  \left| { \Psi_{B, w} } (S_\varpi (\zeta)) \right|  \frac{dL(\zeta)}{(1+|\zeta|^2)^N} \leq 2 \pi  \int_{|\zeta| > 1} e^{-k(c (1-\rho(\zeta))} \frac{dL(\zeta)}{(1+|\zeta|^2)^N}
	\end{equation*}
	where $c:= c_A + c_B$. If we change to polar coordinates, then $r = |\zeta|$ and $1-\rho(\zeta) = 1-(1+r^2)^{-1/2}$, so 
	\begin{align*}
	|\text{II} |\leq 2\pi \cdot (2 \pi)^{N-1} \int_{r=1}^\infty e^{-kc\left( 1 - \frac{1}{\sqrt{1+r^2}}\right)} \frac{r^{2N-3} \, dr}{(1+r^2)^N} 
	\leq C e^{-kc\left(1-\frac{1}{\sqrt{2}}\right)} 
	\end{align*}
 where $C >0$, an thus II tends to zero rapidly as $k \to \infty$.
	
	\medskip
	Now let's consider the integral I. 
	We change variables to $\zeta = \eta/\sqrt{k}$, so that $|\eta| \leq \sqrt{k}$ provided $|\zeta| <1$. Thus, 
	\begin{align*}
		|\text{I}|
		&\leq \frac{2\pi}{k^{N-1}} \int_{\C^{N-1}}  \left| \Psi_{A, w} (S_\varpi(\eta/\sqrt{k})) \right|  \left| \Psi_{B, w}(S_\varpi(\eta/\sqrt{k})) \right|  \chi(|\eta|/\sqrt{k}) \frac{dL(\eta)}{(1+|\eta|^2/k)^N} \\
		&= \frac{2\pi}{k^{N-1}} \int_{ \C^{N-1}}  \left| \Upsilon_A(\eta,k) \right|  \left| \Upsilon_B(n,k)\right|  \chi(|\eta|/\sqrt{k}) \frac{dL(\eta)}{(1+|\eta|^2/k)^N} 
	\end{align*}
	where $\chi$ is a cutoff function. We define 
	\begin{equation*}
	f_k(\eta) := \left| \Upsilon_A(\eta,k) \right|  \left| \Upsilon_B(n,k)\right|   \frac{\chi(|\eta|/\sqrt{k})}{(1+|\eta|^2/k)^N}. 
	\end{equation*}
	Now $f_k(\eta)>0$ is a sequence in $L^1(\C^{N-1}, dL)$ and $\exists \: c,C >0$ such that $f_k(\eta)$ is dominated by $Ce^{-c|\eta|^2}$, $\forall k, \eta$ such that $|\eta| \leq \sqrt{k}$. Moreover, $f_k(\eta)$ converges to $|\Upsilon_A(\eta,k)||\Upsilon_B(\eta,k)|$ pointwise as $k\to \infty$, so by the Dominated Convergence Theorem and by part 2 of Theorem 1.3,
	\begin{equation*}
	\inner{\Psi_{A, w}}{\Psi_{B, w}}_{\calBk_{\bbC\bbP^{N-1}}}
	= \frac{2\pi}{k^{N}} \int_{\bbC^{N-1}}\sigma_A(\eta)\,\overline{\sigma_B}(\eta) \, dL(\eta) + O(k^{-N-1}).
	\end{equation*}
	The additional factor of $1/k$ comes from the definition $\lim\limits_{k\to \infty} \Upsilon_A(\eta,k)= \sigma_A ({\eta}) /\sqrt{k}$, and similarly for $\Upsilon_B(\eta, k)$.
\end{proof}
Note: In the case where $B=A$, we have the norm of the reduced state in $\calBk_{\bbC\bbP^{N-1}}$ in terms of the $L^2-$norm of its symbol:
\begin{equation*}
\Vert \Psi_{A,w} \Vert_{\calB^{(k)}_{\bbC \bbP^{N-1}}}^2 
= \frac{2 \pi}{k^N} \int_{  \C^{N-1}} |\sigma_A(\eta)|^2 dL(\eta) + O(k^{-N-1}).
\end{equation*}

\begin{corollary}
	If $A,B \in \calD_N$ are such that $\sigma_A = \sigma_B$, then 
	\begin{equation*}
	\Vert  \Psi_{A,w} -  \Psi_{B,w}  \Vert^2_{\calBk_{\bbC\bbP^{N-1}}} =  O(k^{-N-1}).
	\end{equation*}
\end{corollary}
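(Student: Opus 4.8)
The plan is to expand the squared norm of the difference by sesquilinearity and then apply the inner product estimate (\ref{innerEstimate}) term by term. Writing
\[
\norm{\Psi_{A,w}-\Psi_{B,w}}^2 = \inner{\Psi_{A,w}}{\Psi_{A,w}} - \inner{\Psi_{A,w}}{\Psi_{B,w}} - \inner{\Psi_{B,w}}{\Psi_{A,w}} + \inner{\Psi_{B,w}}{\Psi_{B,w}},
\]
I would apply (\ref{innerEstimate}) to each of the four inner products. The first and last are the diagonal ($B=A$, $A=B$) specializations already recorded in the Note preceding the corollary; the two cross terms come from (\ref{innerEstimate}) directly and with the roles of $A$ and $B$ interchanged, noting $\inner{\Psi_{B,w}}{\Psi_{A,w}} = \overline{\inner{\Psi_{A,w}}{\Psi_{B,w}}}$, which matches the complex conjugate of the corresponding symbol integral.

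Next I would collect the four leading contributions over the common prefactor $2\pi/k^N$. Their integrands combine to
\[
\sigma_A\overline{\sigma_A} - \sigma_A\overline{\sigma_B} - \sigma_B\overline{\sigma_A} + \sigma_B\overline{\sigma_B} = |\sigma_A - \sigma_B|^2,
\]
so the leading term is $\frac{2\pi}{k^N}\int_{\calH_w}|\sigma_A(\eta)-\sigma_B(\eta)|^2\, dL(\eta)$. The hypothesis $\sigma_A=\sigma_B$ forces this integral to vanish identically. Since each of the four applications of (\ref{innerEstimate}) contributes a remainder of size $O(k^{-N-1})$, and a finite sum of such remainders is again $O(k^{-N-1})$, the leading term drops out and one is left with $\norm{\Psi_{A,w}-\Psi_{B,w}}^2 = O(k^{-N-1})$, as claimed.

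There is essentially no analytic obstacle here: all of the hard work has been front-loaded into the inner product estimate (\ref{innerEstimate}), and the corollary is a formal consequence of its sesquilinear structure. The only points demanding a moment of care are bookkeeping: verifying that the cross terms conjugate correctly so that the integrand assembles into the manifestly nonnegative $|\sigma_A-\sigma_B|^2$, and confirming that summing four independent $O(k^{-N-1})$ errors does not degrade the order of the estimate. One might also remark that the statement is exactly what is expected on general grounds: by Remark \ref{Remark2} the symbol is the leading-order invariant of a reduced state, so two states with equal symbols must agree beyond leading order, and the corollary records the quantitative form of this principle.
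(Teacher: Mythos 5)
Your proposal is correct and follows essentially the same route as the paper: both expand the squared norm sesquilinearly (the paper writes the cross terms as $-2\Re\inner{\Psi_{A,w}}{\Psi_{B,w}}$, you keep all four inner products separately), apply the inner product estimate (\ref{innerEstimate}) to each term, and observe that the leading integrands assemble into $|\sigma_A-\sigma_B|^2$, which vanishes under the hypothesis. The bookkeeping points you flag (conjugation of the cross term, summing finitely many $O(k^{-N-1})$ errors) are handled implicitly in the paper's proof and present no issue.
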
	

\begin{proof}
	Applying the polarization identity and the previous result,
	\begin{align*}
	\Vert   \Psi_{A,w}  -  \Psi_{B,w}  \Vert^2_{\calBk_{\bbC\bbP^{N-1}}}
	&= \Vert  \Psi_{A,w}  \Vert^2_{\calBk_{\bbC\bbP^{N-1}}} + \Vert  \Psi_{B,w}  \Vert^2_{\calBk_{\bbC\bbP^{N-1}}} 
	- 2 \Re \inner{ \Psi_{A,w} }{ \Psi_{B,w}}_{\calBk_{\bbC\bbP^{N-1}}} \\
	&= \frac{2 \pi}{k^N} \int_{\C^{N-1}} \left(  |\sigma_A(\eta)|^2 +  |\sigma_A(\eta)|^2 - 2 \Re (\sigma_A(\eta) \overline{\sigma_B}(\eta) ) \right)  dL(\eta) + O(k^{-N-1}) \\
	&= \frac{2\pi}{k^N} \int_{\C^{N-1}} \left| \sigma_A(\eta) - \sigma_B(\eta)\right|^2 dL(\eta) + O(k^{-N-1}) \\
	&= O(k^{-N-1})
	\end{align*}
	since $ \sigma_A = \sigma_B$. 
\end{proof}

\medskip\noindent
{\em Proof of Proposition \ref{n2-case}.}  The second equality in (\ref{ketMu}) is a straightforward calculation, starting with 
(\ref{exactRedState}).   If $\mu = \rho_{(1,0)}(A)$, then $\ket{0,\mu}$ and $\Psi_{A,w}$ have the same symbol, and
the proposition follows from the previous corollary.

\subsection{The geometry behind the definition of the symbol}\label{subsec:GeomSymb}
The goal of this section is to discuss the notion of symbol of a coherent state in a 
general context of K\"ahler quantization.  
It is not logically needed in the proofs of our main results, but (hopefully) it sheds 
some light on the meaning of the symbol.

Intuitively, the symbol captures the asymptotic behavior of the state in a neighborhood of size $O(1/\sqrt{k})$ of its center.
It therefore interpolates between the behavior described by Theorem \ref{thm:stationary_proj} 
and part (1) of Theorem \ref{Main}.  
As a mathematical object, the symbol is a Schwartz function on the tangent space at the center of the state.  
Roughly speaking 
it arises by performing the rescaling $z=w+\frac{\eta}{\sqrt{k}}$ in suitable coordinates, where $w$ is
the center of the state, and taking the leading term as $k\to\infty$.
The result is a function of $\eta$.
An example is
of course (\ref{whatTheSymbolIs}), where it is crucial that $\eta$ is in the horizontal subspace $\calH_w$.

\subsubsection{Generalities on quantized K\"ahler manifolds}


Recall that a K\"ahler manifold $M$ 
is a complex manifold with a symplectic form $\omega$ which is
of type $(1,1)$, and such that the symmetric tensor
\[
g(u,v) := \omega(u, J(v)),\qquad J:TM\to TM\ \text{the complex structure}
\]
is positive definite.  

We begin by quoting the following theorem (see \S 7 of Chapter 0 in \cite{GH}):
\begin{theorem}
If $M$ is a K\"ahler manifold and $w\in M$, there exists a holomorphic coordinate system
$(z_1,\ldots ,z_N)$ centered at $w$ and such that the symplectic form near $w$ satisfies
\begin{equation}\label{}
\omega = i\sum_j dz_j \wedge d\zbar_j + O([2])
\end{equation}
where $O([2])$ designates a form whose components vanish quadratically at $w$.
\end{theorem}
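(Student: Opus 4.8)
The plan is to realize $\omega$ locally as $i\partial\dbar\phi$ for a real K\"ahler potential $\phi$, and then normalize $\phi$ by successive holomorphic coordinate changes until the metric osculates the flat one to second order. Since $\omega$ is a real, $d$-closed form of type $(1,1)$, on a small ball around $w$ the local $\partial\dbar$-lemma (Poincar\'e plus Dolbeault) provides a real smooth function $\phi$ with $\omega = i\partial\dbar\phi$, so that in any holomorphic chart $\omega = i\sum_{j,k} h_{j\bar k}\,dz_j\wedge d\zbar_k$ with $h_{j\bar k} = \partial_j\partial_{\bar k}\phi$ a positive-definite Hermitian matrix. The goal then becomes to find holomorphic coordinates centered at $w$ in which $h_{j\bar k} = \delta_{jk} + O([2])$.

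First I would strip off the pluriharmonic part of $\phi$. Expanding $\phi$ in a Taylor series at $w$, the monomials involving only the $z_j$'s or only the $\zbar_j$'s are annihilated by $\partial\dbar$, hence contribute nothing to $\omega$; subtracting this pluriharmonic piece, I may assume every monomial of $\phi$ carries at least one holomorphic and one antiholomorphic factor. The lowest-order term is then the Hermitian quadratic form $\sum h_{j\bar k}(w)\,z_j\zbar_k$, and a single complex-linear change of coordinates (diagonalizing the positive-definite Hermitian matrix $h_{j\bar k}(w)$ and rescaling) brings it to $|z|^2 = \sum_j z_j\zbar_j$.

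The key step is to remove the cubic terms, which are exactly what obstruct $h_{j\bar k} = \delta_{jk}$ to first order. After the linear normalization the mixed cubic part of $\phi$ has the form $P + \bar P$ with $P = \sum_{a,b,c} C_{ab\bar c}\,z_a z_b\zbar_c$ (symmetric in $a,b$), and $\partial_j\partial_{\bar k}P$ is precisely the linear-in-$z$ part of the metric. I would introduce a holomorphic quadratic change of variables $z_j = \tilde z_j + \tfrac12\sum_{a,b}\Gamma^{j}_{ab}\,\tilde z_a\tilde z_b + O([3])$ with $\Gamma^j_{ab}$ symmetric in $a,b$. A direct computation shows that the cubic contribution this induces in $|z|^2$ is $\tfrac12\sum_j(\Gamma^j_{ab}\tilde z_a\tilde z_b\bar{\tilde z}_j + \text{c.c.})$, a mixed cubic of exactly the same symmetry type as $P + \bar P$; hence one solves $\tfrac12\Gamma^c_{ab} = -C_{ab\bar c}$ to cancel the entire cubic part at once, reality of $\phi$ ensuring the conjugate terms cancel simultaneously.

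Having achieved $\phi = |z|^2 + O([4])$ with the remaining terms mixed of total degree $\geq 4$, the metric satisfies $h_{j\bar k} = \partial_j\partial_{\bar k}\phi = \delta_{jk} + O([2])$, so that $\omega - i\sum_j dz_j\wedge d\zbar_j = i\sum_{j,k}(h_{j\bar k}-\delta_{jk})\,dz_j\wedge d\zbar_k$ has coefficients vanishing to second order at $w$, which is the assertion. I expect the only real subtlety to be the cubic-cancellation step: it works precisely because the K\"ahler condition funnels $\phi$ into mixed monomials, so that the free symmetric parameters $\Gamma^j_{ab}$ of a quadratic holomorphic substitution match the symmetry type of the obstructing cubic coefficients $C_{ab\bar c}$ one-for-one. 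This is the germ of Bochner's normal coordinates, and the argument stops at order three because only the first derivatives of the metric are required to vanish.
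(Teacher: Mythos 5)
Your proof is correct, but there is no proof in the paper to compare it against: the paper quotes this theorem without proof from \S 7 of Chapter 0 of \cite{GH}. What you have written is the standard construction of Bochner (K\"ahler normal) coordinates, organized through a potential: the local $\partial\dbar$-lemma converts closedness of the real $(1,1)$-form $\omega$ into $\omega = i\partial\dbar\phi$; pluriharmonic terms of $\phi$ are invisible to $\partial\dbar$; a linear change normalizes the Hermitian quadratic part to $|z|^2$; and a quadratic holomorphic substitution with $\Gamma^c_{ab} = -2C_{ab\bar c}$ cancels the mixed cubic part. The proof in \cite{GH} differs in one structural respect: it works directly on the metric coefficients $h_{j\bar k}$, and there the K\"ahler condition $d\omega = 0$ must be invoked explicitly to obtain the symmetry $\partial_a h_{b\bar k}(w) = \partial_b h_{a\bar k}(w)$, without which the symmetric coefficients $\Gamma^c_{ab}$ of a quadratic substitution could not absorb the linear part of the metric. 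In your version that symmetry is automatic, since $C_{ab\bar c}$ is proportional to $\partial_a\partial_b\partial_{\bar c}\phi(w)$ and mixed partials commute; the K\"ahler hypothesis is spent once and for all in producing the potential. The two routes are thus essentially equivalent in content; yours buys cleaner bookkeeping (a single scalar function rather than a matrix of Taylor coefficients) at the price of invoking the local $\partial\dbar$-lemma. One small point of hygiene: $\phi$ is only smooth, so its Taylor series need not converge, and ``subtracting the pluriharmonic piece'' should be read as a statement about the $3$-jet of $\phi$ at $w$ --- equivalently, pure holomorphic and antiholomorphic monomials contribute nothing to $\partial_j\partial_{\bar k}\phi$, so they can simply be ignored. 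This is harmless, since your argument uses only derivatives of $\phi$ up to order three at $w$.
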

We will say that such a coordinate system is {\em adapted} to $w$.

Let us now introduce $\calL \to M$ a Hermitian holomorphic line bundle with connection $\nabla$
with curvature the symplectic form $\omega$. (The precise meaning of this will be recalled soon.)
Let $\calU\subset M$ be an open set, and $s: \calU \to \calL|_\calU$ a local trivialization
of constant length equal to one.  Using $s$ we identify sections of $\calL|_\calU$ with
$C^\infty(\calU, \bbC)$.  If we let $\alpha\in \Omega^1(\calU)$ be the one-form on $\calU$
such that
\[
\nabla s = -is\otimes\alpha,
\]
then $\alpha$ is real-valued and we can identify
$\nabla = d -i\alpha$.
The precise relationship that we assume between the connection and the symplectic form
is that
\begin{equation}\label{}
d\alpha = \omega|_\calU.
\end{equation}

\begin{lemma}
For any $w\in M$ and any holomorphic coordinate system $(z_1,\ldots , z_N)$ adapted to $w$,
there exists a local unitary trivialization near
$w$ such that the corresponding connection form $\alpha$  satisfies
\begin{equation}\label{}
\alpha = \frac{i}{2}\sum_j z_j\, d\zbar_j - \zbar_j\, dz_j +O([2]).
\end{equation}
\end{lemma}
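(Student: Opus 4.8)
The plan is to produce the desired frame by starting from an arbitrary unitary trivialization and correcting it by a unit-modulus gauge transformation. First I would pick any local frame $s_0$ of $\calL$ of constant length one near $w$ (normalize any nonvanishing local section), and let $\alpha_0$ be its connection form, $\nabla s_0 = -is_0\otimes\alpha_0$. Since $s_0$ has constant length and $\nabla$ is metric, $\alpha_0$ is real, and by hypothesis $d\alpha_0 = \omega$. Writing $\alpha_\infty := \frac{i}{2}\sum_j(z_j\,d\zbar_j - \zbar_j\,dz_j)$ for the target form, a one-line computation gives $d\alpha_\infty = i\sum_j dz_j\wedge d\zbar_j$, which by adaptedness of the coordinates equals $\omega$ up to $O([2])$.

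Next I would set $\beta := \alpha_0 - \alpha_\infty$, a real one-form with $d\beta = \omega - i\sum_j dz_j\wedge d\zbar_j = O([2])$, and Taylor-expand it at $w$ as $\beta = \beta^{(0)} + \beta^{(1)} + O([2])$, where $\beta^{(0)}$ has constant and $\beta^{(1)}$ linear coefficients. Because the constant-coefficient part of $d\beta$ is exactly $d\beta^{(1)}$ while $d\beta = O([2])$, one gets $d\beta^{(1)} = 0$. Thus $\beta^{(1)}$ is a closed one-form with linear coefficients, and the Poincar\'e homotopy operator yields a real homogeneous quadratic $g$ with $dg = \beta^{(1)}$; likewise $\beta^{(0)} = d\ell$ for a real linear $\ell$. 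Setting $f := \ell + g$ (real) gives $\beta = df + O([2])$.

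Finally I would replace $s_0$ by $s := e^{if}s_0$. As $f$ is real this is again a unit-length frame, and its connection form is $\alpha = \alpha_0 - df$; hence $\alpha = \alpha_\infty + \beta - df = \alpha_\infty + O([2])$, which is the claim.

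The step I expect to need the most care is the bookkeeping that reality of $f$ is automatic (so $e^{if}$ is genuinely unitary and the new frame keeps constant length one), together with checking that the only obstruction to writing $\beta = df$ modulo $O([2])$, namely the constant part of $d\beta$, is killed precisely by the adaptedness of the coordinates. Alternatively, one can run the argument through a local K\"ahler potential: writing $\omega = i\partial\dbar\phi$ with $\phi$ the potential determined by the Hermitian metric in a holomorphic frame, one has $\alpha = \frac{i}{2}(\dbar\phi - \partial\phi)$, and changing the holomorphic frame by $e^{-h}$ with $h$ holomorphic shifts $\phi$ by the pluriharmonic function $2\Re(h)$; since $\partial\dbar(\phi - |z|^2) = O([2])$ forces the Taylor part of $\phi - |z|^2$ of degree $\le 2$ to be pluriharmonic, one can arrange $\phi = |z|^2 + O([3])$ and read off the stated form of $\alpha$ directly.
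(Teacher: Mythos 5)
Your proposal is correct and follows essentially the same route as the paper: both correct an arbitrary unit-length frame by a gauge factor $e^{if}$ with $f$ real, choosing the $1$-jet of $f$ to kill the constant term of the connection form and its $2$-jet to kill the exact part of the linear term, with $d\alpha=\omega$ plus adaptedness of the coordinates forcing what remains to be $\frac{i}{2}\sum_j\left(z_j\,d\zbar_j-\zbar_j\,dz_j\right)$. The only difference is bookkeeping: you subtract the model form and invoke the Poincar\'e lemma on the closed linear part, while the paper decomposes the Jacobian matrix of $\alpha$ at $w$ into its symmetric and antisymmetric parts --- the same linear algebra in different clothing.
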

We will say that such a trivialization is {\em adapted} to $w$.  To our knowledge this
notion was introduced in \cite{BSZ}, in a more general context, under the name ``preferred frame".
\begin{proof}
Starting with any trivialization $s$, any other unitary trivialization is of the form
$t = e^{-if}s$ where $f$ is a smooth real-valued function on $\calU$.  Since
\[
\nabla t = -ie^{-if}s\otimes df+ e^{-if}\nabla s = -ie^{-if} s\otimes\left(df + \alpha\right),
\]
the connection form associated with $t$ is $\beta= \alpha + df$.  We will choose $f$
appropriately.  First, we choose $f$ so that $df_w = -\alpha_w$, which ensures that
the connection form associated with $t$ vanishes at $w$. 
Next, introduce holomorphic coordinates adapted to $w$, $(z_1,\ldots, z_N)$, and write their real and imaginary
parts as 
$z_j = \frac{1}{\sqrt{2}}\left(x_j -iy_j\right)$ .
Then $\omega_w = \sum_j dy_j\wedge dx_j|_w$.

Let us write
$(u_1,\ldots, u_{2N}) = (x_1, \dots , x_{N}, y_1,\ldots , y_N)$,
$\alpha = \sum_j \alpha_j du_j$ and $ A= (A_{ij})= \begin{pmatrix}
\frac{\partial \alpha_i}{\partial u_j}(0)
\end{pmatrix}$.  Note that the condition $d\alpha = \omega$ implies
\[
A-A^T = 
\begin{pmatrix}
0 & I \\
-I & 0
\end{pmatrix} =:J.
\]
Choose the second derivatives of $f$ at zero
to be 
\[
\begin{pmatrix}
\frac{\partial^2 f}{\partial u_i\partial u_j}(0)
\end{pmatrix}
= -\frac 12\left(A+A^T\right),
\]
and let $\beta = \alpha + df = \sum_j \beta_j du_j$.  Then
\[
\begin{pmatrix}
\frac{\partial \beta_i}{\partial u_j}(0)
\end{pmatrix}
= A -\frac 12\left(A+A^T\right) = \frac 12\left(A - A^T\right) = \frac 12\,J.
\]
These conditions determine the first and second derivatives of $f$ at the origin, and 
\[
\beta \equiv \frac 12\sum_{j=1}^N y_j\,dx_j - x_j\,dy_j = 
\frac{i}{2}\sum_j z_j\, d\zbar_j - \zbar_j\, dz_j
\]
modulo a one-form whose coefficients vanish quadratically at $w$.
\end{proof}

\subsubsection{Definition of symbols}

As motivation for the general definition, let us begin with an example and investigate the notions of the previous section
for $M=\bbC^N$.
The bundle is trivial and the connection is given by the global form
$\frac{i}{2}\sum_j z_j\, d\zbar_j - \zbar_j\, dz_j$, where $(z_1,\ldots, z_N)$ are the ordinary coordinates.
Fix $w\in\bbC^N$.  Then  
$\eta_j := z_j - w_j$
are adapted coordinates.
 We claim that the trivialization
\begin{equation}\label{}
s_w(z):= e^{i\Im(z\wbar^T)}
\end{equation}
is adapted to $w$.  It is clearly unitary, and
a calculation shows that (see (\ref{connection}))
\[
\nabla s_w = \frac 12\left( (\wbar-\zbar)dz + (z-w)d\zbar\right)\, s_w,
\]
so the connection form associated with $s_w$ is exactly
\[
\frac i2 \sum \eta_j d\overline{\eta}_j - \overline{\eta}_j d\eta_j.
\]
In terms of this section, a Gaussian coherent state centered at $w$ is of the form
\begin{equation}\label{}
\psi_{A,w}(z) = e^{kQ_A(z-w)/2}\, e^{-k|z-w|^2/2}\, s_w(z).
\end{equation}
In adapted coordinates
\[
\frac{\psi_{A,w}}{s_w^k}(\eta) = e^{kQ_A(\eta)/2}\, e^{-k|\eta|^2/2}
\]
Note that rescaling $\eta$ by $1/\sqrt{k}$ results in a $k$-independent function
(in general we will have to take the limit as $k\to\infty$).
We now define:
\begin{definition}\label{def:GaussianSymbol}
The symbol of the coherent state $\psi_{A,w}$ is the function of $\eta\in\bbC^N$
\begin{equation}\label{symbUpstairs}
\sigma_{\psi_{A,w}}(\eta) = \frac{\psi_{A,w}}{s_w} \left( \frac{1}{\sqrt{k}}\eta\right) = e^{Q_A(\eta)/2 - |\eta|^2/2}.
\end{equation}
\end{definition}

Back to the general context,
let $\psi^{(k)}\in C^\infty(M, \calL^k)$ be a sequence of holomorphic sections, pick $w\in M$ and choose
holomorphic coordinates on an open set $\calU$
adapted to $w$ as well as an adapted trivialization $s_w$.  On $\calU$ we can
write
\begin{equation}\label{}
\frac{\psi^{(k)}}{s_w^k}\in C^\infty(\calU,\bbC).
\end{equation}
If $\eta$ denotes the adapted coordinates, and if $\psi$ is a coherent state 
with center at $w$, one can define its symbol as the function 
of $\eta$, if it exists, given by the leading asymptotics as $k\to\infty$ of
\begin{equation}\label{}
\Upsilon(\eta, k):= \frac{\psi^{(k)}}{s_w^k} \left( \frac{1}{\sqrt{k}}\eta\right).
\end{equation}
Below we will check that the symbol of the reduced states $\Psi_{A,w}$ is exactly obtained
in this way.  As another example, a general K\"ahler manifold carries a family of ``non squeezed"
coherent states, \cite{R}, whose symbols are the Gaussians $e^{-\norm{\eta}^2/2}$ where the
norm is the Riemannian metric, see Theorem 3.2 in \cite{BSZ}.

The definition of $\Upsilon(\eta,k)$ depends on the choices of adapted coordinates and 
trivialization.  To what extent does the leading asymptotics as $k\to\infty$ depend on 
these choices? 
If $t_w$ is another adapted trivialization, then $t_w = e^{if}s_w$ where the first and
second derivatives of $f$ vanish at $w$.  Therefore, in a given adapted coordinate system,
\[
\frac{\psi^{(k)}}{s_w^k} \left( \frac{1}{\sqrt{k}}\eta\right)= 
e^{ikf(0) + O(1/\sqrt{k})}\frac{\psi^{(k)}}{t_w^k} \left( \frac{1}{\sqrt{k}}\eta\right).
\]
Thus the ambiguity inherent in the choice of adapted section translates, asymptotically, into an overall
oscillatory factor $e^{ikf(0)}$.  (This is in agreement with the fact that the center of a coherent
state really is a point in the pre-quantum circle bundle.)  

\medskip
We show next that the general procedure outlined above agrees with the definition (\ref{whatTheSymbolIs}) of the
symbol of a reduced state.

\subsubsection{Symbols of reduced Gaussian states}

We need to explain how the previous discussion corresponds to part (3) of Theorem \ref{Main}.
We need first to clarify the way in which homogeneous polynomials
of degree $k$ on $\bbC^N$ can be seen as sections of $\LCP^k\to\bbC\bbP^{N-1}$.

As a space, $\LCP$ is the quotient of $S^{2N-1}\times\bbC$ by the equivalence relation $\sim_k$
defined as
$(e^{i\theta}z\,,\, \lambda) \sim_k (z, e^{-ik\theta}\lambda)$.
If $\psi: S^{2N-1}\to\bbC$ is any function such that 
\begin{equation}\label{homogen}
\forall e^{i\theta}\in S^1,\ z\in S^{2N-1}\qquad \psi(e^{i\theta}z) = e^{ik\theta}\psi(z),
\end{equation}
we can associate to it a section $s_\psi: \bbC\bbP^{N-1}\to\LCP^k$ by:
\begin{equation}\label{sPsi}
\bbC\bbP^{N-1} \ni \pi(w) \xrightarrow{s_\psi} [(w,\psi(w))]_k \in\LCP^k
\end{equation}
where $\pi: S^{2N-1}\to\bbC\bbP^{N-1}$ is the projection, and $[(w,\psi(w))]_k$ is the 
$\sim_k$ equivalence class of $(w,\psi(w))$.  One can easily check that (\ref{sPsi}) is well-defined:
 $\pi(w) = \pi(w')\ \Rightleftarrow\ w' = e^{i\theta}w$ for some $e^{i\theta}$, and so
 \[
 (w', \psi(w')) = (e^{i\theta}w, \psi(e^{i\theta}w)) = (e^{i\theta}w, e^{ik\theta}\psi(w)) \sim_k (w, \psi(w)).
 \] 
 One can also check that, conversely, any section of $\LCP$ is an $s_\psi$ for some
 $\psi$ as above.  The holomorphic sections correspond to $\psi$'s that are restrictions of
 holomorphic functions, and the homogeneity condition implies that they must be polynomial
 functions.

Let us return to the problem of computing the symbols of reduced Gaussian states, in the sense of
this section.
For simplicity of notation we only discuss the $N=2$ case in detail.  

Given the covariance of the construction of reduced Gaussian states $\psi_{A,w}$ with respect to the action 
of the unitary group, it suffices to analyze a particular choice of $w$.  
We take again take $w=(1,0)$, and introduce the coordinates (\ref{theCoordinates}) and the section
$S_\varpi$ given by (\ref{theSection}).
%
The latter induces unitary trivializations of all $\LCP^k$ over $\calU$, by
\begin{equation}\label{}
\calU\ni \zeta\xrightarrow{s_\varpi^k} [(S_\varpi(\zeta), 1)]_k.
\end{equation}

\begin{lemma}
	The coordinate $\zeta$ and the section $s_\varpi$ are adapted to $\zeta$.  Moreover, for any
	section $s_\psi: \calU\to\LCP^k$ where $\psi: S^{2N-1}\to\bbC$ satisfies (\ref{homogen}),
	\begin{equation}\label{expressionSection}
	\frac{s_\psi}{s_\varpi^k} = \psi\circ S_\varpi.
	\end{equation}
\end{lemma}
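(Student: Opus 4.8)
The lemma has two assertions, and I would treat them separately. For the first --- that the coordinate $\zeta$ and section $s_\varpi$ are adapted to $w=(1,0)$ --- I would verify the two defining conditions of ``adapted'' directly. First I must check that $\zeta$ is an \emph{adapted coordinate}, meaning the Fubini--Study form on $\bbC\bbP^1$ written in the $\zeta$ chart agrees with $i\,d\zeta\wedge d\zbar$ up to terms vanishing quadratically at $\zeta=0$. This is a short computation: the Fubini--Study form is $\omega = i\,\partial\delbar\log(1+|\zeta|^2)$, and expanding $\log(1+|\zeta|^2) = |\zeta|^2 + O(|\zeta|^4)$ gives $\omega = i\,d\zeta\wedge d\zbar + O([2])$ at the origin, as required. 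Second I must check that $s_\varpi$ is an \emph{adapted trivialization}, i.e.\ its connection form $\alpha$ satisfies $\alpha = \tfrac{i}{2}(\zeta\,d\zbar - \zbar\,d\zeta) + O([2])$. For this I would compute $\nabla s_\varpi$ explicitly using the pullback of the connection on $\LCP$ along $S_\varpi$, or equivalently recognize that the Hermitian metric in this frame has the form $h(\zeta) = (1+|\zeta|^2)^{-k}$ so that $\alpha$ is governed by $-\tfrac{1}{2}\,\Im\,\partial\log(1+|\zeta|^2)$-type terms; expanding to second order should reproduce the stated normal form.

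For the second assertion --- the identity $\tfrac{s_\psi}{s_\varpi^k} = \psi\circ S_\varpi$ --- I would argue purely formally from the definitions of the two sections. By construction $s_\psi(\pi(w)) = [(w,\psi(w))]_k$ and $s_\varpi^k(\zeta) = [(S_\varpi(\zeta),1)]_k$. Evaluating $s_\psi$ at the point $\pi(S_\varpi(\zeta)) = \zeta$ (using that $S_\varpi$ is a section of the Hopf fibration over $\calU$), I get $s_\psi(\zeta) = [(S_\varpi(\zeta),\,\psi(S_\varpi(\zeta)))]_k$. The quotient $s_\psi/s_\varpi^k$ is then the ratio of fiber components relative to the frame $s_\varpi^k$: since both classes have the same first coordinate $S_\varpi(\zeta)$, the equivalence relation $\sim_k$ contributes no additional scalar, and the ratio is simply $\psi(S_\varpi(\zeta)) = (\psi\circ S_\varpi)(\zeta)$. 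This is essentially a bookkeeping step once the identification $\LCP^k = (S^{2N-1}\times\bbC)/\!\sim_k$ is unwound.

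The main obstacle, I expect, is the first assertion and in particular getting the connection form of $s_\varpi$ into the precise normal form $\tfrac{i}{2}(\zeta\,d\zbar-\zbar\,d\zeta)+O([2])$ rather than merely into \emph{some} form vanishing appropriately. The subtlety is that adaptedness is a statement about the $2$-jet at $w$, so I cannot discard quadratic terms prematurely, yet the natural frame $s_\varpi$ coming from the Hopf section $S_\varpi(\zeta)=(1+|\zeta|^2)^{-1/2}(1,\zeta)$ carries a normalization factor whose logarithm contributes precisely at second order. I would therefore compute $\nabla_{S_\varpi}$ carefully by pulling back the connection $\nabla = d + \tfrac12(z\,d\zbar - \zbar\,dz)$ from $\bbC^2$ (restricted to $S^3$), tracking the $(1+|\zeta|^2)^{-1/2}$ factor, and check that after the substitution the unwanted symmetric second-order piece cancels --- this cancellation is exactly the content of the preceding lemma on existence of adapted frames, so the real work is confirming that $s_\varpi$ is one concrete realization of it rather than constructing one abstractly.
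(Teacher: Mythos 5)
Your proposal is correct and follows essentially the same route as the paper: the key step in both is to compute the connection form of $s_\varpi$ as the pullback $S_\varpi^*\bigl(\tfrac i2(z\,d\zbar-\zbar\,dz)\bigr)=\tfrac i2\,\tfrac{1}{1+|\zeta|^2}(\zeta\,d\zetabar-\zetabar\,d\zeta)$ (where, as you anticipate, the real normalization factor $(1+|\zeta|^2)^{-1/2}$ causes the $d\rho$ terms to cancel), and the identity $s_\psi/s_\varpi^k=\psi\circ S_\varpi$ is in both cases a definitional unwinding of the equivalence relation $\sim_k$. The only cosmetic difference is that you verify adaptedness of the coordinate $\zeta$ from the Fubini--Study potential $\log(1+|\zeta|^2)$, whereas the paper obtains $\omega_{\bbC\bbP^1}$ by differentiating the pulled-back connection form.
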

\begin{proof}
	The connection form associated with $s_\varpi$ is  the pull back of the connection form \\ $\alpha = \frac i2\left( zd\zbar -\zbar dz\right)$ by $S_\varpi$.  A calculation shows that this
	equals
\begin{equation}\label{}
	S_\varpi^*\alpha =\frac i2\,\frac{1}{1+|\zeta|^2}\, \left( \zeta d\zetabar - \zetabar d\zeta\right),
\end{equation}
	which is clearly of the form $\frac i2 \left( \zeta d\zetabar - \zetabar d\zeta\right) + O(|\zeta|^2)$.  Differentiating the
	above, after further computations we obtain the expression for the reduced symplectic form
	\begin{equation}\label{}
	\omega_{\bbC\bbP^1} = \frac{i}{(1+|\zeta|^2)^2}\, d\zeta\wedge d\zetabar.
	\end{equation}
	
	The second statement follows from the definition of $s_\varpi$ and the relationship between $s_\psi$ and $\psi$,
	(\ref{sPsi}).
\end{proof}

According to the general procedure for the computation of the symbol of a coherent state, and taking into account
(\ref{expressionSection}), we need to compute the asymptotics of
\begin{equation}\label{upsilon}
\Psi_{A,(1,0)}\left[S_\varpi\left(\frac{\eta}{\sqrt{k}}\right)\right].
\end{equation}
A Taylor expansion of $S_\varpi$ at the origin easily gives that
\[
S_\varpi\left(\frac{\eta}{\sqrt{k}}\right) = w + \frac{\eta}{\sqrt{k}} + O(1/k),
\]
so the asymptotics of (\ref{upsilon}) agrees with that of $\Upsilon_A(\eta, k)$
to leading order.

\section{Reduction of symbols and the metaplectic representation}

This material will be used to prove the propagation theorem of \S 5.
Throughout this section $\h = k=1$, and we work entirely in the category of symplectic vector spaces.

This section has two goals.  First, we interpret the passage from the
symbol of $\psi_{A,w}$ to the symbol of its reduction $\Psi_{A,w}$
as applying an operator of reduction in the Heisenberg representation
(Lemma \ref{lemma:symbIsReduction}).
Second, we show that the metaplectic representation of certain linear symplectomorphisms
is covariant with respect to the above procedure (Proposition \ref{prop:MpIsNat}).

\subsection{Bargmann spaces}
Let $(E,\omega, J)$ be a K\"ahler vector space.  We take the sign convention that the 
associated positive definite metric is $g(u,v) = \omega(u, Jv)$.  
A nice reference for the material in this section is \cite{D}.  We will quote freely
from that article.

Let
\begin{equation}\label{}
\calB(E)= \left\{ \psi:E\to\bbC\;;\; \psi(v) = f(v)e^{-\norm{v}^2/2}\text{ where }
\dbar f = 0\text{ and } \psi\in L^2(E)
\right\}
\end{equation}
be the Bargmann space of $E$.  
Here $\delbar$ is the d-bar operator associated with $J$ and $\norm{v}^2 = g(v,v)$.

In our applications $E$ is the tangent space at a point in a K\"ahler manifold.  
The symbols of squeezed states at that point will be elements of $\calB(E)$.

The Heisenberg group of $E$ is unitarily represented in $\calB(E)$, as follows.
If $a\in E$, define the operator $\rho(a): \calB\to\calB$ by
$
\rho(a)(\psi)(v) = e^{i\omega(a,v)}\psi(v-a).
$
Then $\rho(a)\circ\rho(b) = e^{i\omega(a,b)}\rho(a+b)$, so these operators form part
of the Heisenberg  representation of the Heisenberg group of $E$.  
Recall that $\psi\in\calB$ is said to be a smooth vector 
iff for all $\phi\in\calB$ the function
$a\mapsto \inner{\rho(a)(\psi)}{\phi}$ is smooth (this is the analogue of Schwartz
functions in Bargmann space).
We will
denote by
\[
\calB^\infty(E) \subset \calB(E)
\]
the subspace of smooth vectors for this representation.

\subsection{Reduction}
If $S\subset E$ is a subspace, we denote by $S^\circ$ and $S^\bot$ its 
symplectic annihilator and orthogonal complement, respectively.  Note that
\begin{equation}\label{noteThat}
J(S^\circ) = S^\perp.
\end{equation}

\medskip
From now on we fix a co-isotropic subspace $\calC\subset E$ (this means that
$\calC^\circ \subset\calC$).  Let us define
\begin{equation}\label{}
\calH:= \calC\cap J(\calC),\  \text{the maximal complex subspace of } \calC.
\end{equation}
Note that automatically $\calH$ is a K\"ahler (in particular, symplectic) subspace of $E$.

\begin{lemma}
	One has:
	\begin{equation}\label{maxCpxSub}
	\calC\cap (\calC^0)^\bot = \calH,
	\end{equation}
	and therefore the projection $\pi: \calC\to \calC/\calC^\circ =: F $ identifies the reduction,
	$F$, of $\calC$ with the maximal complex subspace of $\calC$.  Under this identification
	the symplectic structures of $\calH$ and $F$ agree.
\end{lemma}
\begin{proof}
	By (\ref{noteThat}), $J(\calC) = \left(\calC^\circ\right)^\bot$ and (\ref{maxCpxSub}) follows,
	which implies that $\pi$ restricted to $\calH$ is a bijection.
	The rest follows from the usual characterization of the symplectic structure of a reduction.
\end{proof}


By the previous discussion, the reduction $F = \calC/\calC^\circ$ of $\calC$ inherits the
structure of a K\"ahler vector space.  Let $\calB(F)$ denote its Bargmann space, 
and $\calB_F^\infty\subset\calB_\calF$ the subspace of smooth vectors.  Our objective
is to introduce a natural ``reduction" operator
\begin{equation}\label{}
\calR: \calB^\infty(E) \to \calB^\infty(F)
\end{equation}
associated with $\calC$.
Here ``natural" is with respect to symplectic linear transformations.  There is 
an obvious map, namely restriction to $\calH$ followed by the identification $\calH\cong F$,
but this is not the right one for our purposes.
\begin{definition}
	We define $R: \calB^\infty(E) \to \calB^\infty(F)$ to be the operator of
	restriction to $\calC$ followed by integration over $\calC^\circ$, 
	with respect to the measure induced by the Euclidean inner product.
\end{definition}

The point of this definition is that it describes the abstract way to
construct the symbol of a reduced state:

\begin{lemma}\label{lemma:symbIsReduction}
In the context of Theorem \ref{Main}, one has:
\[
\sigma_A(\eta) = \frac{1}{2\pi}R(\sigma_{\psi_{A,w}})(\eta),
\]
where $E = \bbC^N$, $\calC = T_w S^{2N-1}$ and
$\sigma_{\psi_{A,w}}(z)=e^{Q_A(z)/2-|z|^2/2}$ is the symbol of $\psi_{A,w}$.
\end{lemma}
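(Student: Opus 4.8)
The plan is to make every object in the statement explicit and then reduce the claim to a one-line Gaussian integral.

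First I would identify the subspaces concretely. Since $|w|=1$, the tangent space $\calC = T_w S^{2N-1}$ is the real hyperplane $\{v\in\bbC^N : \Re(v\wbar^T)=0\}$, which is coisotropic. To compute its symplectic annihilator, recall $\omega(u,v)=\Im(u\overline{v}^T)$ and take $u=iw$:
\[
\omega(iw,v)=\Im\!\big(i\,w\overline{v}^T\big)=\Re\!\big(w\overline{v}^T\big)=\Re\!\big(v\wbar^T\big).
\]
Thus $\omega(iw,\cdot)$ vanishes exactly on $\calC$, so $\calC^\circ=\bbR\cdot(iw)$; note that $iw$ is precisely the generator of the $S^1$-action at $w$. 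By (\ref{maxCpxSub}) the maximal complex subspace $\calH=\calC\cap J(\calC)$ equals $(\bbC w)^\bot=\calH_w$, and the reduction $F=\calC/\calC^\circ$ is identified with $\calH_w$.

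Next I would unwind the definition of $R$. Because $\norm{iw}=\norm{w}=1$, the vector $iw$ is a unit generator of the line $\calC^\circ$, so the Euclidean measure on $\calC^\circ$ is simply $ds$ along the ray $s\mapsto s\,(iw)$, with no Jacobian factor. Using $\eta\in\calH_w\subset\calC$ as representatives of the classes in $F$, restriction to $\calC$ followed by integration over $\calC^\circ$ reads
\[
R(\sigma_{\psi_{A,w}})(\eta)=\int_{-\infty}^{\infty}\sigma_{\psi_{A,w}}(\eta+s\,iw)\,ds
=\int_{-\infty}^{\infty} e^{Q_A(\eta+isw)/2}\,e^{-\norm{\eta+isw}^2/2}\,ds,
\]
where I have substituted the Gaussian symbol $\sigma_{\psi_{A,w}}(z)=e^{Q_A(z)/2-|z|^2/2}$ from Definition \ref{def:GaussianSymbol}.

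Finally I would simplify the exponent. Since $\eta\in\calH_w$ is Hermitian-orthogonal to $w$ we have $\eta\wbar^T=0=w\overline{\eta}^T$, so the cross terms cancel and $\norm{\eta+isw}^2=|\eta|^2+s^2$. Therefore
\[
R(\sigma_{\psi_{A,w}})(\eta)=e^{-|\eta|^2/2}\int_{-\infty}^{\infty}e^{Q_A(isw+\eta)/2}\,e^{-s^2/2}\,ds,
\]
which is exactly $2\pi$ times the right-hand side of (\ref{whatTheSymbolIs}); comparing with the definition of $\sigma_A$ gives $\sigma_A(\eta)=\tfrac{1}{2\pi}R(\sigma_{\psi_{A,w}})(\eta)$. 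I do not expect a genuine obstacle here, as the content is entirely a matter of matching conventions. The one place demanding care is the normalization: one must verify that $iw$ (and not some multiple) generates $\calC^\circ$, that $\norm{iw}=1$ so that the induced measure is literally $ds$, and that parametrizing the cosets $\eta+\calC^\circ$ by $\eta+s\,iw$ produces precisely the factor $e^{-s^2/2}$ with no stray constant, so that the prefactor $1/(2\pi)$ emerges exactly.
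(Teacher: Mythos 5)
Your proof is correct and takes essentially the same approach as the paper: the paper's own proof consists precisely of noting that $iw$ is a unit basis of $\calC^\circ$ and $\eta\in\calH_w$, so that $R(\sigma_{\psi_{A,w}})(\eta)$ is literally the integral appearing in (\ref{whatTheSymbolIs}) times $2\pi$. Your write-up just makes explicit the verifications (that $\calC^\circ=\bbR\cdot(iw)$, that the induced measure is $ds$, and that the cross terms in $\norm{\eta+isw}^2$ vanish) which the paper leaves to the reader.
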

\begin{proof}
Simply note that $iw$ is a unitary basis of $\calC^\circ$ and $\eta\in\calH_w$.
Therefore 
\[ e^{-|\eta|^2/2}  \int_{-\infty}^{\infty} e^{Q_A(isw + \eta)/2} \, e^{-s^2/2}\,ds
\] 
is exactly the definition of $R(\sigma_{\psi_{A,w}})(\eta)$.
\end{proof}

We now explicitly compute $\sigma_A$ in a special case:

\begin{lemma}\label{lemma:redSymbSpecialCase}
	If $w=(1,\vec{0})$, then 	(see \eqref{symbDownstairs})
		\begin{equation*}
	\sigma_A(\eta) =\frac{1}{\sqrt{2\pi}} \frac{1}{\sqrt{Q_A(w)+1}} \; e^{Q_{\rho_w(A)}(\eta)/2}\,
	e^{-|\eta|^2/2}
	\end{equation*}
	where $\rho_w(A)\in\calD_{N-1}$ is the lower $(N-1) \times (N-1)$ principal minor of 
	\[
	A- \frac{Aw^TwA}{wAw^T+1}.
	\]
\end{lemma}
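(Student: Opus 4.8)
The plan is to start from the integral representation (\ref{whatTheSymbolIs}), which has already been proven, and simply evaluate the Gaussian integral in closed form; the closed form (\ref{symbDownstairs}) in the special case $w=(1,\vec 0)$ will then drop out after a short linear-algebra identification. First I would expand the quadratic form in the integrand. Since $A$ is symmetric, $wA\eta^T=\eta Aw^T$, and therefore
\[
Q_A(isw+\eta) = -s^2\,Q_A(w) + 2is\,(wA\eta^T) + Q_A(\eta).
\]
Substituting into (\ref{whatTheSymbolIs}) and factoring the $s$-independent term $e^{Q_A(\eta)/2}$ out of the integral reduces the whole problem to the one-dimensional Gaussian integral
\[
\int_{-\infty}^\infty e^{-\frac12(Q_A(w)+1)s^2 + is\,(wA\eta^T)}\,ds.
\]

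Next I would apply the standard formula $\int_{-\infty}^\infty e^{-\frac a2 s^2 + ibs}\,ds=\sqrt{2\pi/a}\;e^{-b^2/(2a)}$, valid for $\Re a>0$ with the principal branch of the square root, with $a=Q_A(w)+1$ and $b=wA\eta^T$. The hypothesis $A\in\calD_N$, $|w|=1$ gives $\Re(Q_A(w)+1)>0$ (Remark \ref{Remars1}(1)), so the formula applies and simultaneously pins down the branch of the square root occurring in (\ref{symbDownstairs}). Collecting constants ($\tfrac{1}{2\pi}\sqrt{2\pi}=\tfrac{1}{\sqrt{2\pi}}$) yields
\[
\sigma_A(\eta) = \frac{1}{\sqrt{2\pi}}\,\frac{1}{\sqrt{Q_A(w)+1}}\; e^{\frac12\left[Q_A(\eta)-\frac{(wA\eta^T)^2}{Q_A(w)+1}\right]}\,e^{-|\eta|^2/2}.
\]

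It then remains to recognize the exponent as $Q_{\rho_w(A)}(\eta)$. Setting $\widetilde A := A-\frac{Aw^TwA}{wAw^T+1}$ and using symmetry of $A$ to write $(wA\eta^T)^2=(\eta Aw^T)(wA\eta^T)$, a one-line expansion gives $Q_{\widetilde A}(\eta)=Q_A(\eta)-\frac{(wA\eta^T)^2}{Q_A(w)+1}$, which matches the exponent above identically. Specializing now to $w=(1,\vec 0)$, every $\eta\in\calH_w=(\bbC w)^\bot$ has vanishing first coordinate, so $Q_{\widetilde A}(\eta)$ only involves the lower $(N-1)\times(N-1)$ block of $\widetilde A$ — which is exactly $\rho_w(A)$ — and hence $Q_{\widetilde A}(\eta)=Q_{\rho_w(A)}(\eta)$ on $\calH_w$. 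This is (\ref{symbDownstairs}). (One may check consistency against Proposition \ref{n2-case}: for $N=2$ this gives $\rho_{(1,0)}(A)=b-c^2/(1+a)$.)

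The one point that needs genuine care — the main obstacle — is justifying that $\rho_w(A)$ really lies in $\calD_{N-1}$, i.e.\ that the reduced symbol is again a legitimate Gaussian of the smaller Bargmann space. The cleanest route is to observe that $\sigma_A\in L^2(\calH_w)$, which is furnished by part (3) of Theorem \ref{Main} (that estimate uses only the limit formula (\ref{whatTheSymbolIs}), never the closed form, so there is no circularity): finiteness of $\|\Psi_{A,w}\|$ forces $\int_{\calH_w}|\sigma_A|^2<\infty$. Since $\sigma_A$ is a nonzero scalar multiple of $e^{Q_{\rho_w(A)}(\eta)/2}\calG(\eta)$, this integrability is precisely the defining condition for $Q_{\rho_w(A)}\in\calD(\calH_w)\cong\calD_{N-1}$. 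I expect this to be the only step where one must invoke an earlier result rather than compute directly; alternatively one could prove the purely algebraic statement that the Schur-type reduction $A\mapsto\rho_w(A)$ carries $\calD_N$ into $\calD_{N-1}$ (equivalently $\|\rho_w(A)\|<1$ whenever $\|A\|<1$), but the $L^2$ argument is shorter and dovetails with the reduction-of-symbols interpretation of Lemma \ref{lemma:symbIsReduction}.
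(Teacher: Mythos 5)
Your proof is correct and follows essentially the same route as the paper's: expand $Q_A(isw+\eta)$, evaluate the resulting one-dimensional Gaussian integral using $\Re(Q_A(w)+1)>0$, and recognize the exponent as the Schur-complement quadratic form, which for $w=(1,\vec 0)$ reduces on $\calH_w$ to the lower $(N-1)\times(N-1)$ block. Your final paragraph justifying $\rho_w(A)\in\calD_{N-1}$ via $L^2$-integrability of the symbol addresses a point the paper's proof leaves implicit, and that argument is sound and non-circular, since the proof of part (3) of Theorem \ref{Main} indeed uses only the limit formula (\ref{whatTheSymbolIs}) and a Gaussian domination that passes to the limit $\sigma_A$.
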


\begin{proof}
%
	Since
	\begin{align*}
		Q_A(isw+\eta) 
		= i^2 s^2 wAw^T + 2is \eta A w^T + \eta A \eta ^T 
		= -s^2 Q_A(w) +2i s \eta A w^T + Q_A(\eta)
	\end{align*}
	equation \eqref{whatTheSymbolIs} may be re-written as
	\begin{align*}
	\sigma_{A} (\eta) 
		& = \frac{1}{2 \pi} e^{-|\eta|^2/2} \, e^{Q_A(\eta)/2}\int_{-\infty}^\infty e^{-s^2(Q_A(w)+1)/2}\, e^{s i \eta A w^T} \, ds \\
		&= \frac{1}{2 \pi} e^{-|\eta|^2/2} \, e^{Q_A(\eta)/2}\int_{-\infty}^\infty e^{-b^2s^2/2}\, e^{c s} \, ds
	\end{align*}
	where $b^2 := Q_A(w)+1$ with $b$ in the right side of the complex plane and $c := i \eta A w^T$. Now $\Re(b^2) = \Re(Q_A(w)+1)>0$ since $A \in \calD_N$ and $|w|=1$, so then we can evaluate the integral, 
	\begin{align*}
		\sigma_{A} (\eta) 
		&= \frac{1}{2 \pi} e^{-|\eta|^2/2} \, e^{Q_A(\eta)/2} \: \frac{\sqrt{2\pi}}{b} e^{(c/b)^2/2} 
		= \frac{1}{\sqrt{2 \pi}} \frac{1}{\sqrt{Q_A(w)+1}} e^{Q_A(\eta)/2} \, e^{-(\eta A w^T)^2/(2(Q_A(w)+1))} \, e^{-|\eta|^2/2}.
	\end{align*}
	Since $\eta \in \calH_{(1,\vec{0})}$, our choice of $w$ forces the first coordinate of $\eta$ to be zero, so we
	take $\eta = (0, \eta_1, \dots, \eta_{N-1})$ and write
	\begin{align*}
		\frac{1}{2}\left[Q_A(\eta) - \frac{(\eta A w^T)^2}{Q_A(w)+1}\right] 
		= \frac{1}{2} \eta \left[A - \frac{A w^TwA}{wAw^T+1}  \right] \eta^T.
	\end{align*}
	Therefore the matrix $\rho_w(A)$ is the lower $(N-1) \times (N-1)$ principal minor of the matrix 
	\begin{equation*}
		A - \frac{A w^TwA}{wAw^T+1}.
	\end{equation*}
\end{proof}

\begin{corollary}\label{prop:SameReduction}
	The symbol of the $\Psi_{A,w}$ for any $w\in S^{2N-1}$ is given by equation (\ref{symbDownstairs}):
	\begin{equation*}
		\sigma_A(\eta) =\frac{1}{\sqrt{2\pi}} \frac{1}{\sqrt{Q_A(w)+1}} \; e^{Q_{\rho_w(A)}(\eta)/2}\,
		e^{-|\eta|^2/2}
	\end{equation*}
	for a suitable $Q_{\rho_w(A)} \in \calD(\calH_w)$.
\end{corollary}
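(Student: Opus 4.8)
The plan is to deduce the general statement from the special case $w=(1,\vec 0)$ of Lemma \ref{lemma:redSymbSpecialCase} using the unitary covariance of the whole construction. Since $|w|=1$, I complete $w$ to an orthonormal basis of $\bbC^N$ and let $g\in\text{U}(N)$ be the unitary matrix whose first row is $w$; then $(1,\vec 0)\,g=w$, i.e. $wg^{-1}=(1,\vec 0)$. The covariance property (\ref{covariance}) gives $g\cdot\psi_{A,w}=\psi_{gAg^T,(1,\vec 0)}$.

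Next I would observe that the reduction operator is $\text{U}(N)$-equivariant. This is immediate from the integral formula (\ref{defReductOp}): averaging $e^{-ikt}\psi(e^{it}z)$ over $t$ commutes with the substitution $z\mapsto zg$, since $e^{it}(zg)=(e^{it}z)g$. Hence $g\cdot\Psi_{A,w}=\Psi_{gAg^T,(1,\vec 0)}$, that is, $\Psi_{A,w}(z)=\Psi_{gAg^T,(1,\vec 0)}(zg^{-1})$ as homogeneous polynomials of degree $k$ on all of $\bbC^N$. I then compute $\sigma_A$ from its defining limit (\ref{whatTheSymbolIs}). Because $g$ is unitary, the map $\eta\mapsto\eta g^{-1}$ preserves the Hermitian inner product and carries $\calH_w=(\bbC w)^\perp$ isometrically onto $\calH_{(1,\vec 0)}=(\bbC(1,\vec 0))^\perp$, so for $\eta\in\calH_w$
\[
\sqrt{k}\,\Psi_{A,w}\!\left(w+\tfrac{\eta}{\sqrt{k}}\right)=\sqrt{k}\,\Psi_{gAg^T,(1,\vec 0)}\!\left((1,\vec 0)+\tfrac{\eta g^{-1}}{\sqrt{k}}\right)\xrightarrow{\ k\to\infty\ }\sigma_{gAg^T}(\eta g^{-1}),
\]
and the right-hand side is evaluated by Lemma \ref{lemma:redSymbSpecialCase}.

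It then remains to simplify the three resulting factors. The prefactor is unchanged because $Q_{gAg^T}((1,\vec 0))=(1,\vec 0)\,gAg^T\,(1,\vec 0)^T=wAw^T=Q_A(w)$, and the Gaussian is unchanged because $|\eta g^{-1}|^2=|\eta|^2$. Setting $Q_{\rho_w(A)}(\eta):=Q_{\rho_{(1,\vec 0)}(gAg^T)}(\eta g^{-1})$ then yields exactly (\ref{symbDownstairs}); this form lies in $\calD(\calH_w)$ because $\rho_{(1,\vec 0)}(gAg^T)\in\calD_{N-1}$ by Lemma \ref{lemma:redSymbSpecialCase} and right multiplication by the unitary $g^{-1}$ preserves the generalized unit disk (unitary conjugation leaves the singular values of the associated matrix invariant).

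The step that most needs care is the equivariance bookkeeping: verifying that $\calR_k$ intertwines the actions and, above all, that $\eta g^{-1}$ remains \emph{horizontal} so that Lemma \ref{lemma:redSymbSpecialCase} genuinely applies; everything after that is algebraic simplification. No ambiguity enters from the non-uniqueness of $g$, since $\sigma_A$ is defined intrinsically by (\ref{whatTheSymbolIs}) and the argument merely exhibits its closed form. I would also remark on a purely direct alternative: the Gaussian integral carried out in the proof of Lemma \ref{lemma:redSymbSpecialCase} never used $w=(1,\vec 0)$ until its final line, so the same computation gives, for any $|w|=1$, the form (\ref{symbDownstairs}) with $Q_{\rho_w(A)}(\eta)=Q_A(\eta)-(\eta Aw^T)^2/(Q_A(w)+1)$ restricted to $\calH_w$; the equivariance argument is then what certifies that this quadratic form belongs to $\calD(\calH_w)$.
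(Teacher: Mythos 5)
Your proof is correct and follows essentially the same route as the paper: the paper's own proof of Corollary \ref{prop:SameReduction} simply invokes U$(N)$-equivariance of the construction to reduce to $w=(1,\vec{0})$ and then cites Lemma \ref{lemma:redSymbSpecialCase}. What you have done is spell out that equivariance in full — the choice of $g$ with $(1,\vec 0)g=w$, the intertwining of $\calR_k$ with the U$(N)$ action, the horizontality of $\eta g^{-1}$, and the invariance of the prefactor and Gaussian factor — all of which is accurate and matches the paper's intent.
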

\begin{proof}
By equivariance of the construction under the action of U$(N)$, we can assume without loss of generality that $w=(1,\vec{0})$.
But that case was settled in Lemma \ref{lemma:redSymbSpecialCase}.
\end{proof}

\subsection{The metaplectic representation and reduction}
We now turn to the naturality of the reduction operator with respect to changes of the complex structure.  Once again, let $(E,\omega, J)$ be a K\"ahler vector space.  Denote by 
$P_E: L^2(E)\to \calB(E)$ the orthogonal projector (it turns out that $\calB(E)$ 
is closed in $L^2(E)$).
If $\Phi: E\to E$ is a symplectic transformation, then 
one can form the unitary operator
$U_\Phi: L^2(E)\to L^2(E)$ which is simply
\[
U_\Phi (\psi) = \psi\circ \Phi^{-1}.
\]
One of the main results of \cite{D} is the following:
\begin{theorem} (\cite{D} \S 6)  Let $\text{Sp}(E)$ denote the group of symplectic transformations of $E$.
The assignment
\[
\text{Sp}(E)\ni\Phi\mapsto \calW(\Phi):=  \eta_{J,\Phi}\,
 P_E\circ U_\Phi : \calB(E)\to \calB(E)
\]
where
\begin{equation}\label{}
\eta_{J,\Phi} = 2^{-N}\left(\det\left[(I-iJ)+\Phi(1+iJ)\right]\right)^{1/2}
\end{equation}
is the metaplectic representation.
\end{theorem}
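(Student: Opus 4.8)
The plan is to characterize the metaplectic representation abstractly and then verify that the stated formula realizes it. The key structural fact is that $\calB(E)$ carries, via $\rho$, an irreducible unitary representation of the Heisenberg group with a fixed central character $e^{i\omega(a,b)}$, and that the metaplectic representation is the (projectively unique) intertwiner of this representation with its $\Phi$-twists. Concretely, I would prove three things: (i) $P_E\circ U_\Phi$ intertwines $\rho(a)$ with $\rho(\Phi a)$; (ii) by Schur's lemma together with the Stone--von Neumann theorem, any such nonzero intertwiner is a scalar multiple of a unitary, so that after the normalization $\eta_{J,\Phi}$ the operator $\calW(\Phi)$ is unitary and $\Phi\mapsto[\calW(\Phi)]$ is a projective unitary representation; and (iii) this projective representation is the metaplectic one, with the square-root ambiguity in $\eta_{J,\Phi}$ accounting for the $\bbZ/2$ central extension.

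For step (i), which I expect to be routine, the starting point is that the formula for $\rho$ extends to $L^2(E)$ and that $U_\Phi$ intertwines these Weyl operators on all of $L^2(E)$: since $\Phi$ is symplectic one has $\omega(a,\Phi^{-1}v)=\omega(\Phi a,v)$, and a direct substitution gives
\[
U_\Phi\,\rho(a)=\rho(\Phi a)\,U_\Phi \qquad\text{on } L^2(E).
\]
Because $\calB(E)$ is invariant under every $\rho(b)$ and each $\rho(b)$ is unitary, its orthogonal complement is invariant too, so $P_E$ commutes with the whole Heisenberg representation. Combining these,
\[
(P_E U_\Phi)\,\rho(a)=P_E\,\rho(\Phi a)\,U_\Phi=\rho(\Phi a)\,(P_E U_\Phi)
\]
on $\calB(E)$, which is precisely the desired covariance; the normalizing factor $\eta_{J,\Phi}$ plays no role here.

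For step (ii) I would argue that $P_E U_\Phi$ is a nonzero contraction $\calB(E)\to\calB(E)$ (it does not annihilate the vacuum $\Omega$), and that $a\mapsto\rho(\Phi a)$ is again irreducible with the same central character, which $\Phi$ preserves. Hence by Schur's lemma a nonzero intertwiner between two unitarily equivalent irreducibles is a scalar times a unitary, $P_E U_\Phi=\lambda(\Phi)\,V_\Phi$. The scalar $\lambda(\Phi)$ is then pinned down by a Gaussian computation: working in the complexification and using that $\tfrac12(I\mp iJ)$ are the projections onto the holomorphic and antiholomorphic subspaces, the matrix coefficient $\inner{\Omega}{U_\Phi\Omega}$ together with the reproducing kernel of $\calB(E)$ reduce everything to a Gaussian integral whose normalization is exactly $\eta_{J,\Phi}^{-1}=2^{N}\big(\det[(I-iJ)+\Phi(I+iJ)]\big)^{-1/2}$. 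Setting $\calW(\Phi)=\eta_{J,\Phi}\,P_E U_\Phi$ then makes $\calW(\Phi)$ unitary.

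Finally, to identify this projective representation as the metaplectic one (step iii), I would either differentiate $t\mapsto\calW(\exp tX)$ at $t=0$ and check that the infinitesimal generators are the quantizations of the quadratic Hamiltonians attached to $X\in\mathfrak{sp}(E)$ (the standard creation/annihilation form on Fock space), or simply invoke the uniqueness clause: any projective unitary representation of $\text{Sp}(E)$ intertwining the Heisenberg representation as in (i) must coincide with the metaplectic representation. The main obstacle is step (ii): obtaining the determinant factor with the correct phase, not merely its modulus. Computing $|\lambda(\Phi)|$ from $\|P_E U_\Phi\Omega\|$ is easy, but the honest derivation of the complex square root $\big(\det[(I-iJ)+\Phi(I+iJ)]\big)^{1/2}$, with a consistent choice of branch along paths in $\text{Sp}(E)$, is exactly where the metaplectic double cover appears and where the bookkeeping of the Gaussian integral over the complex polarization must be carried out with care.
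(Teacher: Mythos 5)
First, a point of reference: the paper does not prove this theorem at all --- it is imported verbatim from Daubechies \cite{D}, \S 6, as background for Proposition \ref{prop:MpIsNat}. So your proposal can only be measured against the argument in \cite{D}, whose derivation is essentially computational (explicit Gaussian/coherent-state integrals), whereas yours is structural. The structural part of your proposal is correct and complete as far as it goes: the identity $U_\Phi\,\rho(a)=\rho(\Phi a)\,U_\Phi$ on $L^2(E)$ is a one-line check using $\omega(a,\Phi^{-1}v)=\omega(\Phi a,v)$; $\calB(E)$ is invariant under every $\rho(b)$, so $P_E$ commutes with the Heisenberg representation and $P_E U_\Phi$ intertwines $\rho$ with $\rho\circ\Phi$ on $\calB(E)$; both are irreducible with the same central character (since $\Phi$ is symplectic), the intertwiner is nonzero because $\inner{\Omega}{U_\Phi\Omega}$ is a positive Gaussian integral, and Schur's lemma then gives $P_E U_\Phi=\lambda(\Phi)V_\Phi$ with $V_\Phi$ unitary. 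This proves that \emph{some} normalization turns $P_E U_\Phi$ into a unitary, and that the resulting family is, projectively, the unique representation of $\text{Sp}(E)$ covariant for the Heisenberg action --- which legitimately settles your step (iii) by uniqueness.

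The genuine gap is that the theorem, as stated, \emph{is} the formula $\eta_{J,\Phi}=2^{-N}\left(\det\left[(I-iJ)+\Phi(I+iJ)\right]\right)^{1/2}$, and your proposal never derives it. You assert that ``a Gaussian computation'' using the reproducing kernel and the projections $\tfrac12(I\mp iJ)$ yields $\lambda(\Phi)=\eta_{J,\Phi}^{-1}$, and you explicitly defer the determination of the phase of the square root and its coherence along paths in $\text{Sp}(E)$. But that computation --- evaluating $P_E U_\Phi$ on the vacuum (or on coherent states) via the Bargmann kernel, extracting both the modulus \emph{and} the phase of the determinant factor, and verifying that with this specific normalization the composition cocycle takes values in $\{\pm 1\}$, so that one obtains the metaplectic double cover rather than an a priori circle-valued extension --- is precisely the quantitative content of Daubechies' theorem and is where the work in \cite{D} lies. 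As written, your argument establishes existence of a normalizing scalar, not that the stated $\eta_{J,\Phi}$ is that scalar; until the Gaussian bookkeeping is actually carried out, the formula in the statement remains unproven.
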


\medskip
Our goal here is to prove that the metaplectic representation is natural with respect to symplectic quotients,
in the following sense.  
Let $\calC\subset E$  be a co-isotropic subspace, as above, and let $\Phi:E\to E$ a linear symplectic isomorphism
satisfying:
\begin{itemize}
	\item[(1)] $\Phi(\calC) = \calC$.
\end{itemize}
From this it follows that $\Phi$ maps $\calC^\circ$ onto itself.
Let us further assume that
\begin{itemize}
	\item[(2)] the restriction of $\Phi$ to $\calH^\circ = \calC^\circ + J(\calC^\circ)$ is the identity:  $\Phi|_{\calH^\circ}: \calH^\circ \to \calH^\circ$.
\end{itemize}
Denote by $F = \calC/\calC^\circ$ the reduction of $\calC$, and by 
$\phi: F\to F$ the reduction of $\Phi$:
\[
\forall v\in \calC\qquad \phi([v]) = [\Phi(v)],
\] 
where $[v]\in F$ denotes the projection of $v$.  $\phi$ itself is a symplectomorphism.  

\begin{proposition}\label{prop:MpIsNat}
	Under the previous assumptions (1) and (2), the following diagram commutes,
\begin{equation}\label{diagCommutes}
\begin{array}{rccc}
 &\calB(E) & \xrightarrow{\calW(\Phi)}& \calB(E)\\
 & \downarrow     &                               & \downarrow\\
 & \calB(F)	& \xrightarrow{\calW(\phi)}&   \calB(F)
\end{array}
\end{equation}
where the vertical arrows are the reduction operator $R$.
\end{proposition}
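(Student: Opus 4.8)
The plan is to reduce the commutativity of \eqref{diagCommutes} to a single tensor-product factorization coming from the orthogonal Kähler splitting of $E$. The starting observation is that, since $\calH=\calC\cap J(\calC)$ is a complex and hence symplectic subspace, one has the $\omega$-orthogonal and $g$-orthogonal decomposition
\[
E=\calH\oplus\calH^\circ,\qquad \calH^\circ=\calC^\circ+J(\calC^\circ)=\calH^{\perp},
\]
the last identity because $g(u,v)=\omega(u,Jv)$ and $J\calH=\calH$. Moreover \eqref{maxCpxSub} gives $\calC=\calH\oplus\calC^\circ$, and $\calC^\circ$ is a Lagrangian subspace of the Kähler space $\calH^\circ$.

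\emph{Step 1: the structure of $\Phi$.} I would first show that hypotheses (1) and (2) force $\Phi$ to be block-diagonal, $\Phi=\Phi|_{\calH}\oplus\mathrm{id}_{\calH^\circ}$, with $\Phi|_{\calH}$ corresponding to $\phi$ under $\calH\cong F$. Indeed, for $v\in\calH$ and $u\in\calH^\circ$, using $\Phi^{-1}|_{\calH^\circ}=\mathrm{id}$ from (2) and the symplectic orthogonality of $\calH$ and $\calH^\circ$,
\[
\omega(\Phi v,u)=\omega(v,\Phi^{-1}u)=\omega(v,u)=0,
\]
so $\Phi v\in(\calH^\circ)^\circ=\calH$; thus $\Phi(\calH)=\calH$. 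Since $\phi([v])=[\Phi v]$ and $\Phi$ preserves $\calH$, the reduced map $\phi$ is precisely $\Phi|_{\calH}$ read through the isometry $\calH\cong F$.

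\emph{Step 2: the two operators factor.} The orthogonal splitting gives a Hilbert tensor factorization $\calB(E)\cong\calB(\calH)\,\widehat{\otimes}\,\calB(\calH^\circ)$, under which the Gaussian weight, the holomorphic projection, and the pullback all split: $P_E=P_{\calH}\otimes P_{\calH^\circ}$ and $U_\Phi=U_{\Phi|_{\calH}}\otimes\mathrm{id}$. For the normalizing factor, the operator $(I-iJ)+\Phi(I+iJ)$ is block-diagonal (both $J$ and $\Phi$ preserve the two summands), so its determinant is a product of the two blocks; on $\calH^\circ$ the block is $2I$, whose contribution to $\eta_{J,\Phi}$ is $2^{-\dim_{\bbC}\calH^\circ}\big(\det 2I_{\calH^\circ}\big)^{1/2}=1$, leaving exactly the $\calH$-factor $\eta_{J,\phi}$. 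Hence $\calW(\Phi)=\calW(\phi)\otimes I$. On the other hand, restriction to $\calC=\calH\oplus\calC^\circ$ followed by integration over $\calC^\circ\subset\calH^\circ$ acts only on the second factor, so on product vectors $R(\psi_1\otimes\psi_2)=\big(\int_{\calC^\circ}\psi_2\big)\,\psi_1$; that is, $R=I\otimes r$ where $r:\calB(\calH^\circ)\to\bbC$ is integration over the Lagrangian $\calC^\circ$ (the reduction of $\calH^\circ$ to a point). Both identities extend by continuity from the dense span of smooth product vectors.

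\emph{Conclusion and the main obstacle.} With these factorizations the diagram commutes immediately, since on a product vector both compositions give
\[
R\,\calW(\Phi)(\psi_1\otimes\psi_2)=r(\psi_2)\,\calW(\phi)\psi_1=\calW(\phi)\,R(\psi_1\otimes\psi_2).
\]
The step that requires the most care is the factorization of the metaplectic normalization in Step 2: because $\eta_{J,\Phi}$ contains a square root of a determinant, the identity $\eta_{J,\Phi}=\eta_{J,\phi}$ holds only up to a sign that must be shown compatible, i.e. one must check that $\calW(\mathrm{id}_{\calH^\circ})=I$ and not $-I$ for the chosen branch. I expect this to follow from fixing the branch continuously and using that $\Phi$ can be joined to the identity within the stabilizer of $\calC$ fixing $\calH^\circ$; verifying that $P_E$ and $U_\Phi$ genuinely respect the tensor factorization (and that product smooth vectors are dense in $\calB^\infty(E)$, so that $R$ is controlled on them) is the remaining, essentially routine, technical point.
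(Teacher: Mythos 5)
Your proposal is correct and follows essentially the same route as the paper: decompose $E = \calH\oplus\calH^\circ$, show $\Phi$ is block-diagonal with identity on $\calH^\circ$, factor $\calB(E)\cong\calB(\calH)\,\widehat{\otimes}\,\calB(\calH^\circ)$ so that $\calW(\Phi)=\calW(\phi)\otimes I$ and $R = I\otimes r$, and conclude on pure tensors by density. Your basis-free argument for block-diagonality (via $\omega(\Phi v,u)=\omega(v,\Phi^{-1}u)$) replaces the paper's symplectic-matrix computation forcing $B=0$, and your explicit check that the $\calH^\circ$ block contributes $1$ to $\eta_{J,\Phi}$ makes precise what the paper dismisses as ``easily implies''; these are refinements of detail, not a different method.
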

\begin{proof}
	Let $\calH$ be as in (\ref{maxCpxSub}).  Since $\calH$ is a symplectic subspace of $E$, one has that
	$E = \calH\oplus \calH^\circ$.  Moreover one identifies $\calH$ with $F$ as K\"ahler vector spaces.
	
	\medskip\noindent
	{ 1.}  We begin by showing that {\em  with respect to this decomposition
	$\Phi$ is of the form}
\[
\calM = \begin{pmatrix}
\phi & 0 \\
0 & I_{\calH^\circ}
\end{pmatrix}
\]
with $\phi\in\text{Sp}(\calH)$.
Already the assumptions on $\Phi$ imply that $\calM$ is of the form
\[
\calM = \begin{pmatrix}
\phi & 0 \\
B & I_{\calH^\circ}
\end{pmatrix}.
\]
Introduce now symplectic bases of $\calH$ and $\calH^\circ$, and replace $\calM$ by the
corresponding matrix.  Then the condition that $\calM$ is symplectic is that
$J = \calM J \calM^T$ where
\[
	J = 
	\begin{pmatrix}
	J_r & 0 \\
	0 & J_{N-r}
	\end{pmatrix}
\]
with  $J_r = \begin{pmatrix} 0 & -I_r \\ I_r & 0	\end{pmatrix}$ (and similarly for $J_{N-r}$).
This implies that $\phi J_r B^T =0$ and $\phi J_r \phi^T = J_r$, which in turn imply that $B=0$.

\medskip\noindent
{ 2.}  Next, consider the metaplectic representation of $\Phi$ in the Bargmann space $\calB(E)$ of $E$.  
The direct sum 
decomposition $E = \calH\oplus \calH^\circ$ implies that
\[
\calB(E) = \calB(\calH)\widehat{\otimes}\calB(\calH^\circ)
\]
(tensor product of Hilbert spaces), and the discussion above easily implies that 
$\text{Mp}(\Phi) = \text{Mp}(\phi)\otimes I$, where $\phi:\calH\to \calH$ is the restriction of $\Phi$ 
to $\calH$.

One can then easily check the commutativity of the diagram (\ref{diagCommutes}) on 
elements of $\calB(E)$ that are pure tensor products $\psi_1\otimes \psi_2$
(since the procedure of reduction is the identity on the first factor).  However, the
span of such elements is dense in $\calB(E)$, and therefore the diagram must commute on all elements
of $\calB(E)$.
\end{proof}

\section{Propagation}

In this section we investigate the semi-classical (or large $k$) limit of the quantum dynamics 
of our squeezed states.  

\subsection{Classical dynamics}
We begin with classical dynamics.  
Let $h:\bbC\bbP^{N-1}\to\bbR$ be a smooth Hamiltonian.  Let us define
\begin{equation}\label{upstairsHam}
H:\bbC^N\setminus\{0\}\to\bbR, \quad H(z) := |z|^2 h\left(\pi\left[\frac{z}{|z|}\right]\right),
\end{equation}
where, recall, $\pi: S^{2N-1}\to\bbC\bbP^{N-1}$ is the projection.  
We will call $H$ the {\em canonical lift} of $h$.

Clearly $H$ is positive-homogeneous of degree two and $S^1$ invariant, in the following sense:
\begin{equation}\label{propsH}
\forall \lambda\in\bbC^*, \  z\in \bbC^N\setminus\{0\}\qquad H(\lambda z) = |\lambda|^2 H(z).
\end{equation}
Conversely, any $H:\bbC^N\setminus\{0\}\to\bbR$ with this property is related to a smooth function
$h$ on $\bbC\bbP^{N-1}$ by (\ref{upstairsHam}).

The following is almost immediate:
\begin{lemma}\label{lemma:isomUpDownstairs}
	The trajectories of the Hamilton flow of the canonical lift of $h$ on $S^{2N-1}$ project onto trajectories
	of the Hamilton flow of $h$.
\end{lemma}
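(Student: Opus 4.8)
The plan is to verify directly that the canonical lift $H$ is constant along the $S^1$-orbits in a way compatible with the Hamiltonian structure, so that its flow descends. Because $H$ is $S^1$-invariant and positive-homogeneous of degree two, its Hamiltonian vector field $X_H$ on $\bbC^N\setminus\{0\}$ (with respect to $\omega = i\,dz\wedge d\zbar$) will be tangent to each sphere $\{|z|=r\}$ and in particular to $S^{2N-1}$; this is the first thing I would establish. Concretely, invariance of $H$ under the $S^1$-action generated by the momentum map $\frakm(z)=|z|^2-1$ means $\{H,\frakm\}=0$, and since $X_\frakm$ is the infinitesimal generator of the circle action (rotation $z\mapsto e^{i\theta}z$), the bracket vanishing says precisely that $dH(X_\frakm)=0$, i.e. $H$ is constant along orbits, and dually that $X_H$ preserves the level sets of $\frakm$. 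Thus $X_H$ is tangent to $S^{2N-1}=\frakm^{-1}(0)$.

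Next I would observe that $X_H$ is itself $S^1$-invariant: since both $\omega$ and $H$ are preserved by the circle action, so is $X_H$. An $S^1$-invariant vector field tangent to $\frakm^{-1}(0)$ descends to a well-defined vector field on the quotient $X=\frakm^{-1}(0)/S^1 = \bbC\bbP^{N-1}$ via the projection $\pi$. The content of the lemma is then the standard fact from symplectic reduction (the Marsden--Weinstein picture already invoked in the introduction) that this pushed-forward field is exactly the Hamiltonian vector field $X_h$ of the reduced Hamiltonian $h$ on $\bbC\bbP^{N-1}$ with its reduced symplectic form. I would verify this by pulling back: on $\frakm^{-1}(0)$ one has $\pi^*\omega_{\bbC\bbP^{N-1}} = \omega|_{S^{2N-1}}$ and $\pi^*h$ agrees with $H|_{S^{2N-1}}$ (up to the normalization $|z|^2=1$), so the defining equation $\iota_{X_h}\omega_{\bbC\bbP^{N-1}} = dh$ pulls back to the restriction of $\iota_{X_H}\omega = dH$, and the identification of horizontal lifts finishes the matching of trajectories.

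The one genuinely delicate point I would flag is the interface between the homogeneous lift and the reduced bracket: $H$ is defined on the cone $\bbC^N\setminus\{0\}$ and equals $h\circ\pi$ only after the rescaling $z\mapsto z/|z|$, so on $S^{2N-1}$ one has $H=h\circ\pi$ exactly, but the derivative $dH$ at points of $S^{2N-1}$ contains a radial component coming from the factor $|z|^2$. I would need to confirm that this radial part is annihilated upon restriction to $T_wS^{2N-1}$ (equivalently, that it lies in the span of $d\frakm$), so that $X_H$ restricted to the sphere genuinely represents the reduced differential $dh$ and not an extra transverse term. This is a short computation using $dH = h\,d(|z|^2) + |z|^2\,\pi^*dh$ together with $d(|z|^2)|_{S^{2N-1}} = d\frakm|_{S^{2N-1}}$, which vanishes on the tangent space to the level set.

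Once tangency, invariance, and the bracket identity are in hand, the conclusion is formal: integral curves of $X_H$ starting on $S^{2N-1}$ stay on $S^{2N-1}$, are permuted by the circle action, and project under $\pi$ to integral curves of $X_h$. I would present this as a two-line verification after setting up the three ingredients above, since the lemma is stated as ``almost immediate'' and the real work is the clean bookkeeping of the radial term.
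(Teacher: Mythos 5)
Your proof is correct, and it supplies exactly the standard Marsden--Weinstein argument that the paper leaves implicit (the paper states the lemma with no proof, deeming it ``almost immediate''). The one point of substance you rightly isolate — that on $S^{2N-1}$ one has $dH = h\,d\frakm + dG$ with $d\frakm$ annihilating $T S^{2N-1}$, so the radial term contributes nothing and $\iota_{X_H}\omega$ restricted to the sphere pulls back $dh$ — is the correct bookkeeping, and together with tangency of $X_H$ to $\frakm^{-1}(0)$ (from $\{H,\frakm\}=0$) it yields $d\pi_w(X_H(w)) = X_h(\pi(w))$ pointwise, hence the projection of trajectories.
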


 will be useful below:
\begin{lemma}\label{lemma:dynamicalImplications}
Consider $h\in C^\infty(\bbC\bbP^{N-1}) $ and $\varpi\in\bbC\bbP^{N-1}$ a critical point.  Let $w\in\pi^{-1}(\varpi)$ and
$\calH\subset T_wS^{2N-1}$ be the horizontal space at $w$, which we identify with $T_\varpi\bbC\bbP^{N-1}$.

If, in addition, $h(\varpi)=0$ then 
$w$ is a critical point of the canonical lift, $H$, of $h$, and with respect to the decomposition
$T_w\bbC^N = \calH\oplus \calH^\circ$ the Hessian of $H$ at $w$ has the block form
\[
\begin{pmatrix}
\ast & 0 \\
0 & 0
\end{pmatrix}
\]
where $\ast$ is the  Hessian of $h$ at $\varpi$.
\end{lemma}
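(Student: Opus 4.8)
The plan is to reduce the whole statement to elementary properties of the angular part $\tilde h := h\circ\Pi$, where $\Pi:\bbC^N\setminus\{0\}\to\bbC\bbP^{N-1}$ is $\Pi(z)=\pi(z/|z|)$, so that $H(z)=|z|^2\,\tilde h(z)$. I work in the linear coordinates on $T_w\bbC^N\cong\bbC^N$. First I would make the decomposition explicit: since $\calC = T_wS^{2N-1}=\{v:\Re\langle v,w\rangle = 0\}$ and $\omega(u,v)=\Im\langle u,v\rangle$, a direct computation gives $\calC^\circ=\bbR\,(iw)$ and $J(\calC^\circ)=\bbR\,w$, so that $\calH^\circ=\calC^\circ+J(\calC^\circ)=\bbC w$ is the complex line through $w$, while $\calH=\calH_w=(\bbC w)^\bot$. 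The structural fact to record is that $\tilde h$ is invariant under the full $\bbC^*$-action $z\mapsto\lambda z$ (immediate from \eqref{propsH}); equivalently it is annihilated by the two generating vector fields $V_1(z)=z$ and $V_2(z)=iz$, whose values $w$ and $iw$ at $w$ span $\calH^\circ$. Everywhere I would use the hypothesis in the form $\tilde h(w)=h(\varpi)=0$.

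For criticality, the Leibniz rule gives $dH_w = 2\Re\langle\,\cdot\,,w\rangle\,\tilde h(w)+d\tilde h_w = d\tilde h_w$, the first term dropping since $\tilde h(w)=0$. It then suffices to see $d\tilde h_w=0$. On $\calH^\circ=\bbC w$ this is the assertion $V_i\tilde h=0$; on $\calH_w$ one checks that the derivative at $w$ of $z\mapsto z/|z|$ restricts to the identity, so $d\Pi_w|_{\calH_w}=d\pi_w|_{\calH_w}$ and hence $d\tilde h_w=dh_\varpi\circ d\pi_w=0$ because $\varpi$ is a critical point of $h$. Thus $dH_w=0$ and $w$ is a critical point of $H$.

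For the Hessian I would expand $\hess{H}_w$ by Leibniz once more; every term carrying a factor $\tilde h(w)$ or $d\tilde h_w$ vanishes by the previous paragraph, leaving $\hess{H}_w=\hess{\tilde h}_w$ as a symmetric bilinear form on $T_w\bbC^N$ (well defined since $w$ is critical). The block structure then splits into two claims. That $\hess{\tilde h}_w$ vanishes whenever one argument lies in $\calH^\circ$ follows from the invariance: differentiating the identity $d\tilde h(V)\equiv 0$ in an arbitrary direction and using $d\tilde h_w=0$ yields $\hess{\tilde h}_w(V(w),\,\cdot\,)=0$, and applying this to $V_1,V_2$ annihilates the $\calH^\circ$ row and column. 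On $\calH_w$, the chain rule for Hessians at a critical point of $h$ gives $\hess{\tilde h}_w(\eta,\eta')=\hess{h}_\varpi(d\Pi_w\eta,d\Pi_w\eta')$, the second-derivative-of-$\Pi$ term dropping out because $dh_\varpi=0$; under the identification $\calH_w\cong T_\varpi\bbC\bbP^{N-1}$ afforded by $d\pi_w|_{\calH_w}=d\Pi_w|_{\calH_w}$ this block is exactly $\hess{h}_\varpi$, which is the asserted form $\begin{pmatrix}\ast&0\\0&0\end{pmatrix}$.

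The explicit computation of $\calC^\circ$ and the Leibniz expansions are routine. The two steps I expect to be the main obstacle are the ones that actually produce the block structure: the symmetry lemma that $V\tilde h\equiv 0$ forces $\hess{\tilde h}_w(V(w),\,\cdot\,)=0$ at a critical point, and the chain rule for the Hessian of the composition $h\circ\Pi$, where one must justify that the second-order contribution of $\Pi$ is killed by $dh_\varpi$. Both are standard, but I would state them carefully so that the identifications $\calH^\circ=\bbC w$ and $\calH_w\cong T_\varpi\bbC\bbP^{N-1}$ remain honest throughout.
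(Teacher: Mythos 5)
Your proof is correct and takes essentially the same route as the paper: writing $H(z)=|z|^2\tilde h(z)$, using $\tilde h(w)=h(\varpi)=0$ together with the $\bbC^*$-invariance of $\tilde h$ and the criticality of $h$ to conclude $dH_w=d\tilde h_w=0$, and then the product rule for Hessians to reduce everything to $\hess{\tilde h}_w$. The only difference is that you make explicit the final block-structure step (the invariance vector fields $z$ and $iz$ killing the $\calH^\circ$ rows and columns, and the chain rule at a critical point identifying the $\calH$ block with $\hess{h}_\varpi$), which the paper leaves implicit after ``$\hess{H}_w=\hess{G}_w$''.
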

\begin{proof}
Let $G(z)= h\left(\pi\left[\frac{z}{|z|}\right]\right)$, so that $H(z) = |z|^2G(z)$.
Since $G(w)=0$, $dH_w = dG_w$.  Since $G$ is homogeneous of degree zero,
$dG(\nu_w) = 0$ where $\nu_w$ is the unit normal to the sphere at $w$.
It is also clear that $dG(\partial_\theta) = 0$, and since $\varpi$ is a critical
point of $h$, $dG_w$ is zero on horizontal vectors as well.  Therefore $dG_w=0$, and 
$w$ is a critical point of $H$.

Let us now consider the Hessian.  By the product rule for Hessians
\begin{alignat*}{1}
\hess{H}_w &= G(w)\hess{|z|^2}_w + |w|^2\hess{G}_w + d(|z|^2)_w\otimes dG_w + dG_w\otimes d{|z|^2}_w \\
\notag		&= \hess{G}_w,
\end{alignat*}
which implies the desired result.
\end{proof}

\medskip
\begin{remark}
If $\varpi$ is a critical point of $h$ but $h(\varpi)$ is not necessarily zero, then we can apply the
previous lemma to $\tilde{h} = h -h(\varpi)$.  Clearly the canonical lifts of these functions are related by
$\widetilde{H} = H - h(\varpi)|z|^2$, and since $\PB{H}{|z|^2}=0$ the Hamilton flow of $H$ restricted to 
the unit sphere agrees with that of $\widetilde{H}$ up to a the action of $e^{ith(\varpi)}\in S^1$.
\end{remark}

\subsection{Quantum propagation}

\subsubsection{Quantization of functions on $\bbC\bbP^{N-1}$}

Let $h:\bbC\bbP^{N-1}\to\bbC$ be a smooth Hamiltonian and $H$ its canonical lift, which we extend to a smooth function on
$\bbC^N$ cutting it off near zero by a radial function. The Weyl quantization of $H$, $\widehat H$, in Bargmann space commutes with the
quantized circle action, and we obtain operators
\[
\hat{h} : \calBk_{\bbC\bbP^{N-1}}\to \calBk_{\bbC\bbP^{N-1}}
\]
simply by restricting $\widehat H$ to $\calBk_{\bbC\bbP^{N-1}}$.  
{\em We will take the sequence of these operators to be the quantization of $h$.}
This recipe is not entirely well-defined due to the cutoff, but different choices of cutoffs lead to equivalent asymptotic
estimates.  It also agrees asymptotically with the Berezin-Toeplitz quantization of $h$.

Our first observation is:
\begin{proposition}
	One has:
	\[
	\hat{h} (\Psi_{A,w}) = h((\pi(w))\Psi_{A,w}\left(1 + O(1/\sqrt{k})\right).
	\]
\end{proposition}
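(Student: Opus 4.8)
The plan is to show that $\widehat{H}$ acts on the reduced state $\Psi_{A,w}$, to leading order, as multiplication by the value $h(\pi(w))$, exploiting the fact that $\Psi_{A,w}$ is microlocalized at the single point $\varpi = \pi(w)$. The starting point is that, by item (1) of Theorem \ref{Main} and the Husimi estimate (\ref{husimi}), the state $\Psi_{A,w}$ concentrates in a ball of radius $O(1/\sqrt{k})$ around $w$ (equivalently, around $\varpi$ downstairs). Since $\widehat{H}$ is a Weyl quantization of the canonical lift $H$, which is a smooth symbol, its action on a coherent state concentrated at $w$ should reproduce the classical value of the symbol at $w$ to leading order, with the symplectic-gradient corrections appearing at order $1/\sqrt{k}$.

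The key steps I would carry out are as follows. First, write $\widehat{H}$ in Bargmann space as an integral operator against the Gaussian Bergman kernel, or equivalently use the covariant (Wick) symbol, so that $\langle \widehat{H}\,\Psi_{A,w}, \phi\rangle$ becomes an integral over $\bbC^N$ (or over $S^{2N-1}$ after reduction) of $H$ times the Husimi density of $\Psi_{A,w}$. Second, perform the local rescaling $z = w + \eta/\sqrt{k}$ adapted to the concentration scale, and Taylor-expand $H$ about $w$:
\begin{equation}\label{taylorH}
H\!\left(w + \frac{\eta}{\sqrt{k}}\right) = H(w) + \frac{1}{\sqrt{k}}\, dH_w(\eta) + O\!\left(\frac{1}{k}\right).
\end{equation}
By (\ref{propsH}) and homogeneity, $H(w) = |w|^2 h(\varpi) = h(\varpi)$ since $|w|=1$. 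Third, insert (\ref{taylorH}) into the integral: the leading constant term $h(\varpi)$ pulls out and reproduces $h(\varpi)\,\Psi_{A,w}$, while the linear term $dH_w(\eta)$ integrates against the symbol $\sigma_A(\eta)$ and, being a first moment, contributes at relative order $1/\sqrt{k}$. This matches the claimed error. Throughout, the normalization constants and the factor-of-$2\pi$ bookkeeping for the reduction should be tracked using the inner-product formula (\ref{innerEstimate}) and the pointwise asymptotics of \S 2.

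The main obstacle I expect is handling the cutoff and the passage between the upstairs operator $\widehat{H}$ on $\calBk_{\bbC^N}$ and its restriction $\hat h$ to $\calBk_{\bbC\bbP^{N-1}}$, together with the fact that $H$ is only homogeneous away from the origin and must be smoothly cut off near zero. One must verify that the cutoff region contributes only $O(k^{-\infty})$, which follows from the rapid decay of the Husimi function away from the unit sphere, but this requires care because $\widehat{H}$ and the reduction operator $\calR_k$ do not commute exactly—only up to the quantized $S^1$-symmetry. The cleanest route is to use that $\widehat H$ commutes with the quantized circle action (so it preserves the $S^1$-invariant subspace and descends to $\hat h$), reducing the problem to estimating $\langle \widehat{H}\,\psi_{A,w}, \psi_{A,w}\rangle / \langle \psi_{A,w},\psi_{A,w}\rangle$ before reduction and then invoking Proposition \ref{prop:exactReducedState} and the symbol calculus to transfer the estimate downstairs. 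The linear-term vanishing at leading order is what guarantees the sharp $O(1/\sqrt{k})$ rate rather than a larger error.
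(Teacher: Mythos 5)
Your proposal circles around both ingredients of the paper's (two-line) proof but gets the crucial one wrong. The paper argues: (i) the Euclidean estimate $\wh{H}\,\psi_{A,w} = H(w)\,\psi_{A,w}\bigl(1+O(1/\sqrt{k})\bigr)$ is well known (it is the $t$-derivative-at-zero content of Theorem \ref{thm:EuclideanProp}), and (ii) $[\wh{H},\calR_k]=0$ \emph{exactly}, so applying $\calR_k$ to (i) gives the Proposition, using $H(w)=h(\varpi)$ for $|w|=1$. Your worry that ``$\wh H$ and the reduction operator $\calR_k$ do not commute exactly---only up to the quantized $S^1$-symmetry'' is backwards: the canonical lift $H$ is $S^1$-invariant by (\ref{propsH}), the cutoff near the origin is radial, hence $\wh{H}$ commutes exactly with the quantized circle action; since $\calR_k$ is built from averaging over that very action followed by the identification with $\calBk_{\bbC\bbP^{N-1}}$, and $\hat h$ is \emph{defined} as the restriction of $\wh{H}$ to the invariant subspace, one has the exact intertwining $\hat h\circ\calR_k=\calR_k\circ\wh{H}$. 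This exactness is the whole point of defining the quantization of $h$ the way the paper does; by discarding it you force yourself into the direct downstairs computation, which your proposal does not complete.

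The second, more serious gap is that both your primary route and your ``cleanest route'' reduce the problem to a scalar: integrating $H$ against the Husimi density, i.e.\ estimating $\inner{\wh{H}\psi_{A,w}}{\psi_{A,w}}/\norm{\psi_{A,w}}^2$. That only yields the expectation-value statement $\inner{\hat h\,\Psi_{A,w}}{\Psi_{A,w}}=h(\varpi)\norm{\Psi_{A,w}}^2\bigl(1+O(1/\sqrt{k})\bigr)$, which is strictly weaker than the Proposition: the claim is a norm estimate on vectors, $\norm{\hat h\,\Psi_{A,w}-h(\varpi)\Psi_{A,w}}=O(1/\sqrt{k})\,\norm{\Psi_{A,w}}$, and a first-moment computation cannot give it, since $\hat h\,\Psi_{A,w}$ could have the correct expectation value while carrying a large component orthogonal to $\Psi_{A,w}$. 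What is needed upstairs is the vector-valued estimate quoted in (i), in which the $O(1/\sqrt{k})$ remainder is an explicit excited-Gaussian state centered at $w$, and then the exact commutation to push it through $\calR_k$. (Even granting (i) and (ii), transferring a \emph{relative} norm error through $\calR_k$ is not automatic---$\calR_k$ rescales norms on the invariant subspace and contracts the coherent state by a factor $\sim k^{-1/4}$ under the projection $P_k$---so one also uses that the Euclidean remainder is a polynomial times a Gaussian at $w$, whose reduction obeys the same loss as $\psi_{A,w}$ itself; this is glossed over by the paper's ``follows immediately,'' but it is invisible in your outline as well.) Your Taylor expansion (\ref{taylorH}) and concentration heuristics are correct as far as they go, but they operate at the level of matrix elements, not of vectors, and so do not prove the statement.
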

\begin{proof}
	The analogous result for the action of $\wh{H}$ on Gaussian coherent states in $L^2(\bbR^N)$ 
	is well-known.   Since $[\wh{H}, \calR] = 0$, the result follows immediately.
\end{proof}

\subsubsection{Propagation of squeezed states in Bargmann space}

We begin by reviewing the propagation of Gaussian states in Bargmann space.    

We need to introduce some notation.  Let $H:\bbR^{2N}\to\bbR$ be a
smooth Hamiltonian which agrees with the canonical lift of a smooth $h:\bbC\bbP^{N-1}\to\bbR$
outside a small neighborhood of the origin, $\wh{H}:\calBk_{\bbC^{N}}\to\calBk_{\bbC^{N}}$ its Weyl
quantization in Bargmann space and $U(t)$ the fundamental solution of the Schr\"odinger equation
$i\h \partial_t U = \wh{H}U$. 
Let $w\in\bbC^N$, $t\mapsto w(t)$ be the trajectory of $H$ through $w$.
For each $t\in\bbR$, let
\begin{equation}\label{}
S(t) :=   H_{\zbar z}(w(t))
\quad\text{and}\quad
R(t) :=  \frac 12 H_{zz}(w(t))
\end{equation}
with $H_{\zbar z} = \begin{pmatrix}
\frac{\partial^2 H}{\partial\zbar_j z_l}
\end{pmatrix}$ etc.
Then, one has:

\begin{theorem}\label{thm:EuclideanProp} (c.f. \cite{CR} \S 4)  Let $A\in\calD_N$.  Then
	\begin{equation}\label{propaUpstairs}
	U(t)\left(\psi_{A,w}\right) = \nu(t) e^{ i k \delta_t}\psi_{A(t),w(t)}\left(1 + O(1/\sqrt{k})\right)
	\end{equation}  
	where:
	\begin{enumerate}
		\item $A(t)$ and $\nu(t)$ solve 
		\begin{align}
		\dot{A}  &= -2i \left(R + \frac{1}{2}(SA+A{S}^T) + A \bar{R} A\right) \quad\text{and} \label{Adot} \\
		\frac{\dot{\nu}}{\nu} &= -i  \left(\frac{\tr(S)}{2} +  \tr(\bar{R}A)\right),  \label{cdot}
		\end{align}
		with $A(0)=A$ and $\nu(0)=1$, and
		\item $\delta_t = -tH(w) + \frac i2\int_0^t  \left(w(s)\dot\wbar(s)-\dot w(s)\wbar(s)\right) ds$.
	\end{enumerate}
	The estimates are uniform for $t$ in a compact interval.
\end{theorem}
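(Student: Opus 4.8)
The plan is to transport to Bargmann space the classical coherent-state propagation argument of \cite{CR}, §4, whose mechanism is threefold: replace $\wh H$ by its quadratic approximation along the classical trajectory, propagate the Gaussian \emph{exactly} by a metaplectic operator, and absorb the cubic remainder into an $O(1/\sqrt k)$ error by Duhamel's principle. First I would set up the approximate propagator. Let $w(t)$ be the Hamilton trajectory through $w$, let $H_2(t)$ be the second-order Taylor polynomial of $H$ centered at $w(t)$, and let $\Phi_t = d\Phi^t_H(w)\in\text{Sp}(2N)$ be the linearized flow (the solution of the variational equation $\dot\Phi_t = J\,\hess{H}(w(t))\,\Phi_t$, $\Phi_0=I$). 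Using the quantum translation operators in Bargmann space (see the Appendix), together with the covariance of the Gaussian family recorded in \eqref{covariance}, I would conjugate so that the moving center sits at the origin, reducing the analysis to propagation generated by the time-dependent quadratic $H_2(t)$. Since $\wh{H_2(t)}$ is a quadratic expression in creation/annihilation operators, its time-ordered exponential is exactly the metaplectic operator $\calW(\Phi_t)$ of the representation quoted from \cite{D}.

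Second, I would read off $A(t)$ and $\nu(t)$ from this metaplectic step. The Gaussian $\psi_{A,0}$ corresponds to a complex Lagrangian plane (the graph determined by $A$); applying $\calW(\Phi_t)$ carries this plane to the graph determined by a new matrix $A(t)$, and expressing the action of $\Phi_t$ on the plane in terms of the Hessian blocks $S=H_{\bar z z}(w(t))$ and $R=\tfrac12 H_{zz}(w(t))$ produces precisely the matrix Riccati equation \eqref{Adot}. The amplitude $\nu(t)$ comes from the determinant prefactor $\eta_{J,\Phi_t}$ of the metaplectic representation: differentiating $\log\eta_{J,\Phi_t}$ along the flow and using $\dot\Phi_t = J\,\hess{H}(w(t))\,\Phi_t$, together with the $A$-dependence of the normalization of the evolved Gaussian, yields \eqref{cdot}, the extra trace $\tr(\bar R A)$ being exactly that normalization contribution.

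Third, the phase $\delta_t$ is the classical action accumulated along the trajectory. Since $H$ is conserved along its own flow, $-\int_0^t H(w(s))\,ds = -tH(w)$, and the symplectic ``$\int p\,dq$'' term, written in the complex coordinates of $\bbC^N$, is exactly $\frac i2\int_0^t\bigl(w(s)\dot{\wbar}(s)-\dot w(s)\wbar(s)\bigr)\,ds$; this quantity is real, so $e^{ik\delta_t}$ is a genuine unimodular phase. Assembling the translation, the metaplectic factor $\calW(\Phi_t)$, the amplitude and the action recovers the right-hand side of \eqref{propaUpstairs} for the quadratic model.

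The main obstacle is the error estimate. I would compare the true propagator $U(t)$ with the approximate one by Duhamel's formula: the difference of generators is $\wh H - \wh{H_2(t)}$, whose Weyl symbol vanishes to third order at $w(t)$. Acting on a state concentrated at scale $k^{-1/2}$ about $w(t)$, each cubic factor gains a power $k^{-1/2}$ while the quantization supplies a compensating power of $k$, so the per-time error is $O(k^{-1/2})$; a Gronwall/Duhamel estimate then gives the stated $O(1/\sqrt k)$ bound, uniform for $t$ in a compact interval. Making this rigorous requires the mapping properties of $\wh H$ on the smooth vectors $\calB^\infty$ and uniform control of the Taylor remainder along the compact trajectory segment, but these are the Bargmann-space analogues of the estimates in \cite{CR}, and they follow from the same Gaussian decay bounds already used for \eqref{husimi}.
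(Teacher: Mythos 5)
Your proposal is correct and follows the paper's overall skeleton --- quadratic Taylor approximation of $H$ along the classical trajectory, conjugation by quantum translations, exact propagation of the Gaussian under the quadratic model, and a Duhamel argument turning the third-order vanishing of $\wh{H}-\wh{H}_2$ into the $O(1/\sqrt{k})$ error --- which is precisely the appendix's scheme, following \cite{CR}. The genuine difference is how you solve the quadratic model. The paper's Proposition \ref{prop:QuadPropa} obtains \eqref{Adot} and \eqref{cdot} by substituting the ansatz $\nu(t)\psi_{A(t),0}$ into the Schr\"odinger equation \eqref{quadIVP} and matching terms; the metaplectic operator appears only a posteriori, as a name for the resulting propagator in Proposition \ref{prop:39}. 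You instead read $A(t)$ off the fractional-linear (M\"obius) action of the linearized flow $\Phi_t$ on the positive Lagrangian plane attached to $A$, and $\nu(t)$ off the Daubechies prefactor $\eta_{J,\Phi_t}$ combined with the Gaussian integral coming from the projection $P_E\circ U_{\Phi_t}$. This is a legitimate alternative with a conceptual payoff: it explains why a matrix Riccati equation appears (it is the infinitesimal M\"obius action) and it meshes directly with \eqref{symbPropaUpstairs} and Theorem \ref{thm:SecondPropaThm}. Its cost is two facts the ansatz method gets for free: (i) that the time-ordered exponential of $\wh{\calQ}(t)$ equals $\calW(\Phi_t)$ with the sign of the square root in $\eta_{J,\Phi_t}$ fixed by continuity from $t=0$ (the paper invokes exactly this convention at \eqref{mPExplicit}), and (ii) the explicit evaluation of $\calW(\Phi_t)\psi_{A,0}$, a Gaussian computation of essentially the same size as the paper's direct substitution. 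Two minor remarks: no Gronwall iteration is needed in your final step, since unitarity of $U(t,s)$ plus the pointwise bound \eqref{prelimEst} already yields an error linear in $t$; and the $O(\h^{3/2})$ estimate \eqref{prelimEst} is stated without proof in the paper as well, so your appeal to the Gaussian decay bounds behind \eqref{husimi} leaves you at parity with the text.
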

In \cite{CR} \S 4 the authors prove a more general result on the propagation of coherent states
in $L^2(\bbR^N)$.  The proof of the previous theorem follows exactly the same scheme.
For ease of reference we sketch the proof in the Appendix.

\begin{remarks} Let us look at some special cases.  
	\begin{enumerate}
		\item If $R\equiv 0$ and $S$ is time-independent, then the Hamilton flow of $\calQ$
		is a one-parameter group of unitary transformations.  The solutions to (\ref{Adot}) and (\ref{cdot}) are
		\[
		A(t) = e^{-itS}A e^{-itS^T}\quad\text{and}\quad \nu(t) = e^{-it \tr (S)/2}.
		\]
		Note that,
		by the covariance property (\ref{covariance}) $\psi_{A(t),0}$ is simply the rotation of $\psi_{A,0}$ by $e^{-itS}$.
		
		\item If, instead, $S\equiv 0$ we get a ``squeezing" effect.  The modulus of the 
		prefactor $\nu(t)$ adjusts the $L^2$ norm of $\psi$ so that it is constant in time.
	\end{enumerate}
	
\end{remarks}

It is instructive to note that 
\begin{equation}\label{}
\psi_{A(t),w(t)} = e^{ik \delta_t}\wh{T}_{w(t)} U_{\calQ_{(t) }}\wh{T}_w^{-1}(\psi_{A, w}),
\end{equation}
where $\wh{T}_w(f) (z) = e^{-k|w|^2/2} e^{kz\wbar} f(z-w)$ is the quantum translation by $w\in\bbC^N$
and $U_{\calQ(t)}$ is the propagator of the Weyl quantization of $\calQ(t)$, $1/2$ the 
quadratic form associated to the Hessian of $H$ at $w(t)$.
$U_{\calQ(t)}$ is the metaplectic operator associated with the Jacobian of $\phi_t:\bbC^N\to\bbC^N$ at $w$, 
where $\{\phi_t\}$ is the Hamilton flow of $H$ (defined by continuity from the identity at $t=0$):
\begin{equation}\label{mPExplicit}
\nu(t)\psi_{A(t), 0} = \text{Mp}(\text{Jac}(\phi_t)_w)(\psi_{A,0}).
\end{equation}
At the level of symbols, (\ref{symbUpstairs}), one can re-write this as
\begin{equation}\label{symbPropaUpstairs}
\nu(t)\sigma_{\psi_{A(t),w(t)}} = \text{Mp}(d(\phi_t)_w)\left(\sigma_{\psi_{A,w}}\right),
\end{equation}
{\em provided one identifies tangent spaces $T_w\bbC^N \cong T_{w(t)}\bbC^N$ using translations}.
This identification is natural, using the affine structure of Euclidean space.  In contrast, no
such identification exists among tangent spaces of $\bbC\bbP^{N-1}$, which complicates the 
description of the symbol of a propagated reduced state.

\subsubsection{Propagation of the reduced coherent states}
The propagation of reduced states follows easily from the Euclidean case.  
Let $h:\bbC\bbP^{N-1}\to \bbR$ be smooth.
We will denote by
\[
V (t) = e^{-ikt \hat{h} }: \calBk_{\bbC\bbP^{N-1}}\to \calBk_{\bbC\bbP^{N-1}}
\]
the quantum propagator on the Bargmann space of the projective space.  In this section we investigate the propagation
$V (t)(\Psi_{A,w})$ of reduced Gaussian states.
The first result is that, to leading order, the propagation
of a squeezed state remains a squeezed state.  
\begin{theorem}\label{thm:FirstPropaThm}
	The evolution $V (t)(\Psi_{A,w})$ of a reduced Gaussian state is of the form
	\begin{equation}\label{redSymbolPropa}
	V (t)(\Psi_{A,w}) = \nu(t) e^{ik\delta_t}\Psi_{A(t),w(t)}\,\left(1 + O(1/\sqrt{k})\right),
	\end{equation}
	where $w(t)$, $\nu(t)$ and $\delta_t$ are as in Theorem \ref{thm:EuclideanProp} with $H$ the canonical lift of $h$.
	The estimates are uniform for $t$ in a compact interval.
\end{theorem}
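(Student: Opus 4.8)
The plan is to reduce the statement to the Euclidean propagation result (Theorem \ref{thm:EuclideanProp}) by exploiting the fact that the reduction operator \emph{exactly intertwines} the two propagators. The starting point is the identity $\calR_k\circ\wh H=\hat h\circ\calR_k$, already used in the proof that $\hat h(\Psi_{A,w})=h(\pi(w))\Psi_{A,w}(1+O(1/\sqrt k))$: it holds because $H$ is $S^1$-invariant, so $\wh H$ commutes with the quantized circle action and hence with the averaging projection $P_k$, while $\hat h$ is by definition the operator induced by $\wh H$ on the invariant subspace $\cong\calBk_{\bbC\bbP^{N-1}}$. Exponentiating — both $\calR_k\circ U(t)$ and $V(t)\circ\calR_k$ solve the same linear ODE $\dot X=-ik\hat h\,X$ with initial value $\calR_k$ — gives the exact relation
\begin{equation*}
\calR_k\circ U(t)=V(t)\circ\calR_k,\qquad U(t)=e^{-ikt\wh H},\quad V(t)=e^{-ikt\hat h}.
\end{equation*}

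First I would apply this to $\psi_{A,w}$, obtaining $V(t)(\Psi_{A,w})=\calR_k\bigl(U(t)\psi_{A,w}\bigr)$. By Theorem \ref{thm:EuclideanProp}, $U(t)\psi_{A,w}=\nu(t)e^{ik\delta_t}\psi_{A(t),w(t)}+e_k$ with $\norm{e_k}=O(k^{-1/2})\,\norm{\psi_{A(t),w(t)}}$, where $A(t)\in\calD_N$ and $w(t)\in S^{2N-1}$ (the sphere is preserved since $\{H,|z|^2\}=0$). Applying the linear operator $\calR_k$ and using $\calR_k\psi_{A(t),w(t)}=\Psi_{A(t),w(t)}$ yields
\begin{equation*}
V(t)(\Psi_{A,w})=\nu(t)e^{ik\delta_t}\Psi_{A(t),w(t)}+\calR_k(e_k),
\end{equation*}
so everything reduces to showing $\norm{\calR_k(e_k)}=O(k^{-1/2})\,\norm{\Psi_{A(t),w(t)}}$. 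Uniformity for $t$ in a compact interval is then automatic, since $A(t)$ traces a compact arc in $\calD_N$ and all constants below depend only on that arc.

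The hard part is exactly this transfer, because $\calR_k$ is badly non-isometric across the two scales: projecting onto the single $S^1$-weight $k$ captures only an $O(k^{-1/4})$ fraction \emph{in amplitude} of a coherent state concentrated on $|z|=1$ (one checks for $A=0$ that $\norm{P_k\psi_{0,w}}\sim c\,k^{-1/4}\norm{\psi_{0,w}}$, and correspondingly $\norm{\calR_k}_{\mathrm{op}}=O(k^{1/4})$). Thus the naive bound $\norm{\calR_k(e_k)}\le\norm{\calR_k}_{\mathrm{op}}\norm{e_k}$ loses a factor $k^{1/4}$ and only yields relative error $O(k^{-1/4})$. To recover the sharp estimate I would use the \emph{structure} of the remainder, not merely its norm: the Combescure--Robert analysis underlying Theorem \ref{thm:EuclideanProp} (sketched in the Appendix, cf. \cite{CR}) writes $e_k$ as a coherent-state-class object concentrated at $w(t)$ on the energy sphere, of the form $k^{-1/2}\bigl(\text{polynomial in }\sqrt k(z-w(t))\bigr)\psi_{A(t),w(t)}+O(k^{-1})\norm{\psi_{A(t),w(t)}}$. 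Each such term has the \emph{same} $O(1/\sqrt k)$ concentration as the principal term, so its reduction is again a reduced state whose norm is governed, through the symbol norm formula \eqref{innerEstimate}, by the $L^2$-norm of the reduction of its symbol, hence is $\asymp k^{-N/2}$ — of the same order as $\norm{\Psi_{A(t),w(t)}}$. The explicit $k^{-1/2}$ prefactor then makes these contributions $O(k^{-1/2})\norm{\Psi_{A(t),w(t)}}$, while the genuinely lower-order tail is killed using the crude polynomial bound $\norm{\calR_k}_{\mathrm{op}}=O(k^{1/4})$. Assembling the reduced terms gives \eqref{redSymbolPropa}.
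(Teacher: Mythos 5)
Your first paragraph is, in essence, the paper's entire proof: the authors simply observe that $\calR_k$ commutes with $U(t)=\exp(-ikt\wh H)$ (because $\wh H$ commutes with the harmonic oscillator, of which $\calR_k$ is a normalized spectral projector) and then invoke Theorem \ref{thm:EuclideanProp}. So on the core route you and the paper agree exactly. The difference is that you press on how the $O(k^{-1/2})$ \emph{relative} norm error transfers through $\calR_k$, and this is a genuine subtlety that the paper's one-line proof passes over in silence. Your quantitative picture is correct: on the weight-$k$ subspace the normalized restriction to the sphere scales norms by a factor $\asymp k^{1/4}$, uniformly over monomials, so $\norm{\calR_k}_{\mathrm{op}}\asymp k^{1/4}$, while $\norm{\Psi_{A,w}}\asymp\norm{\psi_{A,w}}\asymp k^{-N/2}$ (constants depending on $A$); hence the naive bound $\norm{\calR_k(e_k)}\le\norm{\calR_k}_{\mathrm{op}}\norm{e_k}$ delivers only a relative error $O(k^{-1/4})$ downstairs, not the claimed $O(k^{-1/2})$. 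Your repair --- push the \cite{CR} expansion one order further, note that the explicit $k^{-1/2}$ term is a polynomial in $\sqrt{k}(z-w(t))$ times $\psi_{A(t),w(t)}$, whose reduction again has norm $\asymp k^{-N/2}$, and absorb the genuinely lower-order $O(k^{-1})$ tail with the crude $k^{1/4}$ bound --- is the right mechanism, and the powers do work out, since $k^{1/4}\cdot k^{-1}\ll k^{-1/2}$. The one step you assert beyond what the paper supplies is the norm estimate for the reduced correction term: \eqref{innerEstimate} is proved only for reduced \emph{Gaussian} states, so you would need to extend parts (2)--(3) of Theorem \ref{Main} to polynomial-times-Gaussian states (the pointwise decay away from the $S^1$-orbit, the local scaling asymptotics, and the dominated-convergence argument all go through, the limiting symbol becoming a polynomial times a Gaussian), or else invoke the isotropic-function calculus of \cite{GUW} mentioned in the paper's final section. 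Modulo supplying that extension, your argument is complete --- and strictly more careful than the published proof, which implicitly assumes the error passes through reduction at the same relative order.
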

\begin{proof}
	Let $U (t) = \exp\left[-ikt\widehat{H}\right]$.  Simply notice that 
	$[U , \calR ] = 0$, as $\widehat{H}$ and the harmonic oscillator commute and $\calR $ is a
	normalized spectral projector of the latter, and apply Theorem \ref{thm:EuclideanProp}.
\end{proof}

\medskip
Next we address the problem of computing the symbol of $V (t)(\Psi_{A,w})$ for each $t$.
Recall that this symbol is an element of the $\h=1$ Bargmann space
of $T_{\pi(w(t))} \bbC\bbP^{N-1}$.  We can certainly combine (\ref{redSymbolPropa}) with 
Corollary \ref{prop:SameReduction} to obtain the symbol of $V (t)(\Psi_{A,w})$.  However,
in general this symbol lives in a different space than the symbol of $\Psi_{A,w}$.  
It is true that, since the entire construction of reduction is covariant with respect
to the $\text{U}(N)$ action which is transitive on the projective space, 
for a given $t$ we can apply an element of $\text{U}(N)$ and
rotate $w(t)$ back to the initial $w$.   However this element is not unique.

For this reason, we will examine the special case when
\begin{equation}\label{wolog}
\varpi = \pi(w)\quad  \text{is a critical point of }\ h, \text{and}\ h(\varpi)= 0.
\end{equation}
As we have seen in Lemma \ref{lemma:dynamicalImplications}, these assumptions in particular imply that 
$w$ is a critical point of $H:\bbC^N\to \bbR$.

We can then state:
\begin{theorem}\label{thm:SecondPropaThm}
	Under the assumption (\ref{wolog}), for each $t\in\bbR$ the symbol of $V (t)(\Psi_{A,w})$
	is equal to $\text{Mp}(\varphi_t)(\sigma_A)$, 
	where $\sigma_A$ is the symbol of $\Psi_{A,w}$,
	$\varphi_t: T_\varpi\bbC\bbP^{N-1} \to T_\varpi\bbC\bbP^{N-1}$
	is the flow of the Hessian of $h$ at $\varpi$, and $\text{Mp}$ is the metaplectic representation
	in the Bargmann space of $T_\varpi\bbC\bbP^{N-1}$.
\end{theorem}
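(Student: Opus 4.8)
The plan is to reduce everything to the Euclidean propagation statement of Theorem \ref{thm:EuclideanProp} together with the naturality of the metaplectic representation under symplectic quotients, Proposition \ref{prop:MpIsNat}. The crucial preliminary observation is that assumption (\ref{wolog}) makes $w$ a \emph{fixed point} of the dynamics. Indeed, by Lemma \ref{lemma:dynamicalImplications}, $w$ is a critical point of the canonical lift $H$, so $X_H(w)=0$ and the Hamilton trajectory is constant, $w(t)\equiv w$; in particular $\dot w\equiv 0$. Since moreover $H(w)=|w|^2 h(\varpi)=0$, the phase in Theorem \ref{thm:FirstPropaThm} vanishes, $\delta_t=0$. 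Consequently $V(t)(\Psi_{A,w})=\nu(t)\,\Psi_{A(t),w}\,(1+O(1/\sqrt{k}))$, and by Remark \ref{Remark2} its symbol is $\nu(t)\,\sigma_{A(t)}$, which is indeed an element of $\calB(T_\varpi\bbC\bbP^{N-1})$ because the center has not moved.

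Next I would transfer this symbol ``upstairs''. By Lemma \ref{lemma:symbIsReduction} one has $\sigma_{A(t)}=\tfrac{1}{2\pi}R(\sigma_{\psi_{A(t),w}})$, while the Euclidean symbol--propagation identity (\ref{symbPropaUpstairs}), specialized to the fixed point $w(t)=w$ (so that the translation identification of tangent spaces is trivial), gives $\nu(t)\,\sigma_{\psi_{A(t),w}}=\text{Mp}(d(\phi_t)_w)(\sigma_{\psi_{A,w}})$. Combining, the symbol of $V(t)(\Psi_{A,w})$ equals $\tfrac{1}{2\pi}R\big(\text{Mp}(d(\phi_t)_w)\,\sigma_{\psi_{A,w}}\big)$, where $\phi_t$ is the Hamilton flow of $H$ on $\bbC^N$.

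The heart of the argument is to check that $\Phi:=d(\phi_t)_w$ meets the hypotheses of Proposition \ref{prop:MpIsNat} for $\calC=T_wS^{2N-1}$, $\calC^\circ=\bbR\,iw$, $\calH^\circ=\bbC w$, and $\calH=\calH_w$. Because $w$ is a critical point, the linearized Hamilton vector field is $\omega^{-1}\hess{H}_w$, and Lemma \ref{lemma:dynamicalImplications} says $\hess{H}_w$ has the block form $\left(\begin{smallmatrix}\ast & 0\\0 & 0\end{smallmatrix}\right)$ relative to $\calH\oplus\calH^\circ$, with $\ast=\hess{h}_\varpi$. Since $\calH$ and $\calH^\circ$ are symplectically orthogonal, $\omega$ is block diagonal as well, so $\Phi=\exp(t\,\omega^{-1}\hess{H}_w)$ is block diagonal and restricts to the identity on $\calH^\circ$; this is condition (2), and condition (1), $\Phi(\calC)=\calC$, then follows from $\calC=\calH\oplus\calC^\circ$ with $\calC^\circ\subset\calH^\circ$. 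Moreover the reduction of $\Phi$ is exactly $\varphi_t=\exp(t\,\omega_\calH^{-1}\hess{h}_\varpi)$, the flow of the Hessian of $h$ at $\varpi$.

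Finally I would invoke Proposition \ref{prop:MpIsNat} to slide the reduction operator past the metaplectic operator: $R(\text{Mp}(\Phi)\,\sigma_{\psi_{A,w}})=\text{Mp}(\varphi_t)\big(R(\sigma_{\psi_{A,w}})\big)$. Dividing by $2\pi$ and applying Lemma \ref{lemma:symbIsReduction} once more, the symbol of $V(t)(\Psi_{A,w})$ becomes $\text{Mp}(\varphi_t)\big(\tfrac{1}{2\pi}R(\sigma_{\psi_{A,w}})\big)=\text{Mp}(\varphi_t)(\sigma_A)$, as claimed. I expect the main obstacle to be the third paragraph: verifying that the \emph{linearized} flow genuinely preserves the co-isotropic $T_wS^{2N-1}$ and acts trivially on $\calH^\circ=\bbC w$, which is precisely what the block structure of $\hess{H}_w$ from Lemma \ref{lemma:dynamicalImplications} provides; a lesser point requiring care is the clean vanishing of $\delta_t$, which hinges on $w$ being an honest fixed point and on the normalization $h(\varpi)=0$.
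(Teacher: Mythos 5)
Your proof is correct and follows essentially the same route as the paper's: invoke Proposition \ref{prop:MpIsNat} with $\Phi = d(\phi_t)_w$, verify its hypotheses via the block structure of $\hess{H}_w$ from Lemma \ref{lemma:dynamicalImplications}, and combine with Theorem \ref{thm:FirstPropaThm} (via (\ref{symbPropaUpstairs}) and Lemma \ref{lemma:symbIsReduction}) to slide $R$ past the metaplectic operator. You have in fact spelled out two steps the paper leaves terse---that $w(t)\equiv w$ and $\delta_t=0$ under (\ref{wolog}), and that the linearized flow $\exp(t\,\omega^{-1}\hess{H}_w)$ is block diagonal, giving conditions (1) and (2)---which is a welcome expansion rather than a deviation.
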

\begin{proof}
	We will apply Proposition \ref{prop:MpIsNat}, with $E = T_w\bbC^N$, $\calC = T_w S^{2N-1}$ and 
	$\Phi: E\to E$ equal to the differential at $w$ of the time $t$ map of the
	Hamilton flow of $H$, $\Phi = d(\phi_t)_w$.  Let us identify the various relevant subspaces
	of $E$.  One has
	\[
	\calC^\circ =\{siw\;;\; s\in\bbR\}\quad\text{and}\quad J(\calC^\circ) = 
	\{sw\;;\; s\in\bbR\},
	\]
	and $\calH$ is the horizontal subspace
$\calH = \left(\bbC w\right)^\bot$,
	where the orthogonal is with respect to the standard Hermitian form on $\bbC^N$.
	The reduction $\calC/\calC^\circ$ 
	is naturally identified with $W$ and with $T_\varpi\bbC\bbP^{N-1}$.  
	Finally, observe that $\calH^\circ = \bbC w$.

	We need to verify that the hypotheses (1) and (2) of Proposition \ref{prop:MpIsNat} are satisfied.
	This follows by Lemma \ref{lemma:dynamicalImplications}, because $\Phi$ is the time $t$ map of
	the Hamilton flow of the Hessian of $H$ at $w$.
	Therefore Proposition \ref{prop:MpIsNat} applies to the present situation, which concludes the proof in view of Theorem \ref{thm:FirstPropaThm}.
\end{proof}

\subsection{Some examples}

Let us look at some examples of propagation with $N=2$.  Let $L_j:\bbC^2\to\bbR$ be given by
\begin{align*}
L_1 &= \Re(z_1\zbar_2) = \frac{1}{2} (q_1q_2 + p_1 p_2),  \\
L_2 &=\Im(z_1\zbar_2) =  \frac{1}{2} (q_1p_2 - p_1 q_2),  \\
L_3 &= \frac{1}{2} \left(|z_1|^2- |z_2|^2\right) = \frac{1}{4} (q_1^2+p_1^2 -q_2^2 - p_2^2)
\end{align*}
where we have let $z_j = \frac{1}{\sqrt{2}}(q_j-ip_j)$.  Then
$\PB{L_1}{L_2} = L_3$ and cyclic permutations.  These functions are the components of the moment map of the
$\SU(2)$ action on $\bbC^2$ with respect to the standard Pauli matrices, and they all commute with the 
circle action.  Therefore they descend to smooth functions
\[
\ell_j: \bbC\bbP^1\to\bbR
\]
which are the components of the $\SU(2)$ Hamiltonian action on the complex projective line.  Since the $L_j$
are quadratic, they are the canonical lift of the $\ell_j$.  

Using the coordinate $\zeta = z_2/z_1$ and writing $\zeta = x+iy$, the $\ell_j$'s are defined as 
\begin{align*}
\ell_1 := \Re \frac{ \zeta}{1+|\zeta|^2}, \qquad
\ell_2 :=\Im \frac{ \zeta}{1+ |\zeta|^2},
\qquad
\ell_3 := \frac{1}{2} \frac{|\zeta|^2 -1}{|\zeta|^2+1}.
\end{align*}

Each $\ell_j$ has two critical points.  Since the $\SU(2)$ action is an isometry, the Hessian
of $\ell_j$ at any fixed point $\varpi$ generates a unitary transformation of $T_\varpi \bbC\bbP^1$, which is simply
a rotation.  Under the quantum propagation of $\hat{\ell}_j$ a squeezed state at $\varpi$ simply rotates, and 
its symbol does as well.

More interesting is the action of e.g.
\begin{equation}\label{}
	h = a^2\ell_1^2 - b^2\ell_2^2, \quad a, b\geq 0 .
\end{equation}
The point $\varpi = \pi(1,0)$ is a critical point of $h$, and $h(\varpi)=0$. To apply Theorem
\ref{thm:SecondPropaThm} we need to identify the Hessian of $h$ at $\varpi$.

Using the approximation $\frac{1}{1+|\zeta|^2}\sim 1 -|\zeta|^2$, one readily checks that the Taylor expansion
of $h$ at the origin begins with
\[
h (\zeta) \sim \frac{(a^2+b^2)}{4} (\zeta^2 + \overline{\zeta}^2) + \frac{a^2-b^2}{2}\zeta\overline{\zeta}.
\]
Let us now choose $a=b=1/\sqrt{2}$ so that $h\sim \frac 14 (\zeta^2 + \overline{\zeta}^2)$.  
If $\mathfrak{z}$ is a complex coordinate on $T_\varpi \bbC \bbP^1$, the symbol 
$\sigma(\mathfrak{z},t) = f(\mathfrak{z},t)e^{-|\mathfrak{z}|^2/2}$ of a propagated squeezed
state centered at the origin solves the Schr\"odinger equation
\begin{equation}\label{schrodinger}
i \frac{\partial f(\mathfrak{z},t)}{\partial t} = 
\frac{1}{4}\left(\mathfrak{z} ^2 + \frac{d^2}{d\mathfrak{z}^2}\right) \, f(\mathfrak{z},t).
\end{equation}
We choose the time-evolved ansatz to be $f(\mathfrak{z},t)= \nu(t) \, e^{\mu(t) \mathfrak{z}^2/2}$.
We can now apply Theorem \ref{thm:EuclideanProp} (which now gives an exact solution) with $R = 1/4$ and $S=0$,
and conclude that $\nu$ and $\mu$ satisfy
\[
\dot{\mu} = \frac{1}{2i}(1+\mu^2)\quad\text{and}\quad \dot{\nu} = -\frac{i}{4}\mu\,\nu.
\]
Let us impose the initial conditions $\mu(0)=0$ and $\nu(0) = 1/(\pi\sqrt{2})$, which 
correspond to the symbol of the standard SU$(2)$ coherent state at the origin.
We find that the solutions to these ODEs are 
\begin{align*}
\mu(t) = -i\tanh(t/2), \qquad \qquad 
\nu(t) = \frac{1}{\pi \sqrt{2}} \frac{1}{\sqrt{\cosh(t/2)}},
\end{align*}
and therefore
\begin{equation*}
\sigma(\mathfrak{z},t) = \frac{1}{\pi \sqrt{2}} \frac{1}{\sqrt{\cosh(t/2)}} e^{-i \tanh(t/2)\mathfrak{z}^2/2} \, e^{-|\mathfrak{z}|^2/2}.
\end{equation*}
Making reference to the 
standard squeezed states \eqref{ketMuNormalized}, we can conclude that in this case
\begin{equation}\label{example-propagation}
e^{-ikt\hat{h}}\ket{o,0} = \nu(t)\ket{o,\mu(t)} \left(1 + O(1/\sqrt{k})\right)
\end{equation}
where the functions $\nu(t)$ and $\mu(t)$ and the Hamiltonian $\hat{h}$ are as above.

\medskip
Figure \ref{fig:propagationcoeffs} compares numerically
the left-hand-side and right-hand-side of \eqref{example-propagation} for $k=30$ and $t=2$. 
In order to compute the left-hand-side of \eqref{example-propagation}, we have written the quantum Hamiltonian $\hat{h}$ as
\begin{equation*}
	\hat{h} = a^2 \hat{L}_1 -b^2 \hat{L}_2
\end{equation*} 
where, as matrices in the basis of \eqref{ketn}, $\hat{L}_1$ and $\hat{L}_2$ are given by
\begin{align*}
	\hat{L}_1 \ket{n} &= \frac{1}{2k}\left[\sqrt{n(k-n+1)} \: \ket{n-1} + \sqrt{(k-n)(n+1)} \: \ket{n+1}\right] \\
	\hat{L}_2 \ket{n} &= \frac{i}{2k}\left[\sqrt{n(k-n+1)} \: \ket{n-1} - \sqrt{(k-n)(n+1)} \: \ket{n+1}\right] 
\end{align*}
for $n=0, \dots, k$. Notice that these matrices only have nonzero entries along the sub-diagonal and the super-diagonal. These matrices can be found using the operators in Lemma 3.2 and Lemma 3.4 in \cite{BGPU}.

\begin{figure}[h!]
	\centering
	\includegraphics[width=0.6\linewidth]{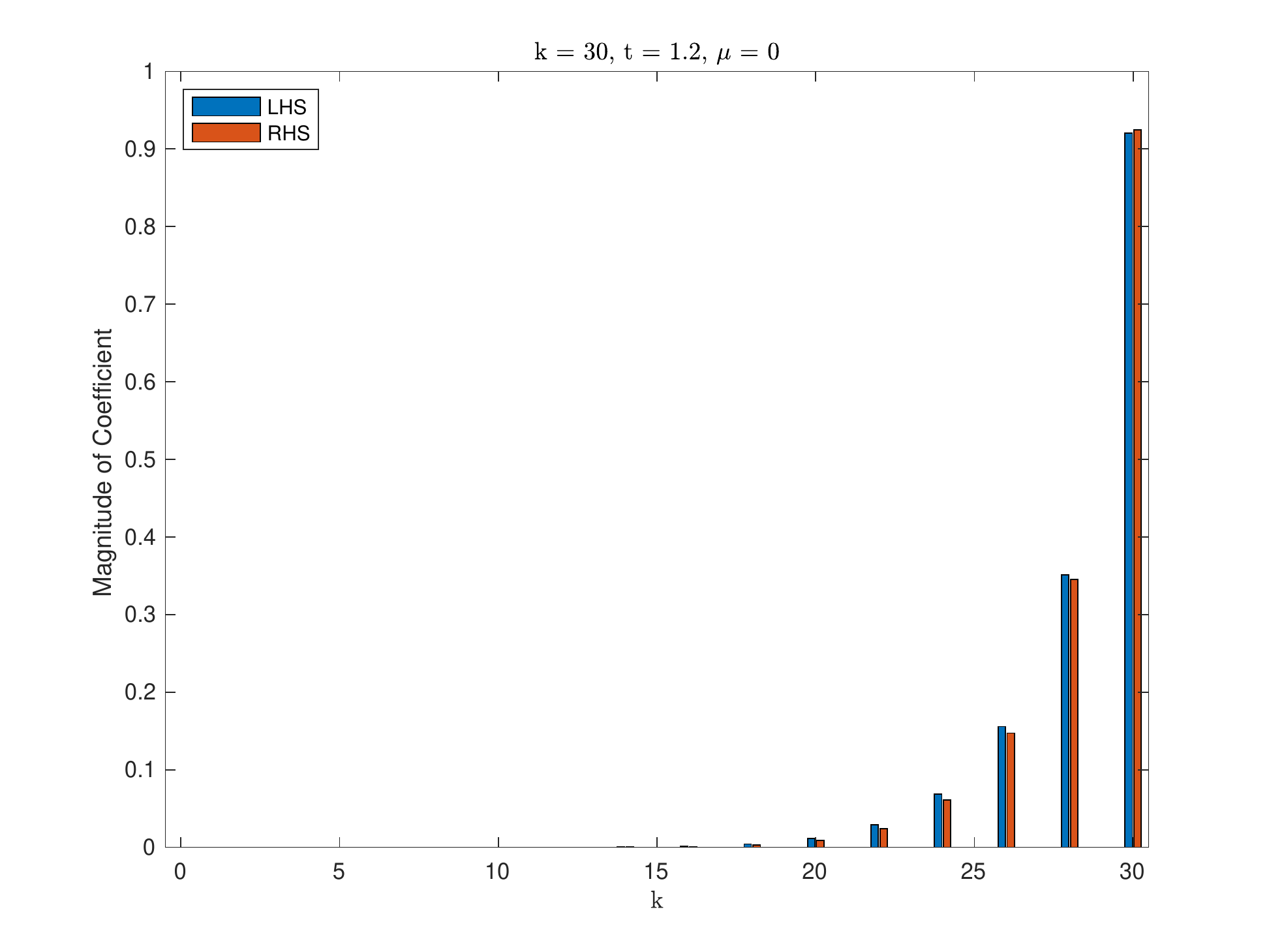}
	\caption{Plot of the magnitudes of the components of the normalized vectors on 
		both sides of \eqref{example-propagation} for $k=30$ and $t=1.2$. The difference in the $\ell^2-$norm is $|\text{LHS} - \text{RHS}| \approx 1.47 \times 10^{-2}$.}
	\label{fig:propagationcoeffs}
\end{figure}

\section{Final comments}

Since not every K\"ahler manifold is the reduction of a Euclidean space,
one can wonder how to construct squeezed coherent states, in general.

Let $\calL\to X$ be a holomorphic line bundle quantizing a  K\"ahler manifold $X$.  
The Bergman projector is the orthogonal projection
\[
\Pi_k: L^2(X,\calL^{\otimes k}) \to \calBk_X.
\]
Given $\varpi\in X$, it is easy to construct sequences of smooth sections of $\calL^k$
concentrating at $\varpi$, for example, a Gaussian in adapted coordinates times an adapted
section, in the sense of \S 3.3.  One can then apply $\Pi_k$ term-by-term to that sequence.
The resulting sequences of holomorphic sections (as well as the original sequence of smooth sections
and the Bergman kernel itself)
are special kinds of {\em isotropic functions} in the sense of \cite{GUW}. 
The symbol calculus follows from the general theory in {\em op.cit.}.  The states that
we have studied here could also have been constructed this way.

\appendix
\section{Propagation of coherent states in Bargmann space}

Here we sketch a derivation of a theorem on the propagation of Gaussian coherent states in Bargmann space.
We follow the approach of \cite{CR}, Chapter 4.

\subsection{Translations}
Let $a = (a_1,\ldots, a_N)$ and $a^* = (a_1^*,\ldots, a_N^*)$ be the (vectors of the) creation
and annihilation operators.  In Bargmann space, these are
\[
a_j = \h \frac{\partial\ }{\partial z_j}\quad \text{and}\quad a_j^* = \text{multiplication by } z_j.
\]
It is clear that $[a_j, a^*_k] = \delta_{jk}\h\, I.$
The position and momentum operators are
\begin{align*}
\widehat{Q} := \frac{1}{\sqrt{2}} (a^* +a) ,  \qquad
\widehat{P} := \frac{i}{\sqrt{2}} (a^* -a). 
\end{align*}	
Then the quantum translation by $w$ (or Weyl operator) 
\begin{equation}\label{}
\widehat{T}_w = \exp\left(i\h^{-1}\left[p\cdot\widehat{Q}- q\cdot\widehat{P}\right]\right),
\end{equation}
where $w = \frac{1}{\sqrt{2}}(q-ip)$ and $\h = 1/k$, is
$\widehat{T}_w = e^{\hinv\left(\wbar\cdot a^* - w\cdot a\right)}$,
which can be seen to be equal to
\begin{equation}\label{qTransl}
\widehat{T}_w = e^{-|w|^2/2\h} e^{\hinv \wbar\cdot a^*}e^{-\hinv w\cdot a}.
\end{equation}
This is equivalent to $\widehat{T}_w(f)(z) = e^{-|w|^2/2\h} e^{z\wbar/\h} f(z-w)$, an expression we have used before.

\medskip	
Let $t\mapsto w(t)$ be any smooth curve.  Below it will be necessary to have a formula for
$\frac{d\ }{dt}\widehat{T}_{w(t)}$. 
\begin{lemma}
	\begin{equation}\label{timeDerTransl}
	\frac{d\ }{dt}\widehat{T}_{w(t)}= \hinv
	\widehat{T}_{w(t)}\left[\frac 12 \left(w\cdot\dot{\wbar}- \dot{w}\cdot\wbar\right)
	+ \dot{\wbar}\cdot a^*- \dot{w}\cdot a\right].
	\end{equation}
\end{lemma}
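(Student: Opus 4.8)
The plan is to differentiate the exponential form $\widehat{T}_w = e^{\hinv(\wbar\cdot a^* - w\cdot a)}$ directly, keeping careful track of the non-commutativity. Write $\widehat{T}_{w(t)} = e^{X(t)}$ where $X(t) := \hinv\bigl(\wbar(t)\cdot a^* - w(t)\cdot a\bigr)$, so that $\dot{X}(t) = \hinv\bigl(\dot{\wbar}\cdot a^* - \dot{w}\cdot a\bigr)$. The point to stress at the outset is that $X(t)$ and $\dot X(t)$ do \emph{not} commute, so one cannot simply write $\frac{d}{dt}e^{X} = \dot X\, e^{X}$; the correction produced by this non-commutativity is precisely the scalar ``phase'' $\tfrac12(w\cdot\dot{\wbar} - \dot{w}\cdot\wbar)$ appearing in the statement.

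The key computation is the commutator $[X,\dot X]$. Using the canonical relations $[a_j, a_k^*] = \delta_{jk}\h$ together with $[a_j,a_k] = [a_j^*,a_k^*]=0$, a short calculation gives
\[
[X,\dot X] = \hinv\bigl(\wbar\cdot\dot w - w\cdot\dot{\wbar}\bigr)\, I ,
\]
a scalar multiple of the identity, hence a \emph{central} element. This is the crucial structural fact: it reflects that $X$ and $\dot X$ lie in the Heisenberg Lie algebra, whose commutators are central.

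With this in hand I would invoke the standard formula for the (right) logarithmic derivative of an operator exponential,
\[
\widehat{T}_{w(t)}^{-1}\,\frac{d}{dt}\widehat{T}_{w(t)} = \int_0^1 e^{-s\,\ad_{X}}(\dot X)\, ds .
\]
Because $\ad_X(\dot X) = [X,\dot X]$ is central, $\ad_X^2(\dot X) = 0$ and the exponential series truncates after two terms, $e^{-s\,\ad_X}(\dot X) = \dot X - s\,[X,\dot X]$. Integrating over $s\in[0,1]$ yields $\dot X - \tfrac12[X,\dot X]$, so that $\frac{d}{dt}\widehat{T}_{w(t)} = \widehat{T}_{w(t)}\bigl(\dot X - \tfrac12[X,\dot X]\bigr)$.

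Finally I would substitute the expressions for $\dot X$ and $[X,\dot X]$ and simplify. Since the dot product is symmetric, $\dot w\cdot\wbar = \wbar\cdot\dot w$, whence $-\tfrac12[X,\dot X] = \tfrac12\hinv\bigl(w\cdot\dot{\wbar} - \dot w\cdot\wbar\bigr)$; pulling out the common factor $\hinv$ reproduces exactly (\ref{timeDerTransl}). The only real obstacle is recognizing that the entire content of the lemma is the centrality of $[X,\dot X]$ together with the resulting truncation of the $\ad$-series; everything else is bookkeeping with the commutation relations, where the main caution is to keep the signs and the ordering of the operator factor $\widehat T_{w(t)}$ (on the left) consistent. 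As a cross-check, one can instead differentiate the normal-ordered form (\ref{qTransl}) term by term, which produces the same phase from $\frac{d}{dt}e^{-|w|^2/2\h}$ together with the commutator of $e^{-\hinv w\cdot a}$ with $a^*$.
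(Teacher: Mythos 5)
Your proof is correct, and it takes a genuinely different route from the paper's. The paper differentiates the normal-ordered factorization (\ref{qTransl}), $\widehat{T}_w = e^{-|w|^2/2\h}\, e^{\hinv \wbar\cdot a^*}\, e^{-\hinv w\cdot a}$, by the product rule: the derivative of the Gaussian prefactor produces the scalar $-\tfrac{1}{2}\hinv(\dot w\cdot\wbar + w\cdot\dot{\wbar})$, the middle factor produces $\hinv\,\dot{\wbar}\cdot a^*$, and the commutator identity $[\dot{\wbar}\cdot a^*, e^{-\hinv w\cdot a}] = (w\cdot\dot{\wbar})\, e^{-\hinv w\cdot a}$ (equation (\ref{claimComm})) is what lets one push $\dot{\wbar}\cdot a^*$ to the right of the annihilation factor, at the cost of the extra scalar $w\cdot\dot{\wbar}$; collecting the three contributions gives (\ref{timeDerTransl}). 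You instead differentiate the single Weyl exponential $e^{X(t)}$ via the Duhamel/$\ad$-series formula for the logarithmic derivative and observe that $[X,\dot X]$ is central, so the series truncates and the scalar phase is identified as exactly $-\tfrac12[X,\dot X]$; your algebra checks out, including the sign bookkeeping $[X,\dot X]=\hinv(\wbar\cdot\dot w - w\cdot\dot{\wbar})$. What your route buys is structure: it exhibits the lemma as a general fact about curves in the exponential of a two-step nilpotent (Heisenberg) Lie algebra, without presupposing the normal-ordered factorization --- which is itself a Baker--Campbell--Hausdorff consequence of the same centrality. What the paper's route buys is economy of tools: only the product rule and one explicit commutator, staying close to the form (\ref{qTransl}) that is reused elsewhere in the appendix (your closing cross-check is precisely the paper's argument). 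The one point you should make explicit on your route is that the identity $\widehat{T}_{w(t)}^{-1}\frac{d}{dt}\widehat{T}_{w(t)} = \int_0^1 e^{-s\,\ad_X}(\dot X)\,ds$ is being invoked for unbounded operators, so it should be read on a dense invariant domain (e.g.\ the smooth vectors $\calB^\infty$, or polynomials times the Gaussian); the paper's own manipulations are formal in exactly the same sense, so this is a matter of stating the domain, not a gap.
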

\begin{proof}
	We will use (\ref{qTransl}).  By the product rule, we get the sum
	of three terms, one for each factor. The derivative of the middle factor is
	\[
	\frac{d\ }{dt} e^{\hinv \wbar\cdot a^*} = \hinv e^{\hinv \wbar\cdot a^*} \dot{\wbar}\cdot a^*.
	\]
	We want to commute $\dot{\wbar}\cdot a^*$ with the third factor.
	One can show that
	\begin{equation}\label{claimComm}
	\left[\dot{\wbar}\cdot a^*, e^{-\hinv w\cdot a} \right] =( w\cdot\dot{\wbar}) e^{-\hinv w\cdot a}.
	\end{equation}
%
	Collecting terms we get that the left-hand side of (\ref{timeDerTransl}) is
	\[
	\hinv\widehat{T}_w
	\left[
	-\frac 12 \left(\dot{w}\cdot\wbar + w\cdot\dot{\wbar} \right) + w\cdot\dot{\wbar} + \dot{\wbar}\cdot a^* - \dot{w}\cdot a
	\right].
	\]	
\end{proof}

We will also need:
\begin{lemma} The translation operator acts on the annihilation and creation operators in the following manner:
	\begin{align*}
	\widehat{T}_w \, a \, \widehat{T}_w^{-1} &=  a -\wbar I \\
	\widehat{T}_w \, a^* \, \widehat{T}_w^{-1} &=  a^* -w I.
	\end{align*}
\end{lemma}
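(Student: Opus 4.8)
The plan is to conjugate by the exponential using the Hadamard (Baker--Campbell--Hausdorff) identity
\[
e^{X}\,Y\,e^{-X} = Y + [X,Y] + \frac{1}{2!}[X,[X,Y]] + \cdots,
\]
applied with $X = \hinv\left(\wbar\cdot a^* - w\cdot a\right)$, so that $\widehat{T}_w = e^{X}$ and $\widehat{T}_w^{-1} = e^{-X}$. The decisive observation is that the first-order commutators $[X,a_j]$ and $[X,a_j^*]$ turn out to be scalar multiples of the identity; since a scalar commutes with $X$, every higher term in the series vanishes and the formula collapses to $Y+[X,Y]$.

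First I would compute $[X,a_j]$ component by component. From the canonical commutation relations $[a_l,a_k^*]=\delta_{lk}\h\,I$ one has $[a_l^*,a_j]=-\delta_{lj}\h\,I$ and $[a_l,a_j]=0$, so
\[
[X,a_j] = \hinv\sum_l\left(\wbar_l\,[a_l^*,a_j] - w_l\,[a_l,a_j]\right) = -\wbar_j\,I.
\]
As this is a multiple of $I$, we get $[X,[X,a_j]]=0$ and the series terminates, yielding $\widehat{T}_w\,a_j\,\widehat{T}_w^{-1} = a_j - \wbar_j\,I$, i.e. the first identity $\widehat{T}_w\,a\,\widehat{T}_w^{-1} = a-\wbar\,I$.

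The second identity is entirely analogous: using $[a_l^*,a_j^*]=0$ and $[a_l,a_j^*]=\delta_{lj}\h\,I$ one finds
\[
[X,a_j^*] = \hinv\sum_l\left(\wbar_l\,[a_l^*,a_j^*] - w_l\,[a_l,a_j^*]\right) = -w_j\,I,
\]
again a scalar, so the series truncates and $\widehat{T}_w\,a^*\,\widehat{T}_w^{-1} = a^* - w\,I$.

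There is no genuine obstacle here; the only thing requiring care is the bookkeeping of the prefactor $\hinv$ against the factor of $\h$ produced by each commutator (they cancel to leave the clean coefficients $-\wbar_j$ and $-w_j$) together with the signs in the commutation relations. As alternatives one could read the result off the factored form \eqref{qTransl} of $\widehat{T}_w$, or verify it by differentiating $t\mapsto \widehat{T}_{tw}\,a\,\widehat{T}_{tw}^{-1}$ with the aid of \eqref{timeDerTransl}; the Hadamard computation above is the most direct, so I would present that one.
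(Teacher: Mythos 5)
Your proof is correct, but it follows a genuinely different route from the paper's. The paper verifies the first identity by direct computation in the Bargmann representation: using the factored form \eqref{qTransl} (equivalently $\widehat{T}_w(f)(z) = e^{-|w|^2/2\h}\,e^{z\wbar/\h}f(z-w)$) together with $a_j = \h\,\partial/\partial z_j$ and $a_j^* = $ multiplication by $z_j$, one evaluates $\bigl(\widehat{T}_w\,a\,\widehat{T}_w^{-1}\bigr)(f)(z)$ and lets the exponential factors cancel; the second identity is then obtained not by a second computation but by taking adjoints, since $\widehat{T}_w$ is unitary and $(a_j-\wbar_j I)^* = a_j^* - w_j I$. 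Your argument is instead purely algebraic: it uses only the canonical commutation relations and the Hadamard expansion, which truncates because $[X,a_j]$ and $[X,a_j^*]$ are multiples of the identity. What your route buys is representation independence --- the identities hold for any realization of the CCR, with no reference to the explicit Bargmann formulas --- and a symmetric treatment of the two statements. What the paper's route buys is that it sidesteps the (mild) functional-analytic issue of expanding $e^{X}Ye^{-X}$ as an operator series when $X$ and $Y$ are unbounded; in your approach the clean way to make this airtight is exactly the alternative you mention, differentiating $t\mapsto \widehat{T}_{tw}\,a_j\,\widehat{T}_{tw}^{-1}$ on a dense domain, where only the first commutator ever appears. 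Either way the computation is short; you could also borrow the paper's adjoint trick to halve the work in your version.
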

\noindent
	The proof follows directly by calculating $\left(\widehat{T}_w \, a \, \widehat{T}_w^{-1} \right)(f)(z) $ using \eqref{qTransl}.
	The formula for the creation operator is found by taking conjugates. 
	
	\subsection{Quadratic Hamiltonians and Mp representation}
	
	The most general quadratic quantum Hamiltonian in $\bbC^N$ obtained by Weyl quantization is given by
	\begin{equation}
	\widehat{\calQ} =  a^* R (a^*)^T +  a^* S a^T + \h\frac{\tr(S)}{2} +  a \bar{R} a^T,
	\label{eq:quantum_hamiltonian}
	\end{equation}
	where 
	 $\h = 1/k$ 
	and $R$ and $S$ are $N \times N$ complex matrices with $R^T = R$ and $\bar{S}^T=S$. This operator acts on 
	$\psi(z) =f(z)e^{-k|z|^2/2}$ by acting on $f$.
	The corresponding classical Hamiltonian (the principal symbol of $\widehat{\calQ}$) is 
	the real quadratic form
	\begin{equation}\label{}
	\calQ (z) =2\Re( zRz^T) + \zbar S z^T.
	\end{equation}
	
	Let $A\in\calD_N$.  We will take $R$ and $S$ to be time-dependent (this is needed below).  
	We are interested in solving the initial value problem 
	\begin{equation}\label{quadIVP}
	i\h \frac{\partial \psi}{\partial t} =  \widehat{\calQ}(t)\psi,\qquad \psi|_{t=0} = \psi_{A,0}.
	\end{equation}
	Note that the origin is a fixed point of the Hamilton field of $\calQ$.
	\begin{proposition}\label{prop:QuadPropa}
		The solution of (\ref{quadIVP}) is
		\begin{equation}\label{quadIVPSol}
		\psi = \nu(t)\psi_{A(t), 0},
		\end{equation}
		where $A(t)$ and $\nu(t)$ solve (\ref{Adot}) and (\ref{cdot})
%
		with $A(0)=A$ and $\nu(0)=1$.
	\end{proposition}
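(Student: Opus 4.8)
The plan is to verify the proposed solution directly and then appeal to uniqueness. Since $\widehat{\calQ}(t)$ is (essentially) self-adjoint, the initial value problem (\ref{quadIVP}) has a unique solution, so it suffices to show that $\psi(t):=\nu(t)\,\psi_{A(t),0}$, with $A(t)$ and $\nu(t)$ the solutions of (\ref{Adot}) and (\ref{cdot}) subject to $A(0)=A$, $\nu(0)=1$, satisfies the Schr\"odinger equation; the initial condition $\psi|_{t=0}=\psi_{A,0}$ is then immediate. Throughout I would work in the holomorphic (Fock) picture, writing $\psi_{A,0}=f_A\,e^{-k|z|^2/2}$ with $f_A(z)=e^{kQ_A(z)/2}$, and recall that $\widehat{\calQ}$ acts only on the holomorphic factor $f$, with $\h=1/k$.

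First I would compute the left-hand side. Since $\partial_t f_{A(t)}=\tfrac{k}{2}\,z\dot A z^T\,f_{A(t)}$, a short calculation gives
\[
i\h\,\partial_t\psi=\left[\frac{i}{k}\,\frac{\dot\nu}{\nu}+\frac{i}{2}\,z\dot A z^T\right]\psi.
\]
The heart of the matter is the action of $\widehat{\calQ}$ on a Gaussian. The creation operators $a^*$ (multiplication by $z$) are harmless, while the key computation is that on $f_A$ the annihilation operator acts as multiplication, $a\,f_A=(zA)\,f_A$, because $\h\,\partial_{z_j}(kQ_A/2)=(zA)_j$. Iterating this and retaining the derivative of the $(zA)_l$ factor yields the one nontrivial constant term,
\[
a\bar R a^T f_A=\left[\h\,\tr(\bar R A)+z A\bar R A\,z^T\right]f_A.
\]
Combining this with $a^*R(a^*)^T f_A=z R z^T f_A$, with $a^*S a^T f_A=z S A\,z^T f_A$, and with the explicit Weyl-ordering constant $\h\,\tr(S)/2$, I obtain
\[
\widehat{\calQ}\,\psi=\left[z R z^T+z S A\,z^T+z A\bar R A\,z^T+\frac{1}{k}\left(\frac{\tr S}{2}+\tr(\bar R A)\right)\right]\psi.
\]

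Matching the two sides then finishes the proof. Reading the terms quadratic in $z$ as symmetric quadratic forms, so that $z S A z^T$ is replaced by $Q_{(SA+AS^T)/2}$ (using $A^T=A$, while $R$ and $A\bar R A$ are already symmetric), and equating the $z$-independent terms, gives
\[
\frac{i}{2}\,\dot A=R+\tfrac12(SA+AS^T)+A\bar R A,\qquad \frac{\dot\nu}{\nu}=-i\left(\frac{\tr S}{2}+\tr(\bar R A)\right),
\]
which are precisely (\ref{Adot}) and (\ref{cdot}). I expect the main obstacle to be purely bookkeeping: correctly symmetrizing the non-symmetric product $SA$, and checking that the two sources of constant terms---the commutator contribution $\h\,\tr(\bar R A)$ coming from the anti-normal ordering in $a\bar R a^T$, and the explicit $\h\,\tr(S)/2$ from the Weyl ordering---assemble into exactly the right-hand side of (\ref{cdot}).

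Finally, to legitimize the ansatz I would note that the matrix Riccati flow (\ref{Adot}) preserves the generalized unit disk $\calD_N$ (this is the standard fact that the metaplectic/symplectic action preserves the Siegel domain), so that $A(t)\in\calD_N$ and hence $\psi_{A(t),0}$ remains a genuine $L^2$ Gaussian in $\calB$ on any compact time interval. This guarantees that $\psi(t)$ lies in the domain where the above formal manipulations are valid, completing the argument.
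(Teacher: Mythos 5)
Your proof is correct and follows essentially the same route as the paper, which simply makes the ansatz $\psi=\nu(t)\psi_{A(t),0}$, substitutes into the Schr\"odinger equation, and reads off (\ref{Adot}) and (\ref{cdot}); your computation of $a f_A=(zA)f_A$, the anti-normal-ordering trace term $\h\,\tr(\bar R A)$, and the symmetrization of $SA$ correctly supplies the ``some calculations'' the paper omits. The remarks on uniqueness and on the Riccati flow preserving $\calD_N$ are sound additional justification of the ansatz, not a different method.
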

	\begin{proof}
		We make the ansatz that $\psi$ is of the form (\ref{quadIVPSol}) and substitute into the equation.
		After some calculations we obtain the desired equations for $A(t)$ and $\nu(t)$.
	\end{proof}

\subsection{Hamiltonians of degree at most two}

Let us now consider an arbitrary Hamiltonian $H:\bbR^{2N}\to\bbR$, $t\mapsto w(t)$ a trajectory of $H$.  
For each $t$, let us write the Taylor approximation of degree at most two centered at $w(t)$, in complex coordinates:
\begin{equation}\label{}
H(z)=
H(w(0)) + (z-w(t))\frac{\partial H}{\partial z}(w(t)) +(\zbar-\wbar(t))\frac{\partial H}{\partial \zbar}(\wbar(t))
+ \calQ(t)(z-w(t), \zbar-\wbar(t))
\end{equation}
where $\calQ$ is the time-dependent Hamiltonian associated to half the Hessian of $H$ at $w(t)$,
\begin{equation}\label{}
\calQ(t)(\zeta,\zetabar) =\frac 12 \left(
\zeta H_{zz}\zetabar^T + \zeta H_{\zbar\,\zbar}\zetabar^T
+ 2\zeta H_{z\zbar} \zetabar^T
\right)
\end{equation}
where the partial derivatives are evaluated at $w(t)$.


Now let $\widehat{H}_2$ denote the Weyl quantization of $H_2$, and let $U_2(t)$ denote its propagator
with $U_2(0)= I$.
We can express $\wh{H_2}$ in terms of annihilation and creation operators as:

	\begin{equation}
	\widehat{H}_2(t) = H(w(t))  + \left(a^* -w(t) I\right) \cdot \frac{\partial H}{\partial w}(w(t)) + \left(a-\wbar_t I\right) \cdot \frac{\partial H}{\partial \wbar}(w(t)) + \widehat{\calQ}(a^* -w(t), a-\wbar_t I).
	\end{equation}
It turns out one can compute $U_2(t)$, in the following sense:
\begin{proposition}\label{prop:39} (Proposition 39 in \cite{CR})  Let $U_\calQ(t)$ be the propagator 
	of $\widehat{\calQ}$ (a metaplectic operator) satisfying $U_\calQ(0)= I$.  
	Then
	\begin{equation}\label{uTwo}
	U_2(t) = e^{i\h^{-1}\delta_t}\widehat{T}_{w(t)}\circ U_\calQ(t)\circ \widehat{T}_{w(0)}^{-1}
	\end{equation} 
	where
	\begin{equation}\label{eq:delta_t}
	\delta_t = - t H(w(0)) + \frac{i}{2} \int_{0}^{t} \left(w(s) \dot{\bar{w}}(s) - \dot{w}(s) \bar{w}(s)\right).
	\end{equation}
\end{proposition}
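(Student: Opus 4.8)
The plan is to verify that the operator on the right-hand side of (\ref{uTwo}), call it
\[
V(t) := e^{i\hinv\delta_t}\,\widehat{T}_{w(t)}\circ U_\calQ(t)\circ \widehat{T}_{w(0)}^{-1},
\]
solves the same initial value problem as $U_2(t)$, namely $i\h\,\partial_t V = \widehat{H}_2(t)\,V$ with $V(0)=I$, and then to invoke uniqueness of solutions of this linear Schr\"odinger equation. The initial condition is immediate: at $t=0$ one has $\delta_0=0$ and $U_\calQ(0)=I$, so $V(0)=\widehat{T}_{w(0)}\widehat{T}_{w(0)}^{-1}=I$. Note that $\widehat{T}_{w(0)}^{-1}$ is $t$-independent, so only three factors must be differentiated.

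First I would differentiate $V(t)$ by the product rule, obtaining one contribution each from the phase $e^{i\hinv\delta_t}$, from $\widehat{T}_{w(t)}$, and from $U_\calQ(t)$. For the second I use (\ref{timeDerTransl}), and for the third the defining equation $i\h\,\partial_t U_\calQ = \widehat{\calQ}(t)U_\calQ$. To write $\partial_t V$ in the form $(\,\cdot\,)\,V$, I then conjugate the resulting operators past $\widehat{T}_{w(t)}$ to the left, using the identities $\widehat{T}_w\, a\,\widehat{T}_w^{-1}=a-\wbar I$ and $\widehat{T}_w\, a^*\,\widehat{T}_w^{-1}=a^*-w I$. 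The key structural point is that the quadratic term conjugates cleanly, $\widehat{T}_{w(t)}\,\widehat{\calQ}(a^*,a)\,\widehat{T}_{w(t)}^{-1}=\widehat{\calQ}(a^*-w(t)I,\,a-\wbar(t)I)$, which is precisely the quadratic part appearing in $\widehat{H}_2(t)$. Thus the quadratic parts match automatically, and the problem reduces to matching the linear and scalar terms.

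To handle the linear terms I would invoke Hamilton's equations for the trajectory $w(t)$ of $H$, which in these complex coordinates take the form $\dot w = i\,\partial H/\partial\wbar$ and $\dot\wbar = -i\,\partial H/\partial w$ (evaluated at $w(t)$); substituting these turns the $\dot\wbar\cdot a^*$ and $\dot w\cdot a$ contributions coming from (\ref{timeDerTransl}) into exactly the linear terms $(a^*-w(t)I)\cdot\partial_w H$ and $(a-\wbar(t)I)\cdot\partial_{\wbar}H$ of $\widehat{H}_2(t)$. After these cancellations the only remaining freedom is the scalar contributed by $\dot\delta_t$, which must absorb $H(w(t))$ together with a leftover term of the shape $\tfrac{i}{2}(w\cdot\dot\wbar-\dot w\cdot\wbar)$. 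Matching forces $\dot\delta_t = -H(w(t))+\tfrac{i}{2}(w\cdot\dot\wbar-\dot w\cdot\wbar)$; since $H$ is autonomous, energy conservation gives $H(w(t))=H(w(0))$, and integrating from $0$ recovers exactly the expression (\ref{eq:delta_t}). I expect the scalar phase bookkeeping — keeping the signs straight through the conjugation and correctly identifying the constant $H(w(0))$ via conservation of energy — to be the main obstacle, the operator-valued parts being essentially forced by the structure of the computation.
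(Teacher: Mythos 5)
Your proposal is correct and follows essentially the same route as the paper's proof: differentiate the right-hand side with the product rule, use the formula for $\frac{d}{dt}\widehat{T}_{w(t)}$ and the Schr\"odinger equation for $U_\calQ(t)$, conjugate everything through $\widehat{T}_{w(t)}$ via $\widehat{T}_w\,a\,\widehat{T}_w^{-1}=a-\wbar I$, $\widehat{T}_w\,a^*\,\widehat{T}_w^{-1}=a^*-wI$, and match quadratic, linear, and scalar terms, with energy conservation $H(w(t))=H(w(0))$ fixing $\delta_t$. The only difference is cosmetic: you make explicit (via Hamilton's equations in complex coordinates) the cancellation of linear terms that the paper compresses into ``after some calculations.''
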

\begin{proof}
	Denote for now the right-hand side of (\ref{uTwo}) by $U_2$.  The proof is to show that 
	\begin{equation}\label{}
	i\h \dot{U}_2 = \widehat{H}_2 U_2\quad\text{and}\quad U_2(0) = I.
	\end{equation}
	The second condition is clearly satisfied, so let's differentiate the right-hand side of (\ref{uTwo}). 
	We get:
	\[
	\dot{U}_2 = -i\hinv \dot{\delta_t}\, U_2+  (\rm{II}) + \rm(III),
	\]
	where (using (\ref{timeDerTransl}))
	\[
	\text{(II)}=  \hinv e^{i\h^{-1}\delta_t}
	\widehat{T}_{w(t)}\left[\frac 12 \left(w\cdot\dot{\wbar}- \dot{w}\cdot\wbar\right)
	+ \dot{\wbar}\cdot a^*- \dot{w}\cdot a\right] U_\calQ(t)\widehat{T}_{w(0)}^{-1}
	\]
	and
	\[
	\text{(III)} = -i\hinv e^{i\h^{-1}\delta_t}\widehat{T}_{w(t)}\widehat{\calQ}(t) U_\calQ(t) \widehat{T}_{w(0)}^{-1}.
	\]
	Using again the definition of $U_2$ to solve for $U_\calQ\widehat{T}_{w(0)}^{-1}$, we can write
	\[
	\text{(II)} = \hinv\widehat{T}_{w(t)}\left[\frac 12 \left(w\cdot\dot{\wbar}- \dot{w}\cdot\wbar\right)
	+ \dot{\wbar}\cdot a^*- \dot{w}\cdot a\right]  \widehat{T}_{w(t)}^{-1} U_2
	\]
	and
	\[
	\text{(III)} = -i\hinv  \widehat{T}_{w(t)}\widehat{\calQ}(t)  \widehat{T}_{w(t)}^{-1} U_2.
	\]
	We analyze (II) further, the key step being
	\[
	\widehat{T}_{w(t)}\left[\dot{\wbar}\cdot a^*- \dot{w}\cdot a\right]  \widehat{T}_{w(t)}^{-1} = 
	\widehat{T}_{w(t)}(\dot{\wbar}\cdot a^*) \widehat{T}_{w(t)}^{-1}  - \widehat{T}_{w(t)}( \dot{w}\cdot a) \widehat{T}_{w(t)}^{-1}
	= \dot{\wbar} \cdot (a^* - wI) - \dot{w} \cdot (a-\wbar I) .
	\]
	After some calculations one finds that 
	\[
	i\h\dot{U}_2\,U_2^{-1} =  -\dot{\delta}_t +\frac{i}{2} \left(w\cdot\dot{\wbar}- \dot{w}\cdot\wbar\right) - H(w(t)) + \widehat{H}_2(t),
	\]
	so $\dot{\delta_t} = -H(w(0)) +  \frac{i}{2} \left(w\cdot\dot{\wbar}- \dot{w}\cdot\wbar\right)$ using $H(w(t)) = H(w(0))$. Integrating gives \eqref{eq:delta_t}.
\end{proof}

\begin{corollary}\label{cor:39}
\[
U_2(t)\left(\psi_{A,w(0)}\right) = \nu(t) e^{ i k \delta_t}\psi_{A(t),w(t)}
\]
where $\nu(t)$ and $A(t)$ satisfy (\ref{Adot}) and (\ref{cdot}).  
\end{corollary}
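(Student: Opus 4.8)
The plan is to read the corollary directly off the factorization of the propagator established in Proposition~\ref{prop:39}, combined with the quadratic propagation formula of Proposition~\ref{prop:QuadPropa}; no new computation is required beyond tracking how each of the three factors acts on the initial Gaussian state. Recall from Proposition~\ref{prop:39} that
\[
U_2(t) = e^{i\h^{-1}\delta_t}\,\widehat{T}_{w(t)}\circ U_\calQ(t)\circ \widehat{T}_{w(0)}^{-1},
\]
and that $\h^{-1}=k$, so the scalar prefactor is $e^{ik\delta_t}$. I would apply these three operators to $\psi_{A,w(0)}$ from the inside out.

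First I would apply $\widehat{T}_{w(0)}^{-1}$. The key elementary fact, already used above (it is essentially the definition of the center of a Gaussian state), is that quantum translation by $w$ carries $\psi_{A,0}$ to $\psi_{A,w}$, that is, $\widehat{T}_w(\psi_{A,0}) = \psi_{A,w}$; this is checked by applying the explicit formula $\widehat{T}_w(f)(z) = e^{-k|w|^2/2}e^{kz\wbar}f(z-w)$ to the holomorphic factor of $\psi_{A,0}$ and comparing with the definition of $\psi_{A,w}$. Consequently $\widehat{T}_{w(0)}^{-1}(\psi_{A,w(0)}) = \psi_{A,0}$, returning the state to a Gaussian centered at the origin.

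Next, since the origin is a fixed point of the Hamilton field of $\calQ(t)$, Proposition~\ref{prop:QuadPropa} applies verbatim and gives $U_\calQ(t)(\psi_{A,0}) = \nu(t)\,\psi_{A(t),0}$, where $A(t)$ and $\nu(t)$ solve (\ref{Adot}) and (\ref{cdot}) with $A(0)=A$ and $\nu(0)=1$. Finally I would apply $\widehat{T}_{w(t)}$ and invoke the translation identity once more, now with center $w(t)$, to obtain $\widehat{T}_{w(t)}(\psi_{A(t),0}) = \psi_{A(t),w(t)}$. Reassembling the three factors and restoring the scalar prefactor yields
\[
U_2(t)\left(\psi_{A,w(0)}\right) = \nu(t)\,e^{ik\delta_t}\,\psi_{A(t),w(t)},
\]
which is the claim. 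The only point deserving care is the translation identity $\widehat{T}_w(\psi_{A,0})=\psi_{A,w}$, and in particular remembering that $a$, $a^*$, and hence $\widehat{T}_w$, act on the holomorphic factor $f$ rather than on the full wavefunction $\psi = f\,e^{-k|z|^2/2}$; with that convention the identity is immediate, so there is no genuine obstacle here—the corollary is a bookkeeping consequence of the two preceding propositions.
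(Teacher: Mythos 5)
Your proposal is correct and is essentially the paper's own (implicit) argument: the corollary is stated without a separate proof precisely because it follows by composing the factorization of Proposition \ref{prop:39} with the quadratic propagation result of Proposition \ref{prop:QuadPropa}, exactly as you do, using $\widehat{T}_{w}(\psi_{A,0})=\psi_{A,w}$ and unitarity of the translation operators. Your closing remark about $\widehat{T}_w$ acting on the holomorphic factor $f$ rather than on $\psi$ is the right point of care and matches the paper's convention.
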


\subsection{Propagation}
First we need a preliminary estimate which we state without proof:
\begin{proposition}
	Let $\widehat{H}, \widehat{H}_2$ be semi-classical pseudodifferential operators acting on the Bargmann space
	of $\bbC^N$, with principal symbols $H$ and $H_2$.  Let $w\in\bbC^N$ and assume that $H-H_2$ vanishes at $w$,
	together with its first and second derivatives.  Then, for any $A\in\calD_N$
	\begin{equation}\label{prelimEst}
	\norm{(\widehat{H}-\widehat{H}_2)\psi_{A,w}} = \norm{\psi_{A,w}}\cdot O(\h^{3/2}).
	\end{equation}
\end{proposition}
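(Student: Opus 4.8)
The plan is to reduce first to the case $w=0$ by covariance, and then to extract the gain of $\h^{3/2}$ from a semiclassical rescaling centered at the origin. Write $\widehat D = \widehat H - \widehat H_2$, a Weyl $\h$-pseudodifferential operator whose symbol $D = H - H_2$ vanishes to second order at $w$. Since the quantum translation $\widehat T_w$ is unitary and, by the Heisenberg covariance of Weyl quantization, $\widehat T_w^{-1}\widehat D\,\widehat T_w$ is the quantization of the translated symbol $D(\,\cdot\, + w)$, while a direct comparison with (\ref{qTransl}) gives $\widehat T_w^{-1}\psi_{A,w} = \psi_{A,0}$ exactly, one obtains
\[
\norm{\widehat D\,\psi_{A,w}} = \norm{\widehat{D(\cdot+w)}\,\psi_{A,0}}, \qquad \norm{\psi_{A,w}} = \norm{\psi_{A,0}}.
\]
Thus it suffices to treat a symbol, still denoted $D$, that vanishes to second order at the origin, i.e. $D(z) = O(|z|^3)$ there, applied to the coherent state $\psi_{A,0}$.

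Next I would perform the rescaling $z = \sqrt{\h}\,\zeta$. Using that $Q_A$ is quadratic and $k = 1/\h$, one checks that $\psi_{A,0}(\sqrt\h\,\zeta) = e^{Q_A(\zeta)/2 - |\zeta|^2/2}$, which is precisely the $\h$-independent Gaussian symbol $\sigma_{\psi_{A,0}}$ of Definition \ref{def:GaussianSymbol}; in particular it is a fixed Schwartz-class profile in $\calB(\bbC^N)$. Under the same rescaling the creation and annihilation operators scale as $a^* = z \mapsto \sqrt\h\,\zeta$ and $a = \h\partial_z \mapsto \sqrt\h\,\partial_\zeta$, so the $\h$-Weyl calculus is conjugated to the ordinary ($\h=1$) calculus with rescaled symbol $\zeta\mapsto D(\sqrt\h\,\zeta)$, and the overall Jacobian factor $\h^N$ in the $L^2$ norm is the same on both sides and cancels. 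Because $D$ vanishes to order three at the origin, Taylor's theorem gives, uniformly on compact sets in $\zeta$,
\[
D(\sqrt\h\,\zeta) = \h^{3/2}\,T_3(\zeta) + O(\h^2),
\]
with $T_3$ the fixed cubic third-order term of $D$ at $0$. This is the mechanism producing the $\h^{3/2}$: to leading order $\widehat D\psi_{A,0}$ equals $\h^{3/2}$ times the $\h=1$ quantization of the fixed cubic $T_3$ applied to the fixed profile $\sigma_{\psi_{A,0}}$, a quantity of size $O(1)$.

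To convert this heuristic into the stated bound I would split $D = \chi D + (1-\chi)D$ with $\chi$ a cutoff equal to $1$ near the origin. The far piece $(1-\chi)D$ vanishes identically near $w$, so by the exponential localization of the coherent state (the Husimi bound (\ref{husimi}), $|\psi_{A,0}|^2 \le e^{-k(1-\kappa)|z|^2}$) together with the at most polynomial growth of $D$ and its derivatives, $\widehat{(1-\chi)D}\,\psi_{A,0} = O(\h^\infty)$. For the near piece one Taylor-expands $\chi D$: the monomials of degree $d$ with $3\le d\le L$ quantize to polynomials of degree $d$ in $a^*$ and $a$, each factor contributing $\sqrt\h$ in norm on the rescaled profile (reordering via $[a,a^*]=\h$ preserves this weighting), so each contributes $O(\h^{d/2}) = O(\h^{3/2})$; the compactly supported remainder vanishes to order $L+1$ and contributes $O(\h^{(L+1)/2})$, which is negligible once $L\ge 2$. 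Undoing the rescaling then yields $\norm{\widehat D\psi_{A,0}} = O(\h^{3/2})\,\norm{\psi_{A,0}}$.

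The main obstacle is the uniformity of the remainder and localization estimates, not the leading-order scaling, which is transparent. Concretely, the canonical lift $H$ is only homogeneous of degree two away from the origin, so its third and higher derivatives are singular there (homogeneous of negative degree) and merely of polynomial growth at infinity; one must verify that $D = H - H_2$, after the near-origin cutoff built into $\widehat H$, lies in a symbol class for which the $\h$-uniform $L^2$ boundedness (Calder\'on--Vaillancourt) applies and for which the Taylor factorization of $D(\sqrt\h\,\zeta)$ holds with $\h$-uniform control of the remainder against the Gaussian weight. Granting these standard symbolic estimates, the $\h^{3/2}$ gain follows from the rescaling as above.
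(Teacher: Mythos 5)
The paper itself states this proposition \emph{without proof} (it is introduced in the Appendix with the words ``which we state without proof,'' the surrounding material being deferred to the scheme of \cite{CR}, Ch.\ 4), so there is no internal argument to compare against and your proposal must stand on its own. In structure it does, and it reconstructs the standard Combescure--Robert mechanism correctly: the reduction to $w=0$ is legitimate because Weyl quantization is exactly covariant under the Heisenberg translations, $\widehat{T}_w^{-1}\,\mathrm{Op}^W_\h(D)\,\widehat{T}_w=\mathrm{Op}^W_\h\bigl(D(\cdot+w)\bigr)$, and $\widehat{T}_w\psi_{A,0}=\psi_{A,w}$ exactly by (\ref{qTransl}); the dilation $z=\sqrt{\h}\,\zeta$ does turn $\psi_{A,0}$ into the fixed profile $e^{Q_A(\zeta)/2-|\zeta|^2/2}$ and each ladder operator into $\sqrt{\h}$ times an $\h$-independent one; and the count $O(\h^{d/2})$ for a quantized monomial of degree $d\ge 3$ acting on $\psi_{A,0}$ is correct (normal-ordering via $[a_j,a_k^*]=\delta_{jk}\h$ trades two factors of $\sqrt{\h}$ for one factor of $\h$, so the power is preserved).

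The gap is in your treatment of the Taylor remainder, which is where the analytic content of the proposition actually sits. You assert that the compactly supported remainder, vanishing to order $L+1$, ``contributes $O(\h^{(L+1)/2})$, which is negligible once $L\ge 2$,'' and you lean on Calder\'on--Vaillancourt for this. Calder\'on--Vaillancourt gives an $\h$-uniform operator bound, i.e.\ $O(1)$, not a gain: after rescaling, the remainder symbol is $R(\sqrt{\h}\,\zeta)$, whose derivatives are bounded only by $C\,\h^{(L+1)/2}\langle\zeta\rangle^{L+1}$ on a support of radius $\sim\h^{-1/2}$, so its CV seminorms in the $\h=1$ calculus are $O(1)$. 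The gain must come from interleaving the symbol bounds with the Gaussian concentration: insert a second cutoff at $|\zeta|\le\h^{-\epsilon}$ (i.e.\ $|z|\le\h^{1/2-\epsilon}$); inside, the CV seminorms of the rescaled remainder are $O\bigl(\h^{(L+1)(1/2-\epsilon)}\bigr)$, which beats $\h^{3/2}$ provided $L$ is taken large enough (depending on $\epsilon$ --- not $L\ge 2$); outside, one gets $O(\h^\infty)$ from the decay of $\psi_{A,0}$. The same caveat applies to the far piece $(1-\chi)D$: since $\mathrm{Op}^W_\h\bigl((1-\chi)D\bigr)$ is nonlocal, the pointwise Husimi bound (\ref{husimi}) alone does not give $O(\h^\infty)$; one needs the standard microlocalization estimate (kernel estimates or integration by parts in the Weyl integral), for which the symbol-class hypotheses you flag in your last paragraph are genuinely needed because $D$ grows quadratically at infinity. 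Both repairs use only tools you already have on the table, so this is a repairable gap in an otherwise sound and standard argument rather than a wrong approach.
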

To finish the proof of (\ref{propaUpstairs}) we follow the argument of Chapter 4 in \cite{CR}.
By Duhamel's principle
\begin{equation}\label{Duhamel}
U(t) - U_2(t) = \frac{1}{i\h}\int_0^t U(t,s)\left(\widehat{H}-\widehat{H}_2(s) \right)U_2(t,s)\, ds
\end{equation}
where $U(t,s)$ is the propagator for $\widehat{H}$ such that $U(t,t)= I$ (and similarly for $U_2(t,s)$),
it follows that
\[
\norm{U(t)(\psi_{A,w(0)}) - U_2(t)(\psi_{A,w(0)})}\leq \hinv \int_0^t 
\norm{\left(\widehat{H}-\widehat{H}_2(s) \right)\phi_{t,s}}\,ds
\]
where $\phi_{t,s} = U_2(t,s)(\psi_{A,w(0)})$.  This, combined with (\ref{prelimEst}), yields 
\begin{equation}\label{}
\norm{U(t)(\psi_{A,w(0)}) - U_2(t)(\psi_{A,w(0)})} = \norm{\psi_{A,w(0)}}\cdot O(\h^{1/2}).
\end{equation}
Using Corollary \ref{cor:39} we obtain Theorem \ref{thm:EuclideanProp}.

\nocite{*} 
\bibliographystyle{plain} 
\bibliography{coherent-states-ref.bib}

\begin{thebibliography}{1}

\bibitem{BSZ}
Pavel Bleher, Bernard Shiffman, and Steve Zelditch.
\newblock Universality and scaling of correlations between zeros on complex
  manifolds.
\newblock {\em Inventiones Mathematicae}, 142(2):351--395, January 2000.

\bibitem{BGPU}
A.~Bloch, F.~Golse, T.~Paul, and A.~Uribe.
\newblock Dispersionless toda and toeplitz operators.
\newblock {\em Duke Mathematical Journal}, 17(1):157--196, March 2003.

\bibitem{CR}
Monique Combescure and Didier Robert.
\newblock {\em Coherent States and Applications in Mathematical Physics}.
\newblock Springer, 2012.

\bibitem{D}
Ingrid Daubechies.
\newblock Coherent states and projective representation of the linear canonical
  transformations.
\newblock {\em Journal of Mathematical Physics}, 21(6):1377--1389, June 1980.

\bibitem{GH}
Phillip Griffiths and Joseph Harris.
\newblock {\em Principles of Algebraic Geometry}.
\newblock John Wiley and Sons, Inc., 1978.

\bibitem{GS}
Victor Guillemin and Shlomo Sternberg.
\newblock Geometric quantization and multiplicities of group representations.
\newblock {\em Inventiones Mathematicae}, 67:515--538, October 1982.

\bibitem{GUW}
Victor Guillemin, Alejandro Uribe, and Zuoqin Wang.
\newblock Semiclassical states associated with isotropic submanifolds of phase
  space.
\newblock {\em Letters in Mathematical Physics}, 106:1695--1728, May 2016.

\bibitem{H}
Lars H\"ormander.
\newblock {\em The Analysis of Linear Partial Differential Operators I:
  Distribution Theory and Fourier Analysis}.
\newblock Springer, 1990.

\bibitem{R}
J.~H. Rawnsley.
\newblock Coherent states and {K}\"{a}hler manifolds.
\newblock {\em Quart. J. Math. Oxford Ser. (2)}, 28(112):403--415, 1977.

\end{thebibliography}


%
%
%
%
%

\end{document}